\documentclass[11pt,3p,times]{elsarticle}

\usepackage{amsmath,amsfonts,amssymb}
\usepackage{nicematrix}
\usepackage{scalerel}
\usepackage{amsthm}
\usepackage{comment}
\usepackage{mathtools}
\usepackage{xcolor}
\usepackage{tikz}
\usepackage{soul}
\usepackage{cancel}
\usepackage{float}
\usepackage{bigints}

\usepackage{xpatch}
\usepackage{accents}
\usepackage{multicol}
\usepackage{physics}

\newcommand{\proofnameformat}{\itshape}
\xpatchcmd{\proof}{\itshape}{\proofnameformat}{}{}
\renewcommand{\proofnameformat}{\itshape\color{astral}}

\renewcommand{\ddot}[1]{%
  \accentset{\mbox{\large\bfseries .\hspace{0.5ex}.}}{#1}}
  
\usepackage{amsthm,thmtools,xcolor}
\usetikzlibrary{arrows.meta}

\declaretheoremstyle[
  headfont=\color{astral}\normalfont\bfseries,
]{colored}

\declaretheorem[
  style=colored,
  name=Theorem,
]{thm}

\declaretheorem[
  style=colored,
  name=Proposition,
]{proposition}

\declaretheorem[
  style=colored,
  name=Lemma,
]{lem}

\declaretheorem[
  style=colored,
  name=Remark,
]{remark}

\declaretheorem[
  style=colored,
  name=Definition,
]{definition}

\declaretheorem[
  style=colored,
  name=Cauchy-Peano Theorem,
  numbered=no
]{CP}

\declaretheorem[
  style=colored,
  name=Gronwall's inequality,
  numbered=no
]{GI}

\newcommand{\R}{{\mathbb{R}}}

\newcommand{\C}{{\mathbb{C}}}

\newcommand{\vertiii}[1]{{\left\vert\kern0.1ex \left\vert\kern0.1ex \left\vert #1 \right\vert\kern0.1ex\right\vert\kern0.1ex \right\vert}}

\usepackage{sectsty}

\definecolor{astral}{RGB}{46,116,181}
\subsectionfont{\color{astral}}
\sectionfont{\color{astral}}

\usepackage{amssymb}
\usepackage{amsthm}

\usepackage{fancyhdr}
\usepackage[figuresright]{rotating}

\usepackage{hyperref}
\hypersetup{
    colorlinks = false,
    linkbordercolor = {green}
}

\begin{document}

\begin{frontmatter}




\title{\color{astral}{\huge \textbf{Dynamical system for PageRank with a time-dependent memory}}}


\author{\color{black}{Julio Flores$^{1,2}$, Eva Primo$^{1,2,3}$, Daniel Rodríguez$^{1,2,3}$ and Miguel Romance$^{1, 2, 3}$}}

\address{$^{1)}$Departamento de Matemática Aplicada, Ciencia e Ingeniería de los Materiales y Tecnología Electrónica,
Universidad Rey Juan Carlos, 28933 Móstoles (Madrid), Spain\\
$^{2)}$Laboratory of Mathematical Computation on Complex Networks and their Applications, Universidad Rey Juan Carlos,
28933 Móstoles (Madrid), Spain\\
$^{3)}$Data, Complex networks and Cybersecurity Research Institute, Universidad Rey Juan Carlos, 28028 (Madrid), Spain}

\begin{abstract}

Inspired by the dynamical PageRank framework of Gleich and Rossi \cite{GleichRossi}, we introduce a continuous-time PageRank model in which the personalization vector evolves as a weighted average of its past values, with the weights determined by a memory function. The resulting dynamics are formulated as an initial value problem for an integro-differential equation, where the initial condition is a probability vector. We investigate how the choice of memory function influences the long-time behavior of the PageRank vector. In particular, for strongly connected networks $\mathcal{G}$, we prove that broad classes of memory functions lead to convergence toward a stationary state that is independent of the initial condition. In contrast, when the memory function is exponential-oscillatory, $\omega(t)=e^{at}\cos(bt)$ for $t\geq0$ with $a,b>0$, we show that the PageRank dynamics exhibit asymptotically periodic behavior, revealing that oscillatory memory can fundamentally alter the qualitative evolution of the ranking process. To establish these results, we first prove the existence and uniqueness of solutions using standard results from the theory of integro-differential equations and show that the solution remains a probability vector for all times, thereby preserving the essential properties of the PageRank model.\\

\hspace*{-5mm}{\bf\color{astral}{Keywords:}} PageRank, irreducible matrix, Perron-Frobenius theory, strongly connected network, dynamical system, perturbation theory, asymptotic behavior\\
 
\hspace*{-5mm}{\bf\color{astral}{AMS subject classifications:}} 15A60, 45J05, 45M05, 45M10, 45M15, 47J26
 \end{abstract}




\end{frontmatter}


\section{Introduction}
\label{Section:Introduction}

PageRank is one of the most influential centrality measures in Network Science, providing a mathematically rigorous framework for quantifying the relative importance of nodes in directed networks. Originally introduced by Page and Brin as the ranking mechanism underlying Google's search engine \cite{BrinPage,PageBrin}, the algorithm rapidly became a cornerstone of graph-based data analysis. Beyond its original application to the World Wide Web, PageRank has evolved into a fundamental tool for studying information flow, influence, and structural relevance across a wide range of complex systems, owing to its solid probabilistic interpretation and computational efficiency.

The versatility of PageRank has fostered its adoption in numerous scientific disciplines. In social networks, it is widely employed to identify influential users and characterize patterns of information diffusion \cite{Riquelme}. In systems biology, PageRank-based methods have been successfully applied to protein--protein interaction networks and metabolic pathways \cite{Ivan}. Financial networks exploit PageRank to assess systemic risk by identifying institutions whose failure may trigger cascading contagion phenomena \cite{Yun}. Additional applications include the analysis of structural connectivity in neuroscience \cite{Williamson}, the evaluation of scientific impact beyond conventional citation metrics \cite{Senanayake}, and the study of transportation and mobility networks \cite{Criado}. These diverse applications illustrate the remarkable generality of the PageRank paradigm and its relevance as a universal measure of network centrality.

From a mathematical perspective, PageRank can be interpreted as the stationary probability distribution of a Markov chain describing the movement of a random surfer on a directed graph $\mathcal{G}=(V,E)$. At each step, the walker follows one of the outgoing edges of the current node with a prescribed probability or, alternatively, performs a random jump to another node according to a given probability distribution, commonly referred to as the personalization vector $\mathbf{v}$. The balance between these two mechanisms is controlled by the damping factor, which guarantees ergodicity even in the presence of dangling nodes or disconnected components (see, for example, \cite{Gleich,LM2006,PageBrin}). Consequently, the computation of PageRank fundamentally depends on two parameters: the damping factor governing the navigation dynamics and the personalization vector defining the teleportation process.

The mathematical properties of PageRank have been extensively investigated during the last two decades. Numerous studies have established rigorous results concerning existence, uniqueness, convergence of iterative algorithms, perturbation analysis, and the sensitivity of the ranking with respect to both the damping factor and the personalization vector \cite{Bianchini,Boldi,Gleich,LM2004,LM2006}. These theoretical developments have significantly deepened our understanding of PageRank and have contributed to its widespread adoption in both theoretical and applied settings.

More recently, attention has shifted from the classical static formulation toward models incorporating feedback mechanisms and adaptive dynamics. In particular, the authors of \cite{AFPRR} investigated the dependence of PageRank on the personalization vector through a feedback-PageRank framework, providing a complete characterization of the existence and uniqueness of PageRank fixed points in terms of the strongly connected components of the underlying graph. Furthermore, they showed that these fixed points can be computed exactly through the classical Power Method by exploiting the left-hand Perron eigenvectors associated with each strongly connected component. The rationale underlying these feedback mechanisms originates from the original interpretation of PageRank \cite{PageBrin}, in which the personalization vector $\mathbf{v}$ models the probability distribution governing users' navigation when they choose not to follow an outgoing hyperlink. From this perspective, it is natural to extend the classical formulation by allowing the personalization vector to evolve over time, as proposed in \cite{GleichRossi}. Moreover, a further and more realistic generalization consists in assuming that its evolution is influenced by the past dynamics of the PageRank vector itself, thereby introducing a feedback mechanism with memory \cite{AFPRR}.

The present work builds upon this line of research by introducing a substantially more general dynamical framework in which the evolution of the PageRank vector depends not only on its current state but also on its past history through a memory mechanism. Unlike the classical PageRank formulation, which computes a stationary distribution of a Markovian random walk, and existing feedback-based models in which the personalization vector depends only on the current network state \cite{LM2006}, the proposed approach introduces a time-dependent memory mechanism that allows the present dynamics to be influenced by their entire past evolution. This formulation naturally captures non-Markovian effects that arise in many real-world networked systems, where user behavior and information propagation are inherently history-dependent.

From a mathematical perspective, the proposed model is formulated as a linear integro-differential dynamical system whose memory term is described by a general time-dependent kernel. This setting considerably extends previous feedback-PageRank formulations while preserving the probabilistic interpretation of the ranking process. The analysis combines techniques from the theory of integro-differential equations, dynamical systems, perturbation theory and Perron--Frobenius theory to establish the fundamental properties of the proposed model. In particular, we first prove that the associated initial value problem is well posed by establishing the existence and uniqueness of global solutions and showing that the PageRank vector remains in the probability simplex for all times, thereby preserving the essential stochastic structure of the classical PageRank algorithm.

In addition to this, the second part of the paper investigates the asymptotic behavior of the resulting dynamical system on strongly connected directed networks. We show that the long-time dynamics depend critically on the qualitative properties of the memory kernel. For broad classes of kernels, including exponentially decaying memory, the PageRank dynamics converge asymptotically to a unique stationary ranking, extending the classical PageRank equilibrium to a memory-dependent setting. In contrast, when the memory kernel exhibits exponentially damped oscillatory behavior, the dynamics no longer converge to a fixed point but instead evolve toward asymptotically periodic solutions, revealing that memory may fundamentally alter the qualitative behavior of network centrality. These results provide, to the best of our knowledge, the first rigorous analytical characterization of memory-induced dynamical transitions in a feedback PageRank model.

Beyond their theoretical interest, these findings contribute to the broader study of adaptive and temporal complex networks by providing a mathematically rigorous framework for incorporating historical dependence into centrality measures. The proposed model extends the applicability of PageRank to systems in which past interactions continue to influence future dynamics, including information diffusion, social influence, recommendation systems, and adaptive communication networks. More generally, the results establish a bridge between classical network centrality, non-Markovian stochastic processes, and memory-driven dynamical systems, thereby opening new directions for the analysis of evolving complex networks.

The remainder of the paper is organized as follows. In Section \ref{Section:Notation} we fix the notation and some preliminary definitions related to the PageRank algorithm. In Section \ref{Section:Main results}, we show the existence and uniqueness of solutions to the initial value problem (IVP for short) for the dynamical system governing the PageRank vector with time-dependent memory (see Theorem \ref{thm:existence-uniqueness}), defined by the integro-differential equation
\begin{equation}\label{eq:IVP-intro}
\dot{\mathbf{x}}^\top(t)=\mathbf{x}^\top(t)(\lambda P_A-I_N)+(1-\lambda)\left[\frac{1}{\displaystyle\int_0^t \omega(u)\,du}\int_0^t \omega(t-u)\,\mathbf{x}^\top (u)\,du\right],\quad \mathbf{x}(0)=\mathbf{x}_0,
\end{equation}
where $\mathbf{x}_0 \in \mathbb{R}^{N \times 1}$ is a probability vector (i.e., $\mathbf{x}_0 \geq 0$ and $\mathbf{x}_0^\top \mathbf{e} = 1$ with  $\mathbf{e} = (1,\dots,1)^\top$), $P_A$ is the row-normalized adjacency matrix of the directed graph $\mathcal{G}$, $\lambda \in (0,1)$ is the damping factor, and $\omega : [0,\infty) \to [0,\infty)$ is a locally integrable weight function. Moreover, we prove that the solution $\mathbf{x}(t)\in\mathbb{R}^{N\times 1}$ of the dynamical system in equation (\ref{eq:IVP-intro}) remains a probability vector for all $t\geq 0$, that is, $\mathbf{x}(t)\geq 0$ (see Theorem \ref{thm:positivity}) and $\mathbf{x}^\top(t)\,\mathbf{e}=1$ (see Theorem \ref{thm:identity-to-one}) for all $t\geq0$. For the sake of clarity, these two facts are stated and proved separately.

Later, in Section \ref{Section:Asymptotic}, we investigate the asymptotic behavior of the solution $\mathbf{x}(t)$ as $t\to\infty$ on a strongly connected digraph $\mathcal{G}$ for several classes of memory weight functions, damping factors $\lambda$, and initial conditions $\mathbf{x}_0\in\mathbb{R}^{N\times1}$. We begin in Subsection \ref{subsect:asymptotic-delta} by considering the Dirac delta memory function $\omega(t)=\delta_0(t)$ for $t\geq 0$. In this setting, Theorem \ref{thm:delta-case} establishes that the solution $\mathbf{x}(t)$ of (\ref{eq:IVP-intro}) converges to the left-hand Perron vector of $P_A$, whose definition is given in Section \ref{Section:Main results}. Next, in Subsection \ref{subsect:asymptotic-exponential}, we consider the exponentially decaying memory function $\omega(t)=e^{-at}$, for $t\geq0$, where $a\in\R$ is a positive parameter. As in the previous case, Theorem \ref{thm:exponential-case} shows that the solution $\mathbf{x}(t)$ of the IVP in equation (\ref{eq:IVP-intro}) converges to the left-hand Perron vector of row-normalized matrix $P_A$. It is worth mentioning that this convergence is independent of the damping factor $\lambda\in(0,1)$ and of the probability vector $\mathbf{x}_0\in\mathbb{R}^{N\times1}$ used as the initial condition. Finally, in Subsection \ref{subsect:asymptotic-oscillatory}, we consider a exponential oscillatory memory function  $\omega(t)=e^{at}\cos^2(bt)$, for $t\geq0$, where $a,b\in\R$ are positive parameters. In contrast to the cases studied in Subsections \ref{subsect:asymptotic-delta} and \ref{subsect:asymptotic-exponential}, Theorem \ref{thm:oscillatory-case} shows that the solution $\mathbf{x}(t)$ of the IVP in equation \ref{eq:IVP-intro} is asymptotically periodic, with no apparent relation to the left-hand Perron vector of $P_A$. More precisely, $\mathbf{x}(t)$ converges to a $\pi/b$-periodic function as $t\to\infty$ and this limiting periodic function depends strongly on the damping factor $\lambda\in(0,1)$ and the initial condition $\mathbf{x}_0\in\R^{N\times1}$ considered.

\section{Notation and preliminary definitions}
\label{Section:Notation}


We recall some standard notation that will be used throughout the paper. Vectors of $\mathbb{R}^{N\times 1}$ will be denoted by column matrices and we will use the superscript $T$ to indicate matrix transposition. The vector of $\mathbb{R}^{N\times 1}$ with all its components equal to 1 will be denoted by $\mathbf{e}$, that is, $\mathbf{e} = (1,\dots,1)^\top$ . A matrix $A=(a_{ij})_{i,j}$ will be called \em non-negative \em (respectively \em positive \em) if all its entries satisfy $a_{ij}\ge 0$ (respectively $a_{ij}>0$). The same applies to vectors when being column matrices.\\

Let $\mathcal{G}=(V,E)$ be a directed graph  with node set $V=\{1,2,\dots,N\}$ for some $N\in\mathbb{N}$. The edge set $E$ consists of ordered pairs $(i,j)$, representing a directed link from node $i$ to node $j$. Throughout this paper, we use the terms directed graph and digraph interchangeably. The \textit{adjacency matrix} of $\mathcal{G}$ is defined as the $N\times N$ matrix $A=(a_{ij})$, given by
\begin{equation*}
a_{ij}=
\begin{cases}
    1, & \text{if } (i,j) \in E, \\
    0, & \text{otherwise.}
\end{cases}
\end{equation*}

A link $(i,j)$ is said to be an \textit{outlink} for node $i$ and an \textit{inlink} for node $j$. We denote $k_{out}(i)$ the \textit{outdegree} of node $i$, i.e, the number of outlinks of a node $i$. Notice that $k_{out} (i) = \sum_{k}a_{ik}$. A node will be referred to as a \em source \em if it has no inlinks. Also, the graph $\mathcal{G}$ may have \textit{dangling nodes}, which are nodes $i\in V$ with zero outdegree. Dangling nodes are characterized by a vector $\mathbf{d}=(d_1,\cdots,d_N)^\top\in\mathbb{R}^{N\times 1}$ with components $d_i$ defined by
$$d_{i}=
\begin{cases}
    1,\text{ if $i$ is a dangling node of $\mathcal{G}$,}\\
    0,\text{ otherwise}.
\end{cases}$$
Without loss of generality,  the directed graph $\mathcal{G}$ can be weighted with $a_{ij}\geq0$. We remark that the results in the paper remain valid if $\mathcal{G}$ is weighted.\\

At this point, let $P_A = (p_{ij})_{i,j}\in\mathbb{R}^{N\times N}$ be the row-stochastic matrix associated to directed graph $\mathcal{G}$ defined in the following way:
\begin{enumerate}
    \item If $i$ is a dangling node, then $p_{ij}=0$ for all $j=1,\dots,N$.
    \item Otherwise, $p_{ij}=a_{ij}/k_{out}(i)=a_{ij}/\sum_{k}a_{ik}$.
\end{enumerate}

Roughly speaking, each coefficient $p_{ij}$ is the probability of moving from the node $i$ to the node $j$. The matrix $P_A$ will be referred to as the \em row-normalization \em of $A$.  Note that if we take
\begin{equation}\label{eq:diagonal}
D=\left(
\begin{array}{cccc}
k_{out}(1)& 0 & \cdots & 0\\
0 & k_{out}(2) & \cdots & 0\\
\vdots & \cdots & \ddots & \vdots\\
0 & \cdots & \cdots & k_{out}(N)
\end{array}
\right),
\end{equation}
then the matrix $P_A$ is given by $P_A=D^{-1}A$.\\

One of the most remarkable features of the PageRank algorithm is that it contemplates the possibility to travel from one node to any other node in the graph. This teleportation probability is given by the \textit{personalization vector} $\textbf{v}\in\R^{N\times1}$ which, in fact, is a probability distribution vector. In addition, in the presence of dangling nodes, a probability vector $\mathbf{u}\in\mathbb{R}^{N\times1}$ should be considered to provide an extra probability of jumping from these nodes. Finally, let $\lambda\in(0,1)$ be a decision parameter, or \em damping \em factor, which determines the likelihood to move through the graph by using a path in  $\mathcal{G}$ or  by randomly jumping instead, according to  the personalization vector. For more information, we refer the reader to \cite{AFPR}.\\

Formally, let $G=G(\lambda, \mathbf{u},\mathbf{v})$ be the \em Google matrix\em, with $\lambda\in(0,1)$ defined as
\begin{equation}\label{eq:google-matrix1}
G=\lambda (P_A+\mathbf{du}^\top)+(1-\lambda)\mathbf{ev}^\top\in\mathbb{R}^{N\times N}.
\end{equation} 

A straightforward observation is that the Google matrix $G$ is row-stochastic, i.e., $G\mathbf{e} = \mathbf{e}$ where $\mathbf{e}=(1,\dots,1)^\top$. Recall that the personalization vector $\mathbf{v}\in\mathbb{R}^{N\times 1}$ is such that all its entries are positive (i.e. $\mathbf{v}>0$) and satisfies $\mathbf{v}^\top\mathbf{e} = 1$. In a similar way, the vector $\mathbf{u}\in\mathbb{R}^{N\times1}$ is such that $\mathbf{u}>0$ and $\mathbf{u}^\top\mathbf{e} = 1$.\\

Notice that the Google matrix $G$ is positive and thus, by the Perron-Frobenius Theorem (see \cite[Section 8.3]{Meyer}) there is a unique vector  $\pi= \pi(\lambda;\mathbf{u};\mathbf{v})>0$ satisfying $\pi^\top\textbf{e}=1$  and $G^\top\pi=\pi$, or equivalently $\pi^\top G=\pi^\top$. This vector is the \textit{PageRank vector} $\pi$ of $G$. \\

The existence of dangling nodes will not affect the results presented for the matrix $G$ (see the end of paper \cite{AFPR} for more details). In this sense, for the sake of simplicity, the graph $\mathcal{G}$ will be assumed to have no dangling nodes.\\

From the definition of the Google matrix in equation (\ref{eq:google-matrix1}) and the properties of the PageRank vector $\pi\in\R^{N\times1}$, a straightforward calculation shows that $\pi$ satisfies the equation
$$\pi^\top=\lambda \pi ^\top P_A+(1-\lambda)\mathbf{v}^\top.$$

Notice that $\pi^\top(I_N-\lambda P_A)=(1-\lambda)\mathbf{v}^\top$, where $I_N\in\R^{N\times N}$ is the identity square matrix of order $N$. Therefore, the PageRank vector $\pi$ can be explicitly calculated as
\begin{equation}\label{eq:pi-vector}
\pi^\top=(1-\lambda)\mathbf{v}^\top(I_N-\lambda P_A)^{-1}=\mathbf{v}^\top X(\lambda),
\end{equation}
where $X(\lambda)=(1-\lambda)R(\lambda)$ and $R(\lambda)=(I_N-\lambda P_A)^{-1}$ is the resolvent of $P_A$ defined for all suitable damping factor $\lambda\in (0,1)$ (see \cite{Boldi} for more information on this topic). At this stage, some authors proposed a feedback mechanism whereby the PageRank vector obtained at each iteration is employed as the new personalization vector in the subsequent step (see \cite{AFPRR} for more details). This construction defines a recursive sequence of vectors $\{\mathbf{x}_k\}_{k \ge 0} \subset \mathbb{R}^{N \times 1}$ given by
\begin{equation}
\mathbf{x}_k^\top = \mathbf{x}_{k-1}^\top X(\lambda) = \mathbf{x}_0^\top X^k(\lambda), \quad k \geq 1,
\end{equation}
where $\mathbf{x}_0 \in \mathbb{R}^{N \times 1}$ is an initial probability vector (i.e., $\mathbf{x}_0\geq0$ and $\mathbf{x}_0^\top\mathbf{e}=1$). The proposed {feedback-PageRank mechanism} provides a complete characterization of the existence and uniqueness of PageRank fixed points, via the classical power method, in terms of left-hand Perron vector associated with each strongly connected component of the directed graph $\mathcal{G}$.

The PageRank vector, characterized as the fixed point of the Google matrix $G$, can be obtained as the limit of the recursive sequence $\{\mathbf{x}_k\}_{k \ge 0} \subset \mathbb{R}^{N \times 1}$ defined by
\begin{equation}
\mathbf{x}_k^\top = \mathbf{x}_{k-1}^\top G
= \lambda \mathbf{x}_{k-1}^\top P_A + (1-\lambda)\mathbf{v}^\top,
\quad k \geq 1,
\end{equation}
where it is assumed that $\mathbf{x}_k^\top \mathbf{e} = 1$ for all $k \geq 0$. In this formulation, and in the spirit of \cite{AFPR}, if the personalization vector is updated at each iteration so as to reflect a weighted average of their preceding values, then the resulting process gives rise to the following iterative scheme 
\begin{equation*}\label{eq:average-previous}
\mathbf{x}^\top_{k+1}=\lambda \mathbf{x}^\top_{k}P_A+(1-\lambda)\frac{1}{\sum_{i=0}^k \omega_i}\sum_{i=0}^k \omega_{k-i}\,\mathbf{x}^\top_{i},
\end{equation*}

where $\{\omega_k\}_{k\geq0}\subset\R$ is a sequence of positive real numbers. At this point, in order to describe how the sequence $\{\mathbf{x}_k\}_{k\geq0}$ evolves over time, we subtract $\mathbf{x}_k$ from the previous equation to obtain the relation 
\begin{equation}\label{eq:increment-delta}
\Delta\mathbf{x}^\top_k=\mathbf{x}^\top_{k+1}-\mathbf{x}^\top_{k}=\left[\lambda \mathbf{x}^\top_{k}P_A+(1-\lambda)\frac{1}{\sum_{i=0}^k \omega_i}\sum_{i=0}^k \omega_{k-i}\,\mathbf{x}^\top_{i}\right]-\mathbf{x}^\top_{k}
=\mathbf{x}^\top_{k}(\lambda P_A-I_N)+(1-\lambda)\frac{1}{\sum_{i=0}^k \omega_i}\sum_{i=0}^k \omega_{k-i}\,\mathbf{x}^\top_{i}
\end{equation}
Therefore, changes in the values of the sequence $\{\mathbf{x}_k\}_{k\geq0}$ at each node are closely related to the increment $\mathbf{x}^\top_{k}(\lambda P_A-I_N)$ and to the weighted average of their preceding values. In fact, equation (\ref{eq:increment-delta}) can be interpreted as a continuous-time dynamical system given by

\begin{equation}\label{eq:probability-vector}
\dot{\mathbf{x}}^\top(t)=\frac{d}{dt}\left(\mathbf{x}^\top(t)\right)=\mathbf{x}^\top(t)(\lambda P_A-I_N)+(1-\lambda)\left[\frac{1}{\displaystyle\int_0^t \omega(u)\,du}\int_0^t \omega(t-u)\,\mathbf{x}^\top (u)\,du\right].
\end{equation}
Notice that the idea of proposing a dynamical system for the PageRank vector was studied by Gleich and Rossi in \cite{GleichRossi}, where the authors reformulated the classical PageRank algorithm in terms of changes in the PageRank values for each page. This reformulation, also based on Richardson's iteration for the PageRank system $\mathbf{x}^\top(I_N-\lambda P_A)=(1-\lambda)\mathbf{v}^\top$, enabled them to express the PageRank as the dynamical system 
\begin{equation}\label{eq:dynamical-gleich}
\dot{\mathbf{x}}^\top(t)=\frac{d}{dt}\left(\mathbf{x}^\top(t)\right)=(1-\lambda)\mathbf{v}^\top(t)-\mathbf{x}^\top(t)(I_N-\lambda P_A),
\end{equation}
where a time-dependent personalization vector $\mathbf{v}(t)$ was considered.

This approach captures the evolution of the PageRank vector, providing a natural mechanism for integrating temporal changes. Unlike the static one, this dynamical approach produces time-varying importance scores that reflect fluctuating external interest in specific nodes. Note that in this approach the personalization vector is a prefixed function (for instance, $\mathbf{v}^\top(t)=\frac{1}{k}\sum_{j=1}^k\mathbf{v}_j^\top\left[\cos(t+(j-1)\frac{2\pi}{k})+1\right]$ for some probability vectors $\mathbf{v}_1,\dots,\mathbf{v}_k\in\R^{N\times1}$) independent of the intrinsic dynamical behavior of the system, while in our model the personalization vector explicitly depends on the previous values of the vector $\mathbf{x}(t)$. 


\section{Main results}
\label{Section:Main results}

In this section we show some properties concerning the solution for the Initial Value Problem (IVP) given by 
\begin{equation}\label{eq:IVP}
\dot{\mathbf{x}}^\top(t)=\mathbf{x}^\top(t)(\lambda P_A-I_N)+(1-\lambda)\left[\frac{1}{\displaystyle\int_0^t \omega(u)\,du}\int_0^t \omega(t-u)\,\mathbf{x}^\top (u)\,du\right],\quad \mathbf{x}(0)=\mathbf{x}_0,
\end{equation}
where $\mathbf{x}_0\in\R^{N\times1}$ is a probability vector (i.e., $\mathbf{x}_0\geq0$ and $\mathbf{x}_0^\top\mathbf{e}=1$), $P_A$ is the row-normalization of the adjacency matrix $A$ associated to directed graph $\mathcal{G}$, $\lambda$ is the damping factor (with $0<\lambda<1$) and the weight function $\omega$ is  non-negative and locally integrable on the interval $[0,\infty)$, that is, its Lebesgue integral is finite on all compact set of $[0,\infty)$.

For $p\geq1$, the $p$-norm of a vector $\mathbf{x}\in\R^{N\times1}$ with $\mathbf{x}^\top=(x_1,\dots,x_N)$ is defined as 
\begin{equation*}
\Vert\, \mathbf{x}^\top\,\Vert_p=\left(\sum_{k=1}^N\vert\, x_k\,\vert^p\right)^{1/p}. 
\end{equation*} 
Note that, for a non-negative vector $\mathbf{x}\in\R^{N\times1}$, the identity $\mathbf{x}^\top\mathbf{e}=1$ can be expressed in terms of the 1-norm as $\Vert \mathbf{x^\top}\Vert_1 = 1$. Throughout the paper we mainly use the $p$-norm with $p=1$.\\

Before proceeding, we recall the several-variable version of the Cauchy-Peano Theorem concerning the existence of local solutions to initial value problems, which plays a key role in establishing the well-posedness of the IVP in equation (\ref{eq:IVP}). We now present this result as it applies to our context. From now on, and throughout the paper, we will use the notation $\dot{\mathbf{x}}(t)$ for the derivative of $\mathbf{x}(t)$ with respect to time to make presentation cleaner.\\

\begin{CP}[Theorem 8.27 in \cite{Kelley-Peterson}]\label{thm:CP}
Let $a,b$ be positive real numbers. Assume $t_0\in \R$, $\mathbf{x}_0\in \R^{N\times1}$ is a probability vector (i.e., $\mathbf{x}_0\geq0$ and $\mathbf{x}_0^\top\mathbf{e}=1$) and let $f:\R\times\R^{N\times1}\to\R^{N\times1}$ be a continuous $(N+1)$-dimensional vector function on the rectangle
$$Q=\left\{\,(t,\mathbf{x})\in \R\times\R^{N\times1}\,:\,\vert t-t_0\vert\leq a \,,\,{\Vert \,\mathbf{x}^\top-\mathbf{x}_0^\top\,\Vert_1\leq b}\,\right\}.$$
Then the Initial Value Problem (IVP)
$$\dot{\mathbf{x}}^\top(t)=f(t,\,\mathbf{x}^\top(t)),\quad \mathbf{x}(t_0)=\mathbf{x}_0,$$
has a solution $\mathbf{x}$ on the interval $[t_0-\alpha,t_0+\alpha]$ with $\Vert\,\mathbf{x}(t)-\mathbf{x}_0\,\Vert_1\leq b$, where 
$$\alpha=\min\left\{a,\,\frac{b}{M}\right\}\quad \text{ and }\quad  M=\max\left\{\,\Vert\,f(t,\,\mathbf{x}^\top)\,\Vert_1\,:\,(t,\,\mathbf{x})\in Q\,\right\}.$$
\end{CP}

\begin{remark}\label{rmk:operator-norm}
On  the space $\mathbb{R}^{N\times N}$ of square matrices of order $N$, we will consider the well-known (max-column) matrix norm $$\||\,Q\,|\|_1=\max_{j=1,\dots,N}\sum_i|a_{ij}|$$  and  the  (max-row) matrix norm $$\||\,Q\,|\|_\infty=\max_{i=1,\dots,N}\sum_j|a_{ij}|$$ (see \cite{Horn-Johnson},  Example 5.6.4 and Example 5.6.5). Clearly  $\||\,Q\,\||_1=\||\,Q^\top\,\||_\infty$. Additionally $\||\cdot\||_1$ coincides with the operator norm on $\mathbb{R}^{N\times N}$  induced by the $l_1$-norm on $\mathbb{R}^N$, that is,   $\|\,Q\mathbf{x}\,\|_1\le \||\,Q\,\||_1\|\,\mathbf{x}\,\|_1$ (see Example 5.6.4 in \cite{Horn-Johnson}). Thus, for every $\mathbf{x}\in\R^{N\times 1}$ we have
$$\|\,\mathbf{x}^\top Q\,\|_1=\|\,(\mathbf{x}^\top Q)^\top\,\|_1=\|\,Q^\top\,\mathbf{x}\,\|_1\le\||\,Q^\top\,\||_1\|\,\mathbf{x}\,\|_1=\||\,Q\,\||_\infty\|\,\mathbf{x}\,\|_1=\||\,Q\,\||_\infty\|\,\mathbf{x}^\top\,\|_1.$$
In particular, if $Q$ is a row-stochastic matrix we get $\|\,\mathbf{x}^\top Q\,\|_1\le \|\,\mathbf{x}^\top\,\|_1.$
\end{remark}

We also recall Gronwall’s inequality in its integral form as it is a key ingredient in the proof of unicity. More precisely,

\begin{GI}[Chapter 2 in \cite{Burton}]\label{thm:Gronwall}
Let $f,g:[0,T]\to[0,\infty)$ be continuous functions and $\alpha\ge 0$ real. If 
$$f(t)\leq \alpha+\int_0^tg(s)f(s)\,ds, \quad 0\leq t\leq T,$$
then 
$$f(t)\leq \alpha\exp\left(\int_0^tg(s)\,ds\right), \quad 0\leq t\leq T.$$
\end{GI}
At this point, the Cauchy–Peano Theorem and Gronwall’s inequality mentioned above, will allow us to state the main theorem of this paper concerning the existence and uniqueness of solutions to the IVP in equation (\ref{eq:IVP}). More precisely,

\begin{thm}\label{thm:existence-uniqueness}
Let $A$ be the adjacency matrix of a directed graph $\mathcal{G}$ and let $P_A$ be its row-normalization
described in Section \ref{Section:Introduction}. Let $\mathbf{x}_0\in\R^{N\times1}$ be a probability vector (i.e., $\mathbf{x}_0\geq0$ and $\mathbf{x}_0^\top\mathbf{e}=1$) and let $\lambda\in(0,1)$ be a fixed damping factor. If $\omega:[0,\infty)\to[0,\infty)$ is a locally integrable function, then the Initial Value Problem (IVP) given by
\begin{equation}\label{eq:IVP-vectorial}
\dot{\mathbf{x}}^\top(t)=\mathbf{x}^\top(t)(\lambda P_A-I_N)+(1-\lambda)\left[\frac{1}{\displaystyle\int_0^t \omega(u)\,du}\int_0^t \omega(t-u)\,\mathbf{x}^\top (u)\,du\right],\quad \mathbf{x}(0)=\mathbf{x}_0,
\end{equation}
has a unique solution $\mathbf{x}$ on the interval $[0,\infty)$.
\end{thm}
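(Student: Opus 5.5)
The plan is to rewrite the IVP (\ref{eq:IVP-vectorial}) as a linear Volterra integro-differential equation with a time-dependent, normalized kernel. Then existence and uniqueness on each compact interval $[0,T]$ will follow from a fixed-point argument in $C([0,T];\R^{N\times1})$, using Gronwall's inequality. Set $W(t)=\int_0^t\omega(u)\,du$ and $k(t,u)=\omega(t-u)/W(t)$ for $0\le u\le t$. If $\omega$ vanishes on an initial interval $[0,t_1]$, the quotient is $0/0$ there. On that interval we interpret the memory term as $\mathbf{x}^\top(t)$, i.e.\ the limit $\omega\to\delta_0$ (alternatively we assume $W(t)>0$ for $t>0$). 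The key observation is that $k(t,\cdot)\ge0$ and $\int_0^t k(t,u)\,du=1$. Hence for any continuous $\mathbf{y}$,
$$\Big\|\int_0^t k(t,u)\,\mathbf{y}^\top(u)\,du\Big\|_1\le \max_{0\le u\le t}\|\mathbf{y}^\top(u)\|_1 .$$
This bound is uniform in $t$, even near $t=0$ where $W(t)\to0$.

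Integrating, a continuous $\mathbf{x}$ solves the IVP iff it solves
$$\mathbf{x}^\top(t)=\mathbf{x}_0^\top+\int_0^t\Big[\mathbf{x}^\top(s)(\lambda P_A-I_N)+(1-\lambda)\int_0^s k(s,u)\mathbf{x}^\top(u)\,du\Big]ds=: (\Phi\mathbf{x})^\top(t).$$
First I would check that $\Phi$ maps $C([0,T])$ into itself. Continuity of $s\mapsto\int_0^s k(s,u)\mathbf{x}^\top(u)du=\frac{1}{W(s)}\int_0^s\omega(s-u)\mathbf{x}^\top(u)du$ for $s>0$ follows from continuity of the convolution of an $L^1_{loc}$ function with a continuous one, and of $W$. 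Near $s=0$ the term is merely bounded by the estimate above, and a bounded measurable integrand suffices for $\Phi\mathbf{x}$ to be continuous (indeed absolutely continuous).

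For uniqueness and global existence, note that $\Phi$ is affine with linear part $L$. By Remark~\ref{rmk:operator-norm}, $\||\lambda P_A-I_N\||_\infty\le 1+\lambda$. With $m(s)=\max_{u\le s}\|\mathbf{y}^\top(u)\|_1$ we get
$$\|(L\mathbf{y})^\top(t)\|_1\le \int_0^t\big[(1+\lambda)+(1-\lambda)\big]m(s)\,ds=2\int_0^t m(s)\,ds .$$
Iterating gives $\|L^n\mathbf{y}\|_\infty\le (2T)^n/n!\,\|\mathbf{y}\|_\infty$, so some power of $\Phi$ is a contraction. This yields a unique fixed point on every $[0,T]$. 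Uniqueness on $[0,T]$ also follows directly from the same bound applied to the difference $\mathbf{z}=\mathbf{x}_1-\mathbf{x}_2$ of two solutions: $m(t)\le 2\int_0^t m(s)ds$ since $m$ is nondecreasing, so Gronwall with $\alpha=0$ gives $m\equiv0$. The solutions on nested intervals agree by uniqueness, which defines a unique solution on $[0,\infty)$. The equation holds pointwise for $t>0$, since there the integrand is continuous and the fundamental theorem of calculus applies.

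Cauchy–Peano cannot be applied literally, because the right-hand side depends on the history and not only on $(t,\mathbf{x}(t))$. I would therefore use the Picard/Volterra argument above instead, or mention Peano only to motivate local existence. The main obstacle I expect is the behavior at $t=0$, where $W(t)\to0$ and the kernel is singular. The normalization $\int_0^t k(t,u)du=1$ is exactly what neutralizes it, but continuity of the memory term at $0$ (needed for $\dot{\mathbf{x}}(0)$ to exist) requires $\frac1{W(t)}\int_0^t\omega(t-u)(\mathbf{x}(u)-\mathbf{x}_0)du\to0$. This follows from the same averaging bound together with continuity of $\mathbf{x}$ at $0$, so the memory term tends to $\mathbf{x}_0^\top$ and the derivative exists (one-sidedly) at $t=0$ as well.
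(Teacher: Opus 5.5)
Your proof is correct, and for existence it takes a genuinely different route from the paper.

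\textbf{The paper's route.} It gets a local solution on some $[0,T]$ from the Cauchy--Peano theorem. For uniqueness it writes $\mathbf{z}^\top(t)=(1-\lambda)\int_0^t[\cdots]\,e^{(\lambda P_A-I_N)(t-s)}\,ds$ via the integrating factor and bounds $\vertiii{e^{(\lambda P_A-I_N)(t-s)}}_\infty\le e^{-(1-\lambda)(t-s)}$. It then uses the same kernel normalization you use and applies Gronwall with $\alpha=0$ to the running maximum $M(t)$.

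\textbf{Uniqueness.} Your argument is essentially the paper's, without the integrating factor. The cruder constant $\vertiii{\lambda P_A-I_N}_\infty\le 1+\lambda$ suffices; the paper in any case discards the decay $e^{-(1-\lambda)(t-s)}$ by bounding it by a constant $C$.

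\textbf{Existence.} Your Volterra reformulation with $\|L^n\|\le (2T)^n/n!$ gives a unique solution on every $[0,T]$ at once, so the solution on $[0,\infty)$ follows by gluing. The paper only obtains some small $[0,T]$ and never carries out the continuation the statement requires.

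\textbf{Cauchy--Peano.} Your objection is well founded. The right-hand side depends on the whole history $\mathbf{x}|_{[0,t]}$, not only on $(t,\mathbf{x}(t))$. Moreover, the paper's convention that the memory term is $0$ at $t=0$ makes it discontinuous there. Your averaging bound shows the correct value is $\mathbf{x}_0^\top$, which is what the paper itself uses later in condition (C2).

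\textbf{A minor caveat.} Under your first convention on an initial interval $[0,t_1]$ where $\omega$ vanishes, the memory term jumps at $t_1$ from $\mathbf{x}^\top(t_1)$ to $\mathbf{x}_0^\top$. The equation then holds only one-sidedly at $t_1$. Your alternative assumption $\int_0^t\omega(u)\,du>0$ for $t>0$, which the paper makes tacitly, avoids this.
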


\begin{proof}
Let $a,b$ be positive real numbers and let us consider the rectangle
$$Q=\left\{(t,\mathbf{x})\in \R\times\R^{N\times1}\,:\,0\leq t\leq a \,,\,{\Vert\, \mathbf{x}^\top-\mathbf{x}_0^\top\,\Vert_1\leq b}\,\right\}.$$
where $\mathbf{x}_0\in\R^{N\times1}$ is a probability vector, i.e., $\mathbf{x}_0\geq0$ and $\mathbf{x}_0^\top\mathbf{e}=1$. Now, let consider the IVP given by 
$$\dot{\mathbf{x}}^\top(t)=f(t,\,\mathbf{x}^\top(t)), \quad \mathbf{x}(0)=\mathbf{x}_0,$$
where 
\begin{equation*}
f(t, \mathbf{x}^\top(t))=\mathbf{x}^\top(t)(\lambda P_A-I)+(1-\lambda)\left[\frac{1}{\displaystyle\int_0^t \omega(u)\,du}\int_0^t \omega(t-u)\,\mathbf{x}^\top (u)\,du\right].
\end{equation*}
It is clear that $f$ is a continuous function on the rectangle $Q$. Indeed, since $\omega$ is Lebesgue-integrable 
the function $t\rightarrow \int_0^t \omega(t-u)\,\mathbf{x}(u)\,du$ is continuous (in fact it is absolutely continuous \cite[Prop. 4.4.6]{Cohn}). Notice also that for $t=0$ we define $\left(\frac{1}{\displaystyle\int_0^t \omega(u)\,du}\int_0^t \omega(t-u)\,\mathbf{x}(u)\,du\right)=0$.\\

As an application of the Cauchy-Peano Theorem mentioned above, the IVP described in equation (\ref{eq:IVP-vectorial}) has a solution $\mathbf{x}$ on the interval $[0,T]$, for some $T>0$.\\

Now we proceed to show that the solution of the IVP in the vector case is unique. To this aim, suppose $\mathbf{x}_1(t)$ and $\mathbf{x}_2(t)$ are two different solutions of the integro-differential equation defined on the interval $[0,T]$ satisfying the same initial condition, that is, $\mathbf{x}_1(0)=\mathbf{x}_2(0)= \mathbf{x}_0$. Let us define the difference vector-valued function $\mathbf{z}(t) = \mathbf{x}_1(t) - \mathbf{x}_2(t)$ for $t\in[0,T]$. We need to prove that $\mathbf{z}(t)=\mathbf{0}_{N\times1}\in\R^{N\times 1}$ for all $t\in[0,T]$. Now, since the integro-differential equation in (\ref{eq:IVP-vectorial}) is a linear system, a straightforward observation is that the difference function $\mathbf{z}(t)$ satisfies the homogeneous equation
$$\dot{\mathbf{z}}^\top(t) = \mathbf{z}^\top(t)(\lambda P_A - I_N) + (1-\lambda)\left[\frac{1}{\displaystyle\int_0^t \omega(u)\,du}\int_0^t \omega(t-u)\,\mathbf{z}^\top(u)\,du\right]$$
with the initial condition $\mathbf{z}(0)=\mathbf{0}_{N\times1}$. Now, we apply the so-called variation of parameters method (see \cite[Section 4, Chapter 3]{Coddington-Levinson}). In order to solve the vector-valued integro-differential equation, we multiply by the matrix-integrating factor $e^{-( \lambda P_A-I_N)t}$. Since the initial condition is $\mathbf{z}(0) = \mathbf{0}_{N\times1}$, integrating both sides from $0$ to $t$ gives 
$$\mathbf{z}^\top(t) e^{-( \lambda P_A-I_N)t} = (1-\lambda)\int_0^t \left[\frac{1}{\displaystyle\int_0^{s} \omega(u)\,du}\int_0^s \omega(s-u)\,\mathbf{z}^\top(u)\,du \right]\,e^{-(\lambda P_A - I_N)s}\,ds.$$
Now, since the matrix $P_A$ commutes with the identity $I_N$, the inverse of the integrating factor exists and is, in fact, the matrix $e^{( \lambda P_A-I_N)t}$. Moreover, observe that $e^{( \lambda P_A-I_N)t}e^{-( \lambda P_A-I_N)s}=e^{( \lambda P_A-I_N)(t-s)}$ since $P_A$ commutes with itself. Therefore, multiplying both sides by the matrix $e^{( \lambda P_A-I_N)t}$ we obtain the equation
$$\mathbf{z}^\top(t) = (1-\lambda)\int_0^t \left[\frac{1}{\displaystyle\int_0^{s} \omega(u)\,du}\int_0^s \omega(s-u)\,\mathbf{z}^\top(u)\,du \right]\,e^{(\lambda P_A - I_N)(t-s)}\,ds.$$
At this point, we take the $1$-norm norm of both sides in the previous equation. Since $\omega$ is a non-negative function, we may move the norm inside the integrals and, as an application of Remark \ref{rmk:operator-norm}, obtain the scalar inequality
\begin{equation}\label{eq:1-norm-inequailty}
\Vert\,\mathbf{z}^\top(t)\,\Vert_1\leq(1-\lambda)\int_0^t \vertiii{\,e^{(\lambda P_A - I_N)(t-s)}\,}_{\infty }\left[\frac{1}{\displaystyle\int_0^{s} \omega(u)\,du}\int_0^s \omega(s-u)\,\,\Vert\,\mathbf{z}^\top(u)\,\Vert_1\,du \right]\,ds.    
\end{equation}
The next step is to provide a bound for the $\vertiii{\,\cdot\,}_{\infty}$-norm of the exponential matrix $e^{(\lambda P_A - I_N)(t-s)}$. Observe that, since the matrix $P_A$ commutes with the identity $I_N$, we can write 
$$e^{(\lambda P_A - I_N)(t-s)} = e^{-I_N(t-s)} e^{\lambda P_A(t-s)} = e^{-(t-s)} e^{\lambda P_A (t-s)}.$$
Notice that $\||P_A\||_\infty=1$ since $P_A$ is row-stochastic; thus  the Taylor expansion of the matrix-exponential gives  
$$\vertiii{\,e^{\lambda P_A (t-s)}\,}_{\infty} = \vertiii{\sum_{k=0}^\infty \,\frac{(\lambda P_A (t-s))^k}{k!}\,}_{\infty}\leq \sum_{k=0}^\infty \frac{\vertiii{\,\lambda P_A(t-s)\,}_{\infty}^k}{k!}=\sum_{k=0}^\infty \frac{\lambda^k (t-s)^k}{k!}\vertiii{\, P_A\,}_{\infty}^k=\sum_{k=0}^\infty \frac{\lambda^k (t-s)^k}{k!}=e^{\lambda (t-s)}.$$
Therefore $\vertiii{\,e^{(\lambda P_A - I_N)(t-s)}\,}_{\infty} \leq e^{-(t-s)} e^{\lambda (t-s)} = e^{-(1-\lambda)(t-s)}$. Substituting this estimation in equation (\ref{eq:1-norm-inequailty}) gives the inequality
$$\Vert\,\mathbf{z}^\top(t)\,\Vert_1 \leq(1-\lambda)\int_0^t e^{-(1-\lambda)(t-s)} \left[\frac{1}{\displaystyle\int_0^{s} \omega(u)\,du} \int_0^s \omega(s-u)\,\|\,\mathbf{z}^\top(u)\,\|_1\,du \right]\,ds.$$
Now, let us denote by $M(t)=\sup_{u\in[0,t]}\Vert \,\mathbf{z}^\top(u)\,\Vert_1$, which is trivially increasing as a function of $t$. Since $\omega(t)\geq0$ for all $t\geq0$ and $$\displaystyle\frac{1}{\displaystyle\int_0^{s} \omega(u)\, du}\int_0^s \omega(s-u)\,\, du=1,$$ 
we have
$$
\Vert\,\mathbf{z}^\top(t)\,\Vert_1 \leq(1-\lambda)\int_0^t e^{-(1-\lambda)(t-s)} \left[\frac{1}{\displaystyle\int_0^{s} \omega(u)\,du} \int_0^s \omega(s-u)\,\|\,\mathbf{z}^\top(u)\,\|_1 du \right]\,ds\leq (1-\lambda)\int_0^t e^{-(1-\lambda)(t-s)}\, M(s)\,ds.
$$
Now, since $[0,T]$ is a compact interval and $e^{-(1-\lambda)(t-s)}$ is a continuous function, let $C=\displaystyle\max_{0\leq s\leq t\leq T}e^{-(1-\lambda)(t-s)}$. Then
\begin{equation}\label{eq:inequality-constant-vectorial}
\Vert\,\mathbf{z}^\top(t)\,\Vert_1 \leq  (1-\lambda)\int_{0}^te^{-(1-\lambda)(t-s)}\,M(s)\,ds\leq (1-\lambda)\int_{0}^tC\,M(s)\,ds.
\end{equation}
Notice that the right-hand side is the integral of the non-negative function $M(s)$, and consequently is non-decreasing with respect to $t$. Moreover, since the integral of $M(s)$ acts as an upper bound for $\Vert \,\mathbf{z}^\top(t)\,\Vert_1$ at every instant $t$ and it never decreases, it also bounds the maximum value of $\Vert\,\mathbf{z}^\top(u)\,\Vert_1$ over the interval $[0,t]$. Therefore, we can replace the left-hand side $\Vert \,\mathbf{z}^\top(t)\,\Vert_1$ by its maximum obtaining the inequality
\begin{equation}\label{eq:inequality-gronwall-vectorial}
 M(t) \leq (1-\lambda)\int_{0}^tC\,M(s)\,ds,\quad 0\leq t\leq T.
\end{equation}
We just need to use Gronwall's inequality above (with  $\alpha=0$)  to conclude that $M(t)=0$, that is, $\mathbf{z}(t)=\mathbf{0}_{N\times1}$ for all $t\in [0,T]$.\\
\end{proof}

Next we will show that the the solution $\mathbf{x}(t)$ of the continuous-time linear system given by equation (\ref{eq:IVP}) is a non-negative probability vector. Notice that $\mathbf{x}(t)$ is non negative if and only if $\mathbf{x}^\top(t)\,\mathbf{e}_i\geq0$ for all $i=1,\dots, N$, for all time $t\geq0$,  where $\mathbf{e}_i$ is the $i^{th}$-unit vector of $\mathbb{R}^{N\times1}$. Additionally  we will show that  $ \mathbf{x}^\top(t)\,\mathbf{e}=1$, for all time $t\geq0$. Both facts will be obtained as two separate results. \\

In order to prove the non-negativity of the vector $\mathbf{x}(t)$, we need to recall what a Metzler matrix is.\\

\begin{definition}{\cite[Theorem 2, Section 2, Part I]{Farina-Rinaldi}}
A real square matrix $A=(a_{ij})\in \R^{N\times N}$ is said to be a Metzler matrix if its all off-diagonal entries are non-negative, i.e., $a_{ij}\geq0$ for all $i\neq j$.     
\end{definition}

With this definition at hand, we proceed to show that the solution $\mathbf{x}(t)$ of the IVP in equation (\ref{eq:IVP}) is a non-negative vector. 

\begin{thm}\label{thm:positivity}
Let $A$ be a non-negative square matrix of order $N$ associated with a directed graph $\mathcal{G}$, and let $P_A$ denote its row-normalization described in Section~\ref{Section:Notation}. Let $\mathbf{x}_0\in\mathbb{R}^{N\times1}$ be a probability vector (i.e., $\mathbf{x}_0\ge0$ and $\mathbf{x}_0^\top\mathbf{e}=1$), and let $\lambda\in(0,1)$ be a fixed damping factor. Suppose that $\omega:[0,\infty)\to[0,\infty)$ is a locally integrable function. If $\mathbf{x}(t)$, $t\ge0$, denotes the unique solution of the IVP
\begin{equation}\label{eq:probability-vector-1}
\dot{\mathbf{x}}^\top(t)=\mathbf{x}^\top(t)(\lambda P_A-I_N)+(1-\lambda)\left[\frac{1}{\displaystyle\int_0^t \omega(u)\,du}\int_0^t \omega(t-u)\,\mathbf{x}^\top (u)\,du\right],\quad \mathbf{x}(0)=\mathbf{x}_0,
\end{equation}
then $\mathbf{x}(s)\ge0$ for all $s\ge0$.
\end{thm}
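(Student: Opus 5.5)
We show non-negativity through the variation-of-parameters representation already used in the uniqueness part of Theorem \ref{thm:existence-uniqueness}, together with a Picard-type iteration that preserves the cone of non-negative vectors. Write $M=\lambda P_A-I_N$. This is a Metzler matrix: its off-diagonal entries $\lambda p_{ij}$ are non-negative. Moreover $M+I_N=\lambda P_A\ge0$, so
$$e^{Mt}=e^{-t}e^{\lambda P_A t}=e^{-t}\sum_{k\ge0}\frac{(\lambda t)^k P_A^k}{k!}\ge 0$$
entrywise for all $t\ge0$. Introduce the normalized memory operator
$$(\mathcal K\mathbf{x})^\top(s)=\frac{1}{\int_0^s\omega(u)\,du}\int_0^s\omega(s-u)\,\mathbf{x}^\top(u)\,du,$$
with the convention that it vanishes where $\int_0^s\omega=0$. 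Since $\omega\ge0$, the operator $\mathcal K$ maps non-negative continuous functions to non-negative functions. Exactly as in the proof of Theorem \ref{thm:existence-uniqueness}, multiplying by the integrating factor $e^{-Mt}$ shows that the solution satisfies
$$\mathbf{x}^\top(t)=\mathbf{x}_0^\top e^{Mt}+(1-\lambda)\int_0^t(\mathcal K\mathbf{x})^\top(s)\,e^{M(t-s)}\,ds .$$

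We then define the map $\Phi$ on $C([0,T];\R^{N\times1})$ by the right-hand side of this identity. It maps the cone $\mathcal C_+=\{\mathbf{y}\in C([0,T];\R^{N\times1}) : \mathbf{y}(t)\ge0\ \text{for all } t\}$ into itself, because $\mathbf{x}_0\ge0$, $e^{M\tau}\ge0$, $1-\lambda>0$ and $\mathcal K$ preserves non-negativity. Starting from $\mathbf{y}_0(t)=\mathbf{x}_0$, we define the iterates $\mathbf{y}_{k+1}=\Phi(\mathbf{y}_k)$, all of which lie in $\mathcal C_+$.

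To show that the iterates converge uniformly on $[0,T]$ to the solution $\mathbf{x}$, we use the estimates from the uniqueness proof. By Remark \ref{rmk:operator-norm} and the bound $\vertiii{e^{M\tau}}_\infty\le e^{-(1-\lambda)\tau}\le1$, and since $\mathcal K$ is an averaging operator, the differences $D_k(t)=\sup_{u\le t}\Vert\mathbf{y}_{k+1}^\top(u)-\mathbf{y}_k^\top(u)\Vert_1$ satisfy
$$D_{k}(t)\le(1-\lambda)\int_0^t D_{k-1}(s)\,ds .$$
Iterating gives $D_k(t)\le D_0(T)\,\frac{((1-\lambda)t)^k}{k!}$, which is summable in $k$. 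Hence $\mathbf{y}_k\to\mathbf{y}$ uniformly, and the limit $\mathbf{y}$ is a fixed point of $\Phi$. To pass to the limit inside $\mathcal K$ we use $\Vert(\mathcal K\mathbf{f})^\top(s)\Vert_1\le\sup_{u\le s}\Vert\mathbf{f}^\top(u)\Vert_1$. The cone $\mathcal C_+$ is closed under uniform limits, so $\mathbf{y}\ge0$.

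It remains to identify $\mathbf{y}$ with $\mathbf{x}$. Differentiating the fixed-point identity shows that $\mathbf{y}$ solves the IVP in equation (\ref{eq:probability-vector-1}). By the uniqueness in Theorem \ref{thm:existence-uniqueness} we get $\mathbf{y}=\mathbf{x}$ on $[0,T]$. Since $T$ is arbitrary, $\mathbf{x}(s)\ge0$ for all $s\ge0$.

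The main obstacle is the careful justification of the integral (mild) formulation and its equivalence with the IVP. This includes the behaviour of the memory term near $t=0$, where $\int_0^t\omega$ may vanish or be small, and the continuity of $\mathcal K\mathbf{y}$. I would handle both through the averaging bound above, which holds even when $\int_0^s\omega$ is small because the weights are normalized, together with the absolute continuity already invoked in the proof of Theorem \ref{thm:existence-uniqueness}.
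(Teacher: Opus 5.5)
Your proof is correct, but it takes a different route from the paper.

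\textbf{The paper's route.} The paper uses a boundary-tangency argument in the style of Farina--Rinaldi. It takes the first instant $t^*$ at which some component $x_j$ vanishes. It then observes that $\dot x_j(t^*)\ge 0$, since the off-diagonal entries of the Metzler matrix $\lambda P_A-I_N$ are non-negative and the memory term for component $j$ averages the past values $x_j(u)>0$, $u<t^*$.

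\textbf{Your route.} You use the Metzler structure only through the entrywise positivity of $e^{(\lambda P_A-I_N)t}=e^{-t}e^{\lambda P_A t}$. You write the solution in variation-of-constants (mild) form and run a Picard iteration inside the closed cone of non-negative continuous functions. The bound $D_k(t)\le D_0(T)((1-\lambda)t)^k/k!$ gives uniform convergence, so the limit stays non-negative.

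\textbf{Comparison.} The paper's argument is shorter and more intuitive, but as written it is not a complete invariance proof. First, $\dot x_j(t^*)\ge 0$, with equality allowed, does not by itself stop $x_j$ from becoming negative right after $t^*$. Closing this normally needs a strict inequality (for instance via an $\varepsilon$-perturbation) or exactly the positivity of $e^{Mt}$ that you use. Second, when $\mathbf{x}_0$ has zero entries (allowed, since only $\mathbf{x}_0\ge0$ is assumed), $t^*=0$ and the hypothesis ``$x_j(s)>0$ for $s<t^*$'' is vacuous. Your approach avoids both problems and handles initial vectors with zero entries. As a by-product, your iteration also gives existence on every interval $[0,T]$ with no smallness condition. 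The proof of Theorem~\ref{thm:existence-uniqueness} obtains existence only on a short interval from Cauchy--Peano and never carries out the continuation.

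\textbf{A small simplification.} You have already shown that $\mathbf{x}$ itself satisfies the mild identity. The same Gronwall-type estimate then shows that $\Phi$ has at most one fixed point, so $\mathbf{y}=\mathbf{x}$ follows directly. This avoids differentiating the fixed-point identity, which needs continuity of $\mathcal K\mathbf{y}$. That continuity can fail at $t=0$, and at the instant where $\int_0^t\omega$ first becomes positive.
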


\begin{proof}
Let us denote by $\mathbf{x}(t)$ the solution (at time $t$) for the IVP described in equation (\ref{eq:probability-vector-1}). To show the non-negativity of the vector $\mathbf{x}(t)$, we follow the ideas in \cite[Theorem 2, Section 2, Part I]{Farina-Rinaldi}; accordingly, in order to prove that $\mathbf{x}(s)\geq0$ for all $s\geq0$, it suffices to check that $\dot{\mathbf{x}}^\top(s)\geq0$ whenever $\mathbf{x}^\top(s)$ is on the boundary of the orthant $\mathcal{O}_+=\{\mathbf{x}^\top=(x_1,\dots,x_N)\,:\, x_i\geq0\text{ for }i=1,\dots, N\}$. Notice that equation (\ref{eq:probability-vector-1}) can be rewritten as follows 
\begin{equation}\label{eq:probability-vector-1-1}
\dot{\mathbf{x}}^\top(t)=\mathbf{x}^\top(t)(\lambda P_A-I_N)+(1-\lambda)\left[\frac{1}{\displaystyle\int_0^t \omega(u)\,du}\int_0^t \omega(t-u)\,\mathbf{x}^\top (u)\,du\right]=\mathbf{x}^\top(t)\,\mathbf{\Omega}+(1-\lambda)\mathbf{G}(t,\omega,\mathbf{x}^\top),
\end{equation}
where $\mathbf{\Omega}=(\lambda P_A-I_N)$ and $G(t,\omega,\mathbf{x}^\top)$ is the convolution part of the equation against $\omega$. 
In light of equation (\ref{eq:probability-vector-1-1}), we observe that showing $\dot{\mathbf{x}}^\top(s)\geq0$ is equivalent to showing that the components of the vector $\dot{\mathbf{x}}^\top(t)=\mathbf{x}^\top(t)\,\mathbf{\Omega}+(1-\lambda)\mathbf{G}(t,\omega,\mathbf{x})$ corresponding to the zero components of $\mathbf{x}^\top(t)$ are non-negative (i.e., if $x_i(t)=0$ for some index $i$, then $\dot{x}_i(t)\geq0$). Notice also that  $\mathbf{\Omega}=(\lambda P_A-I_N)$ is a Metzler matrix, since its off-diagonal entries, $\lambda p_{ij}$ with $P_A=(p_{ij})$, are non-negative.\\

Let us denote by $t^*$ the first instant for which some components of $\mathbf{x}^\top(t^*)$ are equal to zero and let $I^*$ be the set of indexes of such components, i.e., $I^*=\{i\in\{1,\dots,N\}\,:\, x_i(t^*)=0 \text{ with }\mathbf{x}^\top(t^*)=(x_1(t^*),\dots,x_N(t^*))\}$. Multiplying equation (\ref{eq:probability-vector-1-1}) on the right by the column vector $\mathbf{e}_j$ (where its $j^{th}$ component is equal to one and zero for the rest) 
$$\dot{\mathbf{x}}^\top(t^*)\,\mathbf{e}_j=\mathbf{x}^\top(t^*)\mathbf{\Omega}\,\mathbf{e}_j+(1-\lambda)\mathbf{G}(t^*,\omega,\mathbf{x}^\top)\,\mathbf{e}_j$$
we obtain the following integro-differential scalar equation
\begin{align*}
\dot{x}_j(t^*)
&=\sum_{i\notin I^*}a_{ij}x_i(t^*)+(1-\lambda)\left[\frac{1}{\displaystyle\int_0^{t^*} \omega(u)\,du}\int_0^{t^*} \omega(t^*-u) x_j(u)\,du\right].
\end{align*}
Therefore, in view of  $a_{ij}\geq0$, $x_i(t^*)>0$ for $i\notin I^*$, and also that the weight function $\omega$ is non-negative and $x_j(s)>0$ for all $s<t^*$ (since $t^*$ is the first instant such that $x_j(t^*)=0$), it follows that that $\dot{x}_j(t^*)\geq0$.
\end{proof}

\begin{remark}
A similar proof for the positivity of $\mathbf{x}(s)$ applies in the case $\mathbf{x}_0>0$, by considering the existence of a time $t^*>0$ as the first instant at which some components of $\mathbf{x}(t^*)$ are zero. More precisely, if we denote by $I^*$ the set of indices such that of $x_j(t^*)=0$, then, proceeding similarly, it can be proved that $\dot{x}_j(t^*)\geq0$.
\end{remark}

When it comes to proving  that the normalization condition $\mathbf{x}^\top(s)\, \mathbf{e}=1$ holds for all $s\geq0$, we need to recall the notion of uniform equicontinuity.

\begin{definition}
Let $K$ be a compact metric space and let denote by $\mathcal{C}(K)$ the space of continuous function on $K$. A bounded subset $\mathcal{H}$ of $\mathcal{C}(K)$ is said to be uniformly equicontinuous if for every $\varepsilon>0$, there exists  $\delta>0$ such that if $\vert t_1-t_2\vert\leq \delta$, then $\vert f(t_1)-f(t_2)\vert<\varepsilon$ for all $f\in\mathcal{H}$ and all $t_1,t_2\in K$.
\end{definition}

With this definition at hand, we are in position to show the normalization condition on the solution of the continuous dynamical system with an initial condition.

\begin{thm}\label{thm:identity-to-one}
Let $A$ be a non-negative square matrix of order $N$ associated with a directed graph $\mathcal{G}$, and let $P_A$ denote its row-normalization described in Section~\ref{Section:Notation}. Let $\mathbf{x}_0\in\mathbb{R}^{N\times1}$ be a probability vector (i.e., $\mathbf{x}_0\ge0$ and $\mathbf{x}_0^\top\mathbf{e}=1$), and let $\lambda\in(0,1)$ be a fixed damping factor. Suppose that $\omega:[0,\infty)\to[0,\infty)$ is a locally integrable function. If $\mathbf{x}(t)$, for $t\ge0$, denotes the unique solution of the IVP
\begin{equation}\label{eq:probability-vector-2}
\dot{\mathbf{x}}^\top(t)=\mathbf{x}^\top(t)(\lambda P_A-I_N)+(1-\lambda)\left[\frac{1}{\displaystyle\int_0^t \omega(u)\,du}\int_0^t \omega(t-u)\,\mathbf{x}^\top (u)\,du\right],\quad \mathbf{x}(0)=\mathbf{x}_0,
\end{equation}
then $\mathbf{x}^\top(s)\,\mathbf{e}=1$ for all $s\geq0$.
\end{thm}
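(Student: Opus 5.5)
Multiplying the equation in (\ref{eq:probability-vector-2}) on the right by $\mathbf{e}$ reduces the statement to a scalar Volterra integro-differential equation. Set $s(t)=\mathbf{x}^\top(t)\,\mathbf{e}$ and $W(t)=\int_0^t\omega(u)\,du$. Since $\mathcal{G}$ is assumed to have no dangling nodes, $P_A$ is row-stochastic, so $P_A\mathbf{e}=\mathbf{e}$ and hence $(\lambda P_A-I_N)\mathbf{e}=-(1-\lambda)\mathbf{e}$. Pulling the constant vector $\mathbf{e}$ inside the convolution integral (legitimate since the integrand is integrable), we get
\begin{equation*}
\dot s(t)=-(1-\lambda)\,s(t)+(1-\lambda)\,\frac{1}{W(t)}\int_0^t\omega(t-u)\,s(u)\,du,\qquad s(0)=\mathbf{x}_0^\top\mathbf{e}=1,
\end{equation*}
with the same convention as in the proof of Theorem \ref{thm:existence-uniqueness} for the bracket at $t=0$.

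The second step is to observe that the constant function $\sigma(t)\equiv1$ solves this scalar IVP. Indeed, $\dot\sigma=0$, and
\[
\frac{1}{W(t)}\int_0^t\omega(t-u)\,du=\frac{1}{W(t)}\int_0^t\omega(v)\,dv=1
\]
after the change of variables $v=t-u$. The right-hand side is therefore $-(1-\lambda)+(1-\lambda)=0$.

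It remains to show $s\equiv\sigma$, and I would reproduce the uniqueness argument of Theorem \ref{thm:existence-uniqueness} in the scalar setting. Let $z(t)=s(t)-1$. It satisfies the homogeneous linear equation with $z(0)=0$. Multiplying by the integrating factor $e^{(1-\lambda)t}$ and integrating gives
\[
z(t)=(1-\lambda)\int_0^t e^{-(1-\lambda)(t-r)}\,\frac{1}{W(r)}\int_0^r\omega(r-u)\,z(u)\,du\,dr .
\]
With $M(t)=\sup_{[0,t]}|z|$, using $\omega\ge0$ and the normalization identity above, this yields $M(t)\le(1-\lambda)\int_0^t M(r)\,dr$. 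Gronwall's inequality with $\alpha=0$ then gives $M\equiv0$ on every $[0,T]$, hence on $[0,\infty)$. Alternatively, one can note that $\mathbf{e}$ is a right eigenvector compatible with the linear structure and invoke uniqueness directly, but the explicit scalar Gronwall estimate is cleaner.

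The main obstacle is the behavior near $t=0$, where $W(t)$ may vanish on an initial interval, for instance if $\omega=0$ near the origin. There the averaged term must be interpreted via the paper's convention (set to $0$), and then the constant $1$ no longer solves the scalar equation on that interval. I would handle this by assuming, as the paper implicitly does, that $W(t)>0$ for $t>0$. Note that the averaging operator maps constants to themselves wherever $W>0$, so the computation above holds for all $t>0$; measurability and continuity are supplied by the absolute continuity of the convolution already used in the proof of Theorem \ref{thm:existence-uniqueness}. The positivity from Theorem \ref{thm:positivity} is not needed here; combined with this result it shows that $\mathbf{x}(t)$ stays in the probability simplex.
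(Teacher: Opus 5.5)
Your proposal is correct and takes essentially the paper's own route: reduce to the scalar equation for $s(t)=\mathbf{x}^\top(t)\mathbf{e}$ via $P_A\mathbf{e}=\mathbf{e}$, observe that the constant $1$ solves it, and get uniqueness for $z=s-1$ through the integrating factor $e^{(1-\lambda)t}$ and Gronwall's inequality with $\alpha=0$. You rightly drop the paper's Schauder existence step and continuation argument, which are redundant because $s$ and the constant $1$ are already solutions and your Gronwall bound holds on every $[0,T]$; your caveat that $\int_0^t\omega(u)\,du>0$ for $t>0$ is needed is a genuine implicit hypothesis that the paper leaves unstated.
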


\begin{proof}
Let us denote by $\mathbf{x}(t)$ the solution (at time $t$) for the IVP described in equation (\ref{eq:probability-vector-2}). To show that $\mathbf{x}^\top(s)\,\mathbf{e}=1$ for all $s\geq0$, let us define the scalar function $y(s)=\mathbf{x}^\top(s)\,\mathbf{e}$ for $s\geq0$. If we multiply equation (\ref{eq:probability-vector-2}) on the right by the vector $\mathbf{e}=(1,\dots,1)^\top$, using the row-stochasticity of the matrix $P_A$ we obtain a scalar integro-differential equation given by
\begin{equation}\label{eq:scalar-diff}
 \dot{y}(t)=-(1-\lambda)y(t)+(1-\lambda)\left[\frac{1}{\displaystyle\int_0^t \omega(u)\,du}\int_0^t \omega(t-u)\,y(u)\,du\right]=f(t,\,y(t)).
\end{equation}
Now, assuming that $\mathbf{x}(0)$ is a probability vector, we have the initial condition $y(0)=\mathbf{x}^\top(0)\,\mathbf{e}=1$. With equation (\ref{eq:scalar-diff}) and the initial condition $y(0)=1$ at hand, the idea is to show the existence and uniqueness of a (local) solution for the IVP $\dot{y}(t)=f(t,\,y(t))$ with $y(0)=1$ using a classical Fixed-Point Theorem approach. Once we have this local solution, it will be extended to all $t\geq0$ using a continuation of solutions argument (see \cite[Chapter 1, Section 4, Theorem 4.1]{Coddington-Levinson} or \cite[Chapter 8, Section 8.5, Corollary 8.35]{Kelley-Peterson}, for instance).\\

It is clear the function $y(t)=1$ for $t\geq0$ is a solution of the equation (\ref{eq:scalar-diff}) which satisfies the initial condition $y(0)=1$. Now, drawing on classical fixed-point theorems, we show that the IVP admits no other solution. For the sake of completeness, we include a proof based on Schauder’s fixed-point theorem (see \cite[Chapter 12, Theorem 1.4]{GLS}, for instance), following the ideas presented in \cite[Chapter 12]{GLS}. A straightforward observation is that a function $\varphi$ is a solution of the IVP  $y'(t)=f(t,\,y(t))$ with $y(0)=1$ if and only if $\varphi$ satisfies the integral equation
$$\varphi(t)=1+\int_0^t f(s,\, \varphi(s))\,ds,\quad t\geq0.$$
First, we prove that the previous equation has a solution on the interval $[0,T]$ where $T>0$ is sufficiently small. To this aim, since the function $f(t,y(t))$ in equation (\ref{eq:scalar-diff}) is undefined at $t=0$, we consider  the operator $T_f$ defined on the space of continuous functions $\mathcal{C}([0,T])$ given by 
\begin{equation}\label{eq:definition-operator}
 T_f\phi(t)=
\begin{cases}
1, & \text{for } t=0\\
1+\displaystyle\int_0^t f(s,\, \phi(s))\,ds, &\text{for } 0<t\leq T,
\end{cases}
\end{equation}
and show that it has a fixed-point, that is, a function $\varphi$ such that $T_f\varphi(t)=\varphi(t)$ for all $t\in[0,T]$. To this end we consider as the  domain of $T_f$ the set 
$$K=\{\phi\in\mathcal{C}([0,T])\,:\,\phi(0)=1 \text{ and }\vert\phi(t)\vert\leq M\text{ for all } t \in [0,T]\},$$
where $M$ is some positive real number. In fact, the number $T>0$ and $M>0$ are chosen so as to ensure that the operator $T_f$ maps the set $K$ into itself, which is a standard requirement in fixed-points theorems. 

It is clear that the function $t\to T_f\phi(t)$ is continuous for all $0<t\leq T$. {Moreover, it can be proved that this mapping is also continuous at $t=0$, that is, $T_f\phi(0)=\displaystyle\lim_{t\to0^+}T_f\phi(t)$}. Firstly, from its definition in (\ref{eq:definition-operator}) we have $T_f\phi(0)=1$. On the other hand, since the weight function $\omega$ is positive and $\phi\in K$, for all $t\in(0,T]$ we have
\begin{align}\label{eq:bounded-f}
\vert f(t,\phi(t))\vert
&=\left\vert -(1-\lambda)\phi(t) + (1-\lambda)\frac{1}{\displaystyle\int_0^t \omega(u)\,du}\int_0^t \omega(t-u)\,\phi(u)\,du\right\vert\nonumber\\ 
&\leq (1-\lambda)\vert\phi(t)\vert  + (1-\lambda)\frac{1}{\displaystyle\int_0^t \omega(u)\,du}\int_0^t \omega(t-u)\,\vert \phi(u)\vert \,du\nonumber\\
&\leq (1-\lambda)M  + (1-\lambda)M\left[\frac{1}{\displaystyle\int_0^t \omega(u)\,du}\int_0^t \omega(t-u)\,\,du\right]=2(1-\lambda)M,
\end{align}
which implies 
$$\lim_{t\to0^+}\left\vert\enspace\int_0^tf(s,\phi(s))\,ds\enspace\right\vert\leq\lim_{t\to0^+}\int_0^t \vert f(s,\, \phi(s))\vert\,ds\leq \lim_{t\to0^+}2(1-\lambda)Mt=0.$$
Therefore, $T_f\phi(0)=1=\displaystyle\lim_{t\to0^+}T_f\phi(t)$. {In fact, it can also be proved that the function $T_f\phi(t)$ is Lipschitz continuous on $[0,T]$.}

Since $T_f\phi(0)=1$, in order to prove that $T_f$ maps the set $K$ into itself, it suffices to show that $\vert T_f\phi(t)\vert \leq M$ for all $t\in[0,T]$. Using the boundedness in equation (\ref{eq:bounded-f}), for all $t\in [0,T]$ we have
$$\vert T_f\phi(t)\vert=\left \vert 1+\int_0^t f(s,\, \phi(s))\,ds\right\vert\leq 1+\int_0^t \vert f(s,\, \phi(s))\vert\,ds\leq 1+2(1-\lambda)MT.$$
At this point, if we take $M=2$ and $T\leq 1$ sufficiently small, then $\vert T_f\phi(t)\vert\leq M$ for all $t\in[0,T]$. Hence, $T_f$ maps $K$ into itself.\\

Now, it is clear that $K$ is a closed, bounded and convex subset of the Banach space $\mathcal{C}([0,T])$. In order to apply the Schauder's fixed-point theorem, we need to show that the operator $T_f$ is a compact mapping. To this aim, by the Arzelà-Ascoli Theorem (see \cite[Chapter 4, Section 4.5, Theorem 4.25]{Brezis}), it suffices to show that $T_f(K)=\{T_f(\phi)\,:\,\phi\in K\}$ is a uniformly equicontinuous set of $\mathcal{C}([0,T])$. To this aim, if we take $t_1,t_2\in [0,T]$ with $0\leq t_1\leq t_2\leq T$, for all $\phi\in K$ the boundedness in equation (\ref{eq:bounded-f}) implies 
\begin{equation}\label{eq:equconinuous}
\vert\enspace(T_f\phi)(t_1)-(T_f\phi)(t_2)\enspace\vert
\leq\int_{t_1}^{t_2} \vert f(s,\, \phi(s))\vert\,ds\leq 2(1-\lambda)M(t_2-t_1).
\end{equation}
Therefore, with equation (\ref{eq:equconinuous}) at hand, we conclude that the set $T_f(K)$ is a uniformly equicontinuous set of $\mathcal{C}([0,T])$ and therefore $T_f(K)=\{T_f(\phi)\,:\,\phi\in K\}$ is a compact set. As an application of the Schauder's fixed-point theorem, there exists a function $\varphi\in\mathcal{C}([0,T])$ (not necessarily unique) such that $T_f\varphi(t)=\varphi(t)$ for all $t\in[0,T]$, that is, a solution for the IVP $\dot{y}(t)=f(t,\,y(t))$ with $y(0)=1$. 

Now, the next step is to prove that this solution is unique. To this aim, we use similar arguments to the one used in the proof of Theorem \ref{thm:existence-uniqueness}. A straightforward observation is that the constant function $y(t)=1$ with $t\in[0,T]$ is a solution for the IVP in equation (\ref{eq:scalar-diff}). Let $y$ be a function defined on $[0,T]$ which is another solution of the IVP in equation (\ref{eq:scalar-diff}) and let us define the difference function $z(t)=y(t)-1$, for $t\in[0,T]$. We need to prove that $z(t)=0$ for all $t\in[0,T]$. By linearity, it is clear that the function $z(t)$ satisfies the integro-differential equation
\begin{equation}\label{eq:scalar-unique}
 \dot{z}(t)+(1-\lambda)x(t)=(1-\lambda)\left[\frac{1}{\displaystyle\int_0^t \omega(u)\,du}\int_0^t \omega(t-u)\,z(u)\,du\right],
\end{equation}
with the initial condition $z(0)=y(0)-1=0$ (since $y(0)=1$). Now, if we multiply equation (\ref{eq:scalar-unique}) by the integrating factor $e^{(1-\lambda)t}$ we have
\begin{align*}
e^{(1-\lambda)t}\,\dot{z}(t)+(1-\lambda)e^{(1-\lambda)t}z(t)&=(1-\lambda)e^{(1-\lambda)t}\left[\frac{1}{\displaystyle\int_0^t \omega(u)\,du}\int_0^t \omega(t-u)\,z(u)\,du\right],\\
\frac{d}{dt}\left[e^{(1-\lambda)t}z(t)\right]&=(1-\lambda)e^{(1-\lambda)t}\left[\frac{1}{\displaystyle\int_0^t \omega(u)\,du}\int_0^t \omega(t-u)\,z(u)\,du\right],\\
\end{align*}
Now, since $z(0)=0$, integrating both sides from $0$ to $t$ in the previous equation and multiplying by the factor $e^{-(1-\lambda)t}$ we get the equality
\begin{equation}\label{eq:isolate-x}
z(t)=(1-\lambda)\int_{0}^te^{-(1-\lambda)(t-s)}\left[\frac{1}{\displaystyle\int_0^s \omega(u)\,du}\int_0^s \omega(t-u)\,z(u)\,du\right]\, ds.  
\end{equation}
Now, let denote by $M(t)=\sup_{u\in[0,t]}\vert\,z(u)\,\vert$. Since $\omega(t)\geq0$ for all $t\geq0$ and $\frac{1}{\int_0^s \omega(u)\, du}\int_0^s \omega(s-u)\,\, du=1$, we have
\begin{equation}\label{eq:uniqueness-bound}
\left\vert \frac{1}{\displaystyle\int_0^s \omega(u)\,du}\int_0^s \omega(s-u)\,x(u)\,du\right\vert 
\leq  \frac{1}{\displaystyle\int_0^s \omega(u)\,du}\int_0^s \omega(s-u)\,\vert\, z(u)\,\vert\,du
\leq M(s).
\end{equation}
At this point, since the damping factor $\lambda\in(0,1)$, we can take absolute values in equation (\ref{eq:isolate-x}) and use the boundedness in equation (\ref{eq:uniqueness-bound}) to get the inequality
\begin{equation}\label{eq:inequality-uniqueness}
\vert\, z(t)\,\vert  \leq  (1-\lambda)\int_{0}^te^{-(1-\lambda)(t-s)}M(s)\,ds.
\end{equation}
Now, since $[0,T]$ is a compact interval and $e^{-(1-\lambda)(t-s)}$ is a continuous function, let $C=\max_{0\leq s\leq t\leq T}e^{-(1-\lambda)(t-s)}$. With this constant at hand, we get
\begin{equation}\label{eq:inequality-constant}
\vert\, z(t) \,\vert  \leq  (1-\lambda)\int_{0}^te^{-(1-\lambda)(t-s)}M(s)\,ds\leq (1-\lambda)\int_{0}^tC\,M(s)\,ds.
\end{equation}
Notice that the right-hand side is the integral of the non-negative function $M(s)$, which implies that it is non-decreasing with respect to $t$. Moreover, since the integral of $M(s)$ acts as an upper bound for $\vert\, z(t)\,\vert$ at every instant $t$ and it never decreases, it also bounds the maximum value of $\vert\, z(u)\,\vert$ over the interval $[0,t]$. Therefore, we can replace the left-hand side $\vert\, z(u)\,\vert$ by its maximum obtaining the inequality
\begin{equation}\label{eq:inequality-gronwall}
 M(t) \leq (1-\lambda)\int_{0}^tC\,M(s)\,ds,\quad 0\leq t\leq T.
\end{equation}
We just need to use Gronwall's inequality above (with $\alpha=0$) to conclude that $M(t)=0$, that is, $z(t)=0$, for all $t\in [0,T]$.\\

Finally, the solution can be extended beyond the interval $[0,T]$ as follows: Since $\varphi(t)=1$ is the solution of the IVP on $[0,T]$, let us propose the IVP 
\begin{equation}\label{eq:extension-solution}
\dot{y}(t)=f(t,\,y(t)),\quad t\geq T,\quad \text{with the initial condition}\quad  y(T)=\varphi(T)=1.   
\end{equation} 
Now, by considering the Banach space $\mathcal{C}([T,S])$ for some $0<T<S$ small enough, using similar arguments as above, there exists a unique function $\varphi_1\in\mathcal{C}([T,S])$ to be the solution to the IVP in equation (\ref{eq:extension-solution}). At this point, since the constant function $y(t)=1$ with $t\geq0$ (in particular, for $t\in[T,S]$) is a solution for the same IVP in equation (\ref{eq:extension-solution}), we conclude that $\varphi_1(t)=1$ for all $t\in[T,S]$. Therefore, the solution $\varphi(t)=1$ can be extended from $[0,T]$ to $[0,S]$. Since the solution for the IVP on the corresponding interval remains bounded by the arguments above, the function $ \varphi(t)=1$ admits an extension to the maximal interval $[0,\infty)$, and the proof that $y(s)=1$ for $s\geq0$ is the unique solution of the scalar integro-differential equation (\ref{eq:scalar-diff}) with $y(0)=1$ is completed.  

\end{proof}

In addition, for the solution to the IVP described in equation (\ref{eq:probability-vector-1}), we establish a conservation law, namely that $\mathbf{x}(t)$ is a probability vector for all $t\geq0$ if and only if the sum of the components of its derivative $\dot{\mathbf{x}}(t)$ is zero for all $t\geq0$. More precisely,

\begin{proposition}\label{prop:conservation-law}
Let $A$ be a non-negative square matrix of order $N$ associated with a directed graph $\mathcal{G}$, and let $P_A$ denote its row-normalization described in Section~\ref{Section:Notation}. Let $\mathbf{x}_0\in\mathbb{R}^{N\times1}$ be a probability vector (i.e., $\mathbf{x}_0\ge0$ and $\mathbf{x}_0^\top\mathbf{e}=1$), and let $\lambda\in(0,1)$ be a fixed damping factor. Suppose that $\omega:[0,\infty)\to[0,\infty)$ is a locally integrable function. If $\mathbf{x}(t)$ denotes the unique solution of the IVP
\begin{equation}\label{eq:conservation-law}
\dot{\mathbf{x}}^\top(t)=\mathbf{x}^\top(t)(\lambda P_A-I_N)+(1-\lambda)\left[\frac{1}{\displaystyle\int_0^t \omega(u)\,du}\int_0^t \omega(t-u)\,\mathbf{x}^\top (u)\,du\right],\quad \mathbf{x}(0)=\mathbf{x}_0,
\end{equation}
then, $\mathbf{x}(t)^\top\,\mathbf{e}=1$ for all $t\geq0$ if and only if $\dot{\mathbf{x}}(t)^\top\,\mathbf{e}=0$ for all $t\geq0$.
\end{proposition}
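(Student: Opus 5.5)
The statement is an equivalence between a conservation law for $y(t)=\mathbf{x}^\top(t)\,\mathbf{e}$ and the vanishing of its derivative. I would prove both implications directly from the regularity of the solution and the initial condition $\mathbf{x}_0^\top\mathbf{e}=1$. First I would record that, by Theorem \ref{thm:existence-uniqueness}, the unique solution $\mathbf{x}$ is differentiable on $[0,\infty)$ (one-sided at $t=0$). Since $\mathbf{e}$ is a constant vector, the scalar function $y(t)=\mathbf{x}^\top(t)\,\mathbf{e}$ is then differentiable, with $\dot y(t)=\dot{\mathbf{x}}^\top(t)\,\mathbf{e}$ by linearity of differentiation.

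For the forward implication, I assume $\mathbf{x}^\top(t)\,\mathbf{e}=1$ for all $t\geq0$. Then $y$ is constant on $[0,\infty)$, so $\dot y(t)=0$, that is, $\dot{\mathbf{x}}^\top(t)\,\mathbf{e}=0$ for every $t\geq 0$.

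For the converse, I assume $\dot{\mathbf{x}}^\top(t)\,\mathbf{e}=0$ for all $t\geq0$, so $\dot y\equiv0$. By the mean value theorem, or equivalently the fundamental theorem of calculus applied to the continuous derivative, $y(t)=y(0)+\int_0^t\dot y(s)\,ds=y(0)$. Since $\mathbf{x}_0$ is a probability vector, $y(0)=\mathbf{x}_0^\top\mathbf{e}=1$, and hence $\mathbf{x}^\top(t)\,\mathbf{e}=1$ for all $t\ge0$.

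Finally, I would remark that Theorem \ref{thm:identity-to-one} shows the left-hand condition actually holds. So both sides of the equivalence are true, and one can verify $\dot{\mathbf{x}}^\top(t)\,\mathbf{e}=0$ directly. Multiplying (\ref{eq:conservation-law}) by $\mathbf{e}$ and using $P_A\mathbf{e}=\mathbf{e}$ together with the normalization of the kernel gives
\[
\dot{\mathbf{x}}^\top(t)\,\mathbf{e}=-(1-\lambda)+(1-\lambda)\cdot 1=0,
\]
exactly as in (\ref{eq:scalar-diff}).

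The only point requiring care is the behaviour at $t=0$. There the memory term is defined to be $0$ by the convention adopted in the proof of Theorem \ref{thm:existence-uniqueness}, so the derivative at $t=0$ must be understood as one-sided. I would handle this by noting that $y$ is continuous on $[0,\infty)$ and differentiable on $(0,\infty)$. A constant-derivative argument on $(0,t]$ plus continuity at $0$ then suffices, so the value assigned at $t=0$ plays no role in the equivalence.
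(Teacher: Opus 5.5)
Your proof is correct and follows essentially the paper's route: differentiate the constant $\mathbf{x}^\top(t)\mathbf{e}$ for one direction, and use the mean value theorem on $[0,t_0]$ with $\mathbf{x}_0^\top\mathbf{e}=1$ for the other. Your caution at $t=0$ is also justified, since under the convention that the memory term vanishes there the right-hand side sums to $-(1-\lambda)$, so $\dot{\mathbf{x}}(0)$ must be read as the genuine one-sided derivative $\lambda\mathbf{x}_0^\top(P_A-I_N)$.

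The one substantive difference is in your favour. You apply the mean value theorem to the scalar $y(t)=\mathbf{x}^\top(t)\mathbf{e}$. The paper applies it to the vector-valued $\mathbf{x}^\top$ and then multiplies by $\mathbf{e}$, which is not legitimate: the equality form of the theorem fails for vector-valued maps, and applying it componentwise gives a different $\xi$ for each coordinate.
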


\begin{proof}
Let $t_0\geq0$ be a fixed instant. From the definition of the derivative we have
$$\dot{\mathbf{x}}^\top(t_0)=\lim_{h\to 0}\frac{\mathbf{x}^\top(t_0+h)-\mathbf{x}^\top(t_0)}{h}.$$
Assuming that $\mathbf{x}^\top(t)\,\mathbf{e}=1$ for all $t\geq0$, by multiplying the above equation on the right by the vector $\mathbf{e}=(1,\dots,1)^\top$ we get
$$\dot{\mathbf{x}}^\top(t_0)\,\mathbf{e}=\left(\lim_{h\to 0}\frac{\mathbf{x}^\top(t_0+h)-\mathbf{x}^\top(t_0)}{h}\right)\,\mathbf{e}=\lim_{h\to 0}\frac{\mathbf{x}^\top(t_0+h)\,\mathbf{e}-\mathbf{x}^\top(t_0)\,\mathbf{e}}{h}=0.$$
Since this argument is valid for every $t_0\geq0$, we conclude that $\dot{\mathbf{x}}^\top(t)\,\mathbf{e}=0$ for all $t\geq0$.\\

Conversely, suppose that $\dot{\mathbf{x}}^\top(t)\,\mathbf{e}=0$ for all $t\geq0$ and let us to consider a fixed instant $t_0\geq0$. Since $\mathbf{x}$ is a continuous and differentiable function (by its proper definition by equation (\ref{eq:conservation-law}), by the Mean Value Theorem there exists a $\xi\in(0,t_0)$ such that
$$\dot{\mathbf{x}}^\top(\xi)=\frac{\mathbf{x}^\top(t_0)-\mathbf{x}^\top(0)}{t_0}.$$
By multiplying the above equation on the right by the vector $\mathbf{e}=(1,\dots,1)^\top$ and using the hypothesis, we have
$$0=\dot{\mathbf{x}}^\top(\xi)\,\mathbf{e}=\left(\frac{\mathbf{x}^\top(t_0)-\mathbf{x}^\top(0)}{t_0}\right)\mathbf{e}=\frac{\mathbf{x}^\top(t_0)\,\mathbf{e}-\mathbf{x}^\top(0)\,\mathbf{e}}{s_0}=\frac{\mathbf{x}^\top(t_0)\,\mathbf{e}-1}{t_0}.$$
Therefore, $\mathbf{x}^\top(t_0)\,\mathbf{e}=1$. Since this argument is valid for every $t_0>0$, we conclude that $\mathbf{x}^\top(t)\,\mathbf{e}=1$ for all $t\geq0$.\\
\end{proof}

In the next section, we analyze the asymptotic behavior of the solution $\mathbf{x}(t)$ of the dynamical system for the PageRank with time-dependent memory  on a strongly connected digraph $\mathcal{G}$, considering several classes of locally integrable weight functions $\omega:[0,\infty)\to[0,\infty)$, damping factors $\lambda\in(0,1)$, and initial conditions $\mathbf{x}_0\in \mathbb{R}^{N \times 1}$.

\section{The asymptotic behavior of the solution}
\label{Section:Asymptotic}

In this section we investigate the asymptotic behavior of the solution for the integro-differential equation with the initial problem 
\begin{equation}\label{eq:IVP-asymptotic}
\dot{\mathbf{x}}^\top(t)
=\mathbf{x}^\top(t)(\lambda P_A-I_N)+(1-\lambda)\left[\frac{1}{\displaystyle\int_0^t \omega(u)\,du}\int_0^t \omega(t-u)\,\mathbf{x}^\top (u)\,du\right],\quad \mathbf{x}(0)=\mathbf{x}_0,    
\end{equation}
where $P_A$ is the row-normalization of the adjacency matrix $A$ associated with the directed graph $\mathcal{G}$, $\lambda\in(0,1)$ is the damping factor, $\omega:[0,\infty)\to[0,\infty)$ is a locally integrable weight function (sometimes called memory function), and the probability vector $\mathbf{x}_0\in\mathbb{R}^{N\times1}$ as a initial condition. \\

For the case where $A$ is an irreducible square matrix of order $N$, we show that the solution $\mathbf{x}(t)$ of the IVP proposed in equation (\ref{eq:IVP-asymptotic}) converges asymptotically (i.e., as $t\to\infty$) to the left-hand Perron vector (see the definition below) of the row-normalized matrix $P_A$, independently of the damping factor $\lambda\in(0,1)$ and the initial condition $\mathbf{x}_0\in\mathbb{R}^{N\times1}$ (see Theorem \ref{thm:delta-case} and Theorem \ref{thm:exponential-case}). Additionally, for a particular oscillatory memory function $\omega$, we prove that the solution $\mathbf{x}(t)$ of the IVP proposed in equation (\ref{eq:IVP-asymptotic}) is asymptotically periodic (see Theorem \ref{thm:oscillatory-case}). Moreover, in this case, this asymptotically periodic solution exhibits a strong dependence on the initial condition $\mathbf{x}_0\in\mathbb{R}^{N\times1}$ and differs significantly from the left-hand Perron vector of the matrix $P_A$, in contrast to the previous cases. Before stating these theorems, let us recall some useful definitions and results.\\

A square matrix $A$ of order $N\geq2$ is said to be \em reducible \em if there exists a permutation matrix $R$ such
that 
\begin{equation*}
R^TAR=\left(\begin{array}{c|c} 
    X & Y \\ 
    \hline
    \bold{0} & Z \end{array}\right),
\end{equation*}
where $X$ and $Z$ are both square matrices (see \cite[Section 4.4]{Meyer}). Otherwise, $A$ is said to be \em irreducible\em . A directed graph $\mathcal{G}=(V,E)$ of $N$ nodes is \em strongly connected \em if for any ordered pair $(i,j)\in V\times V$, $1\leq i,j\leq N$, there exists a directed path of edges in $E$ leading from node $i$ to node $j$ (see \cite[Section 1]{Varga}). In  fact, the following result establishes an equivalence between the irreducibility  of a matrix in terms of its strong connectivity. More precisely,

\begin{lem}(\cite[Theorem 1.17]{Varga}){\label{lemma:strongly-connected-irreducible}}
An $N\times N$  matrix $A$ is irreducible if and only if its directed graph $\mathcal{G}$ associated to $A$ is strongly connected.\\
\end{lem}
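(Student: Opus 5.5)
We prove both implications of Lemma~\ref{lemma:strongly-connected-irreducible} by contraposition, writing out what reducibility means combinatorially. The dictionary is this. Conjugating $A$ by a permutation matrix $R$ only relabels the nodes, so $R^\top A R$ is the adjacency pattern of a graph isomorphic to $\mathcal{G}$. A block form with a zero lower-left block therefore says the following. After relabelling, the node set splits into two nonempty sets: $S$, the indices of the leading block $X$, and $S^c$, the indices of the trailing block $Z$. The zero block says that $a_{ij}=0$ for every $i\in S^c$ and $j\in S$, so no edge goes from $S^c$ into $S$.

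\emph{Reducible $\Rightarrow$ not strongly connected.} Suppose $R^\top A R$ has this block form. Take $S$ and $S^c$ as above and choose $i\in S^c$, $j\in S$. Any directed path from $i$ to $j$ starts in $S^c$ and ends in $S$, so at some step it uses an edge from a node of $S^c$ to a node of $S$. No such edge exists. Hence there is no path from $i$ to $j$, and $\mathcal{G}$ is not strongly connected.

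\emph{Not strongly connected $\Rightarrow$ reducible.} Suppose there are nodes $i_0,j_0$ with no directed path from $i_0$ to $j_0$. Let $T$ be the set of nodes reachable from $i_0$, with $i_0$ itself included via the trivial path. Then $T$ is nonempty, and $j_0\notin T$, so $T^c$ is nonempty as well. $T$ is closed under outgoing edges: if $k\in T$ and $a_{kl}\neq 0$, then $l\in T$. Equivalently, $a_{kl}=0$ for all $k\in T$ and $l\in T^c$. Choose a permutation $R$ that lists the nodes of $T^c$ first and those of $T$ last. In $R^\top A R$ the lower-left block, with rows indexed by $T$ and columns by $T^c$, is then zero. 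Both diagonal blocks are square, so $A$ is reducible.

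The only point that needs care is matching the orientation convention, namely which index set goes in the leading position relative to the zero block. Once the correct set is placed first, the argument is straightforward. For $N=1$ one adopts the usual convention, and this case is irrelevant here. The same argument applies verbatim to $P_A=D^{-1}A$, because $P_A$ has the same zero pattern as $A$ when there are no dangling nodes. Consequently $P_A$ is irreducible exactly when $\mathcal{G}$ is strongly connected, which is how the lemma will be used later in the paper.
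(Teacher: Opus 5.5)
Your proof is correct. The paper gives no proof of this lemma and simply cites Varga, so your reachability-set argument, which is the standard one, supplies what the paper takes on authority.

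Your orientation bookkeeping is right. Writing $Re_k=e_{\sigma(k)}$, one has $(R^\top AR)_{kl}=a_{\sigma(k)\sigma(l)}$. A zero lower-left block therefore means exactly that no edge runs from the trailing index set to the leading one, and listing $T^c$ first places the rows of $T$ against the columns of $T^c$.

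One small point: the paper's definition of strong connectivity quantifies over all ordered pairs, including $i=j$. Since you put $i_0\in T$ via the trivial path, you tacitly need a failing pair with $i_0\neq j_0$. This is harmless for $N\ge 2$, because paths $i\to j$ and $j\to i$ between distinct nodes concatenate to a closed path through $i$.

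Your closing observation is what the self-contained route buys over the citation. The paper assumes $A$ irreducible but then applies Perron--Frobenius to $P_A$, silently using that $P_A=D^{-1}A$ has the same zero pattern; you make that step explicit. You could add that for $N\ge 2$ strong connectivity already rules out dangling nodes, so $D$ is invertible and your proviso holds automatically.
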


Notice that a non-negative matrix $A\neq0$ is irreducible if and only if for every $1\leq i,j\leq N$, there exists a positive integer $k\geq 1$ such that the $(i,j)$-entry of $A^k$ is positive (see \cite[Chapter 2, Theorem 2.1]{BemannPlemmons}).\\

In the context of the Perron-Frobenius theory, if $A\in\R^{N\times N}$ is a non-negative and irreducible matrix, there exists a unique vector $\bold{q}\in\R^{N\times1}$, called the \textit{left-hand Perron vector of the matrix $A$}, such that 
\begin{equation*}
    \bold{q}^TA=r\bold{q}^T\,\qquad\text{with}\qquad \, \bold{q}>0 \qquad\text{and}\qquad \bold{q}^T\bold{e}=1,
\end{equation*}
where $r=\rho(A)$ is the spectral radius of the matrix $A$ (see \cite[Section 8.3]{Meyer}). \\

In the following subsections, we illustrate through numerical simulations the asymptotic behavior of the components of the solution $\mathbf{x}(t)$ of the IVP in equation \eqref{eq:IVP-asymptotic}, as well as its dependence on the damping factor and the initial condition. These simulations will be made for several classes of memory functions such as the Dirac delta function, the exponential decay function, and exponentially oscillatory functions. For this purpose,  two strongly connected directed graphs with three nodes will be considered: the $3$-cycle graph $\mathcal{G}_1$ in Figure \ref{fig:numerical-examples}(a) and the graph $\mathcal{G}_2$ in Figure \ref{fig:numerical-examples}(b), which has bidirectional edges between nodes $1$-$2$ and $2$–$3$. 

 \begin{figure}[H]
        \centering
        \includegraphics[width=0.6\textwidth]{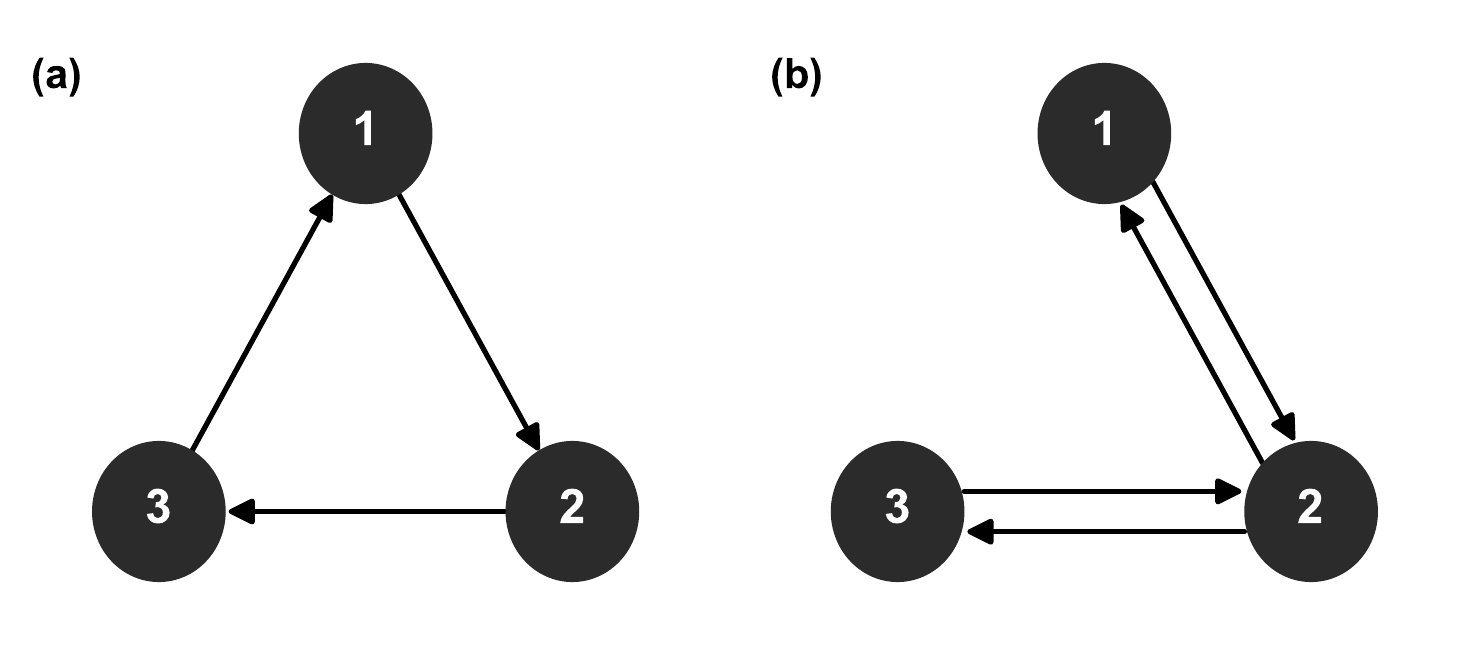}
        \caption{The two strongly connected digraph used for numerical simulations in the following subsections.}
        \label{fig:numerical-examples}
    \end{figure}

It is worth mentioning that, for both the $3$-cycle in Figure \ref{fig:numerical-examples}(a) and the digraph which has bidirectional edges between nodes 1-2 and 2–3 in Figure \ref{fig:numerical-examples}(b), we tested several values of the damping factor and different initial conditions, and the same asymptotic behavior was observed in all cases.


\subsection[]{The Dirac delta memory function $\omega(t)=\delta_0(t)$, for $t\geq0$.}\label{subsect:asymptotic-delta}

In this subsection we will be considering as the weight function $\omega:[0,\infty)\to[0,\infty)$  the Dirac delta function, that is, $\omega(t)=\delta_0(t)$ for $t\geq0$. This choice corresponds to the case in which the system has no distributed memory effects. Indeed, the Dirac delta assigns all the weight to the present instant $t$, so that the dynamics governed at time $t$ depend exclusively on the current state of the system and not on any weighted contribution from its past states. 

Consequently, equation (\ref{eq:IVP-asymptotic}) reduces to a purely instantaneous interaction model, where no temporal averaging of the previous values of the vector $\mathbf{x}$ is involved. In this sense, the weight function $\omega(t)=\delta_0(t)$, for $t\geq0$, may be interpreted as an instantaneous memory kernel, since the system only retains information from the current time instant.\\

Notice that the Dirac delta function $\delta_0(t)$ can be considered as the pointwise limit of the characteristic functions $\chi_{\raisebox{-.5ex}{$\scriptstyle [0,\varepsilon]$}}(t)$ when $\varepsilon\to0^+$. Therefore, in this case the IVP described in equation (\ref{eq:IVP-asymptotic}) turns into
\begin{align}\label{eq:delta-case-numerical}
\dot{\mathbf{x}}^\top(t)=\mathbf{x}^\top (t)(\lambda P_A-I_N)+(1-\lambda)\left[\frac{1}{\varepsilon}\int_{t-\varepsilon}^t \mathbf{x}^\top(u)\,du\right],\quad \mathbf{x}(0)=\mathbf{x}_0,
\end{align}
where $P_A$ is the row-normalization of the adjacency matrix $A$ associated with the directed graph $\mathcal{G}$, $\lambda\in(0,1)$ is the damping factor, and the probability vector $\mathbf{x}_0\in\mathbb{R}^{N\times1}$ is the initial condition.\\

In order to investigate the asymptotic behavior of the solution for the IVP in equation (\ref{eq:delta-case-numerical}) above 
we perform numerical simulations for sufficiently small values of $\varepsilon>0$. As a first step, we consider the strongly connected directed 3-cycle graph shown in Figure \ref{fig:numerical-delta1}(a). To illustrate the behavior of the components of the solution $\mathbf{x}(t)=(\mathbf{x}_1(t),\mathbf{x}_2(t),\mathbf{x}_3(t))$ of the IVP in equation \eqref{eq:exponential-case}, we compute the corresponding numerical solution. In Figure \ref{fig:numerical-delta1}(b) we consider the damping factor $\lambda=0.5$ and the initial condition $\mathbf{x}_0^\top=(0.2,0.1,0.7)$, while in Figure \ref{fig:numerical-delta1}(c) we take  $\lambda=0.85$ and $\mathbf{x}_0^\top=(0.1,0.4,0.5)$.
    \begin{figure}[H]
        \centering
        \includegraphics[width=1\textwidth]{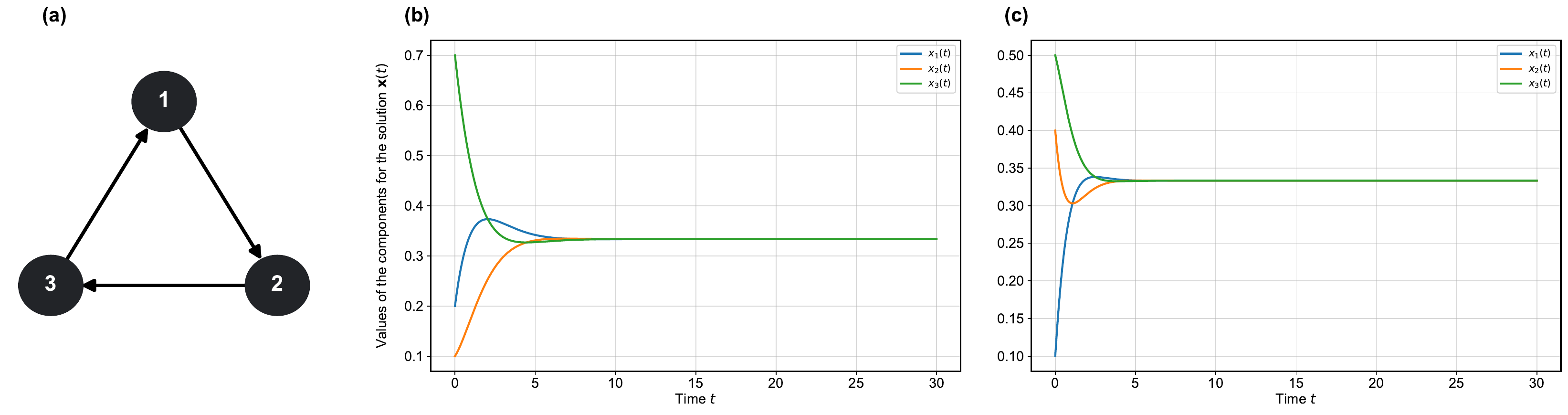}
        \caption{A numerical solution for the IVP in equation (\ref{eq:delta-case-numerical}) for the 3-cycle digraph and $\omega(t)=\delta_0 (t)$, for $t\geq0$.}
         \label{fig:numerical-delta1}
    \end{figure}
A straightforward observation from the numerical simulations shown in Figure \ref{fig:numerical-delta1} is that the solution $\mathbf{x}(t)$ is asymptotically stable. Moreover, it seems that the solution $\mathbf{x}(t)$ converges to the constant vector $(1/3,1/3,1/3)^\top\in\R^{3\times1}$ which is, in fact, the left-hand Perron vector of the row-normalization matrix $P_A$ associated with the adjacency matrix $A$ of the 3-cycle digraph in Figure \ref{fig:numerical-delta1}(a). \\

As a second step, we examine whether this phenomenon is specific to the $3$-cycle graph or also occurs in other strongly connected digraph, in Figure \ref{fig:numerical-delta2}(a) we consider a digraph which has bidirectional edges between nodes $1$-$2$ and $2$–$3$. To illustrate the behavior of the components of the solution $\mathbf{x}(t)=(\mathbf{x}_1(t),\mathbf{x}_2(t),\mathbf{x}_3(t))$ of the IVP in equation \eqref{eq:exponential-case}, we compute the corresponding numerical solution. As above, in Figure \ref{fig:numerical-delta2}(b) we consider the damping factor $\lambda=0.5$ and the initial condition $\mathbf{x}_0^\top=(0.2,0.1,0.7)$, while in Figure \ref{fig:numerical-delta2}(c) we consider $\lambda=0.85$ and $\mathbf{x}_0^\top=(0.8,0.1,0.1)$.
\begin{figure}[H]
        \centering
        \includegraphics[width=1\linewidth]{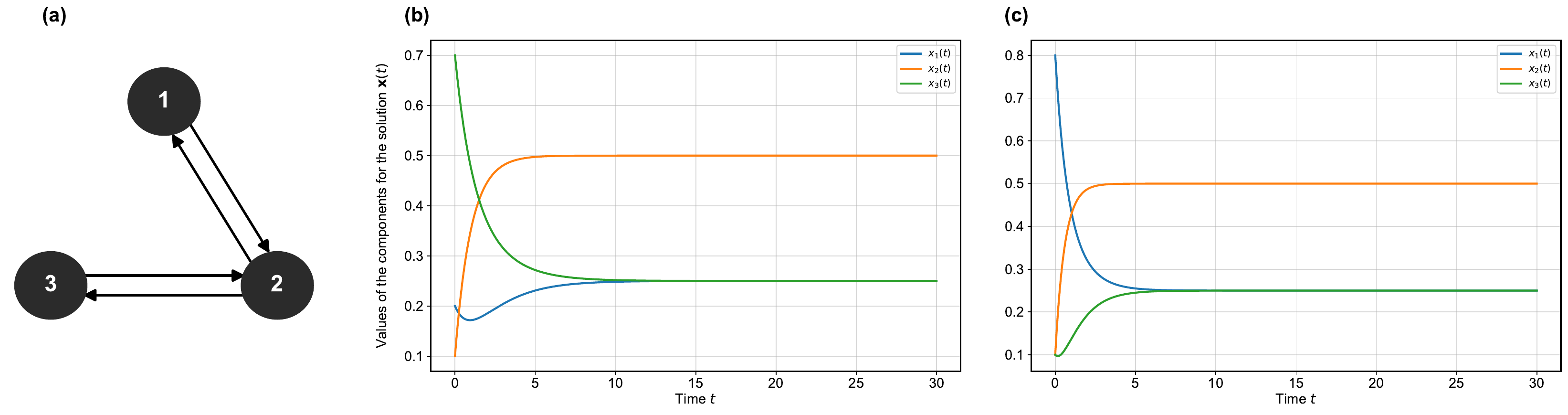}
        \caption{A numerical solution for the IVP in equation (\ref{eq:delta-case-numerical}) for the bidirectional digraph and $\omega(t)=\delta_0 (t)$, for $t\geq0$.}
        \label{fig:numerical-delta2}
    \end{figure}
Similarly to the $3$-cycle digraph, the numerical simulations shown in Figure \ref{fig:numerical-delta2} suggest that the solution $\mathbf{x}(t)$ is asymptotically stable and converges to the vector $(0.25,0.50,0.25)^\top\in\R^{3\times1}$. As before, this vector coincides with the left-hand Perron vector of the row-normalization matrix $P_A$ associated with the adjacency matrix $A$ for the digraph in Figure \ref{fig:numerical-delta2}(a). \\

In light of this numerical simulation 
we  conjecture that the solution of the IVP asymptotically stabilizes at the left-hand Perron vector of the matrix $P_A$, when the Dirac delta memory function is considered; in fact the conjecture is proved in Theorem \ref{thm:delta-case} below, where an explicit solution of the IVP in equation (\ref{eq:IVP-asymptotic}) is obtained; also we analyze its asymptotic behavior as $t\to\infty$. An important feature of this result is that the asymptotic limit is independent of both the damping factor $\lambda\in(0,1)$ and the choice of a probability vector $\mathbf{x}_0\in\mathbb{R}^{N\times1}$ as the initial condition. More precisely,\\


\begin{thm}\label{thm:delta-case}
Let $A$ be a non-negative irreducible square matrix of order $N$ of a directed graph $\mathcal{G}$ and let $P_A$ be its row-normalization described in Section \ref{Section:Notation}. Let $\omega(t)=\delta_0 (t)$, for $t\geq0$, be the non-negative and locally integrable function weight function. Then, for any personalization vector 
$\mathbf{x}_0\in\R^{N\times1}$ ($\mathbf{x}_0\geq0$ and $\mathbf{x}_0^\top\mathbf{e}=1$) and for any damping factor $\lambda\in(0,1)$, the solution $\mathbf{x}$ of the Initial Value Problem (IVP) given by
\begin{equation}\label{eq:IVP-delta-original}
\dot{\mathbf{x}}^\top(t)=\mathbf{x}^\top(t)(\lambda P_A-I_N)+(1-\lambda)\left[\frac{1}{\displaystyle\int_0^t \omega(u)\,du}\int_0^t \omega(t-u)\,\mathbf{x}^\top(u)\,du\right],\quad \mathbf{x}(0)=\mathbf{x}_0,
\end{equation}
converges asymptotically to the left-hand Perron vector $\mathbf{c}\in\R^{N\times 1}$ of the matrix $P_A$.
\end{thm}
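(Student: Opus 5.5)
With $\omega=\delta_0$, read as the limit of $\chi_{[0,\varepsilon]}/\varepsilon$ in the sense of (\ref{eq:delta-case-numerical}), the normalized memory term $\frac{1}{\int_0^t\omega}\int_0^t\omega(t-u)\mathbf{x}^\top(u)\,du$ collapses to the present value $\mathbf{x}^\top(t)$. The IVP (\ref{eq:IVP-delta-original}) therefore becomes the linear, constant-coefficient system
\begin{equation*}
\dot{\mathbf{x}}^\top(t)=\mathbf{x}^\top(t)(\lambda P_A-I_N)+(1-\lambda)\mathbf{x}^\top(t)=\lambda\,\mathbf{x}^\top(t)(P_A-I_N),\qquad \mathbf{x}(0)=\mathbf{x}_0 .
\end{equation*}
Its solution is explicitly $\mathbf{x}^\top(t)=\mathbf{x}_0^\top e^{\lambda(P_A-I_N)t}$. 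The proof then reduces to analysing the asymptotic behaviour of this matrix exponential via Perron--Frobenius theory.

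The key steps are as follows. First, $P_A$ is row-stochastic, and by Lemma \ref{lemma:strongly-connected-irreducible} it is irreducible, since it has the same zero pattern as $A$. Hence $\rho(P_A)=1$ is a simple eigenvalue, with right eigenvector $\mathbf{e}$ and left Perron vector $\mathbf{c}>0$, $\mathbf{c}^\top\mathbf{e}=1$. Every other eigenvalue $\mu$ satisfies $|\mu|\le 1$ and $\mu\neq1$, so $\operatorname{Re}(\mu)-1<0$; this includes the case $|\mu|=1$ when $P_A$ is periodic, as for the 3-cycle. Second, write $P_A=\mathbf{e}\mathbf{c}^\top+B$ with $B=P_A-\mathbf{e}\mathbf{c}^\top$. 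Because $\mathbf{e}\mathbf{c}^\top$ is the spectral projector onto the Perron eigenspace, $B\,\mathbf{e}\mathbf{c}^\top=\mathbf{e}\mathbf{c}^\top B=0$, and the spectrum of $B$ is that of $P_A$ with the eigenvalue $1$ replaced by $0$. Third, the matrices $\mathbf{e}\mathbf{c}^\top$ and $B-(I_N-\mathbf{e}\mathbf{c}^\top)$ commute and have product $0$, so
\begin{equation*}
e^{\lambda(P_A-I_N)t}=\mathbf{e}\mathbf{c}^\top+e^{\lambda(B-I_N)t}(I_N-\mathbf{e}\mathbf{c}^\top).
\end{equation*}
All eigenvalues of $B-I_N$ have negative real part, and this gives $\Vert e^{\lambda(B-I_N)t}\Vert\le Ct^{N}e^{-\lambda\gamma t}\to0$ for some $\gamma>0$, using the Jordan form.

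Finally, $\mathbf{x}_0^\top\mathbf{e}\mathbf{c}^\top=\mathbf{c}^\top$ because $\mathbf{x}_0^\top\mathbf{e}=1$. Therefore $\mathbf{x}^\top(t)\to\mathbf{c}^\top$ exponentially fast, for every $\lambda\in(0,1)$ and every probability vector $\mathbf{x}_0$. Note that $\lambda$ only affects the rate of convergence, which explains why the limit is independent of it.

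I expect the main obstacle to be the first reduction rather than the spectral argument. The formal object $\int_0^t\delta_0(t-u)\mathbf{x}(u)\,du\big/\int_0^t\delta_0(u)\,du$ has to be given meaning, and the clean way is through the $\varepsilon$-regularization (\ref{eq:delta-case-numerical}). There $\frac1\varepsilon\int_{t-\varepsilon}^t\mathbf{x}^\top(u)\,du\to\mathbf{x}^\top(t)$ by continuity of $\mathbf{x}$. I would simply adopt this interpretation, stating it as the definition of the delta-memory model, rather than prove that solutions of the regularized problems converge uniformly on compacts, although such convergence would follow by a Gronwall argument as in Theorem \ref{thm:existence-uniqueness}. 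A secondary point needing care is the periodic case, where $P_A$ has unimodular eigenvalues other than $1$. This is harmless because the shift by $-I_N$ pushes every such eigenvalue strictly into the left half-plane, so no aperiodicity (primitivity) assumption is needed.
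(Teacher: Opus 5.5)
Your proposal is correct and follows the paper's route. Both arguments use the $\varepsilon$-regularization to reduce the problem to $\dot{\mathbf{x}}^\top=\lambda\,\mathbf{x}^\top(P_A-I_N)$, solve it by the matrix exponential, and identify the limit through the projector $\mathbf{e}\mathbf{c}^\top$ of the simple eigenvalue $0$; your deflation $P_A=\mathbf{e}\mathbf{c}^\top+B$ is just a hands-on version of the spectral-resolution formula the paper cites from Meyer. One point in your favour: you note that non-Perron eigenvalues may have modulus $1$, as in the 3-cycle, yet still satisfy $\Re(\mu)<1$. That is more accurate than the paper, which asserts these eigenvalues have modulus strictly less than $1$.
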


\begin{proof}
This case can be treated as a limiting case of weights of the form $\omega_{\varepsilon}(t)= \chi_{\raisebox{-.5ex}{$\scriptstyle [0,\varepsilon]$}}(t)$, and is related to a memory of exactly the “previous instant” of instant t. We want to solve the IVP for $t > 0$. For that particular  $t > 0$, we choose the family  $\omega_{\varepsilon}(t)= \chi_{\raisebox{-.5ex}{$\scriptstyle [0,\varepsilon]$}}(t)$ such that $0<\varepsilon<t$. Then IVP described in equation (\ref{eq:IVP-delta-original}) turns into
\begin{align}\label{eq:delta-case}
\dot{\mathbf{x}}^\top(t) & =\mathbf{x}^\top(t)(\lambda P_A-I_N)+(1-\lambda)\left[\frac{1}{\displaystyle\int_0^t \chi_{\raisebox{-.5ex}{$\scriptstyle [0,\varepsilon]$}}(u)\,du}\int_0^t \chi_{\raisebox{-.5ex}{$\scriptstyle [0,\varepsilon]$}}(t-u)\,\mathbf{x}^\top(u)\,du\right]\nonumber\\
& =\mathbf{x}^\top (t)(\lambda P_A-I_N)+(1-\lambda)\left[\frac{1}{\varepsilon}\int_{t-\varepsilon}^t \mathbf{x}^\top(u)\,du\right].
\end{align}
Now, observe that the term $\displaystyle\frac{1}{\varepsilon}\int_{t-\varepsilon}^t \mathbf{x}^\top(u)\,du$ represents the average PageRank-value over the interval $[t - \varepsilon, t]$. If we take the limit as $\varepsilon$ tends to $0$, equation (\ref{eq:delta-case}) into the following first-order linear differential equation with constant coefficients
\begin{equation*} 
\dot{\mathbf{x}}^\top(t) =\mathbf{x}^\top (t)(\lambda P_A-I_N)+(1-\lambda)\mathbf{x}^\top (t)= \mathbf{x}^\top (t)[-\lambda (I_N-P_A)].
\end{equation*}
For simplicity of notation, let $\mathbf{\Omega}=-\lambda (I_N-P_A)$. Therefore, the IVP described above has the form
\begin{equation}\label{eq:IVP-delta}
\dot{\mathbf{x}}^\top(t)=\mathbf{x}^\top (t)\,\mathbf{\Omega}(t),\quad  \mathbf{x}(0)=\mathbf{x}_0,  
\end{equation} 
whose solution has the form $\mathbf{x}^\top(t)=\mathbf{x}_0^\top e^{\,\mathbf{\Omega}\,t}$, for all $t>0$. In order to compute the matrix $e^{\mathbf{\,\Omega}\,t}$ we use the spectral resolution of matrix functions, in particular for systems of differential equations as described in \cite[Example 7.9.6]{Meyer}. More precisely, the solution of IVP in equation (\ref{eq:IVP-delta}) has the form
\begin{equation}\label{eq:solution-unperturbed}
\mathbf{x}^\top(t)=\mathbf{x}_0^\top e^{\,\mathbf{\Omega}\,t} = \sum_{i=1}^s\,\sum_{j=0}^{k_i-1}\,\frac{t^j\,e^{\,\alpha_i\,t}}{j!}\,\mathbf{v}_j^\top(\alpha_i),\quad \text{where}\quad \mathbf{v}_j^\top(\alpha_i)=\mathbf{Y}_0^\top\,\mathbf{G}_i\,(\,\mathbf{\Omega}-\alpha_i I_N)^{j},
\end{equation}
the set $\sigma(\mathbf{\Omega})=\{\alpha_1,\alpha_2,\dots,\alpha_s\}$ are the eigenvalues of the matrix $\mathbf{\Omega}=-\lambda (I_N-P_A)$ with corresponding indexes $\text{ind}(\lambda_i)=k_i$ and $\mathbf{G}_i$ are the spectral projectors associated to the eigenvalue $\alpha_i$ (that is, $\mathbf{G}_i$ is the projector onto the generalized eigenspace $\ker(\,(\mathbf{\Omega}-\alpha_i I_N)^{k_i}\,)$ along the range $R(\,(\mathbf{\Omega}-\alpha_i I_N)^{k_i}\,)$ ). Therefore, since the solution is given in terms of the eigenvalues of matrix $\mathbf{\Omega}=-\lambda (I_N-P_A)$ and its corresponding indexes and spectral projectors, we need to compute the set $\sigma(\mathbf{\Omega})$. In fact we will show that the eigenvalues of $\mathbf{\Omega}$ are closely related to the eigenvalues of the row-normalization matrix $P_A$.\\

From the definition of an eigenvalue and eigenvector of the matrix $\mathbf{\Omega}=-\lambda (I_N-P_A)$, we look for a non-zero vector $\mathbf{v}\in\R^{N\times1}$ and a complex number $\alpha\in\C$ such that $\mathbf{v}^\top\mathbf{\Omega}=\alpha\mathbf{v}^\top$, that is, $\mathbf{v}^\top\,\left[-\lambda (I_N-P_A)\right]=\alpha\mathbf{v}^\top$. Now, since the damping factor $\lambda\neq0$, a straightforward computation shows that the vector $\mathbf{v}^\top\in\R^{N\times 1}$ must satisfy the equality $ \mathbf{v}^\top\,P_A=\displaystyle\frac{\alpha+\lambda}{\lambda}\mathbf{v}^\top$. Therefore, the vector is an eigenvector of $P_A$ associated to the eigenvalue $\displaystyle\beta=\frac{\alpha+\lambda}{\lambda}$. Now, observe that if $\mu\in\C$ is an eigenvalue of $P_A$, we can take $\alpha=\lambda(\mu-1)$  and thus  
$\displaystyle\frac{\alpha+\lambda}{\lambda}=\mu$. Consequently all $\C$-eigenvalues $\alpha$ of $\mathbf{\Omega}=-\lambda (I_N-P_A)$ have the form $\alpha=\lambda(\mu-1)$, where $\lambda\in(0,1)$ is the damping factor and $\mu\in\C$ is an eigenvalue of $P_A$.\\

At this point, by the Perron-Frobenius Theorem (see \cite[Section 8.3]{Meyer}) applied to the irreducible and non-negative matrix $P_A$, there exists a unique positive vector $\mathbf{c}\in\mathbb{R}^{N\times 1}$ with $\mathbf{c}^\top\mathbf{e}=1$ (the left-hand Perron vector) associated to the eigenvalue $\rho(P_A)=1$, such that $\mathbf{c}^TP_A=\mathbf{c}^T$. For the special case of $\mu=1$, the dominant eigenvalue of the matrix $P_A$, the corresponding eigenvalue for the matrix $\mathbf{\Omega}=-\lambda (I_N-P_A)$ is $\alpha_1=0$. Moreover, by the linear relationship between the eigenvalues of $P_A$ and those of $\mathbf{\Omega}$, the eigenvalue $\alpha_1=0$ is simple. Notice also that for  the remaining eigenvalues $\mu$ of $P_A$ with modulus strictly less than $1$, the corresponding eigenvalues $\alpha$ of $\mathbf{\Omega}$ satisfy $\Re(\alpha) < 0$, where $\Re(\alpha)$ stands for the real part of  $\alpha$. Therefore, we can apply the ideas from \cite[Example 7.9.6]{Meyer} to obtain the asymptotic behavior
\begin{equation}\label{eq:asymptotic-exponential-delta}
t^j e^{\,\alpha\, t} \xrightarrow[t \to \infty]{}
\begin{cases}
0 & \text{if } \Re(\alpha_i) < 0,\text{ with }\alpha_i\neq0,\\
1 & \text{if } \alpha_i = 0 \text{ and } j = 0.
\end{cases}
\end{equation}
Therefore, substituting the asymptotic behavior form equation (\ref{eq:asymptotic-exponential-delta}) into the solution of the system given by equation (\ref{eq:IVP-delta}) we obtain
\begin{equation}\label{eq:asymptotic-solution-delta}
\lim_{t\to\infty} \mathbf{x}^\top(t)=\lim_{t\to\infty} \mathbf{x}_0^\top e^{\,\mathbf{\Omega}\,t} =\lim_{t\to\infty} \, \sum_{i=1}^s\,\sum_{j=0}^{k_i-1}\,\frac{t^j\,e^{\,\alpha_i\,t}}{j!}\,\mathbf{v}_j^\top(\alpha_i)=\mathbf{x}_0^\top \mathbf{v}_0^\top(\alpha_1=0)=\mathbf{x}_0^\top\mathbf{G}_1,  
\end{equation}
where $\mathbf{G}_1$ is the spectral projector associated to the eigenvalue $\alpha_1=0$ of the matrix $\mathbf{\Omega}=-\lambda (I_N-P_A)$. Observe that, since the eigenvalue $\alpha_1 = 0$ is simple, it follows from the result on spectral projectors associated with simple eigenvalues (see \cite[Chapter 7, Section 7.2]{Meyer}) that $\mathbf{G}_1=\mathbf{u}\mathbf{w}^\top/\mathbf{w}^\top\mathbf{u}$, where $\mathbf{u}$ and $\mathbf{w}^\top$ are the right-hand and the left-hand eigenvectors, respectively, associated to the simple eigenvalue $\alpha_1=0$ of the matrix $\mathbf{\Omega}$. Equivalently, since $\alpha_1=0$, the vectors $\mathbf{u}$ and $\mathbf{w}^\top$ must satisfy $\mathbf{\Omega}\,\mathbf{u}=\mathbf{0}_{N\times1}$ and $\mathbf{w}^\top\,\mathbf{\Omega}=\mathbf{0}^\top_{N\times1}$, that is, $\lambda (I_N-P_A)\,\mathbf{u}=\mathbf{0}_{N\times1}$ and $\mathbf{w}^\top\,\lambda (I_N-P_A)=\mathbf{0}^\top_{N\times1}$. Since $P_A$ is  row-stochastic, a straightforward computation shows $\mathbf{u}=\mathbf{e} = (1,\dots,1)^\top$ and $\mathbf{w}^\top=\mathbf{c}^\top$, where $\mathbf{c}\in\R^{N\times 1}$ is the left-hand Perron vector of the matrix $P_A$. Therefore, the spectral projector $\mathbf{G}_1$associated with the eigenvalue $\alpha_1=0$ is given by $\displaystyle\mathbf{G}_1=\frac{\mathbf{u}\mathbf{w}^\top}{\mathbf{w}^\top\mathbf{u}}=\frac{\mathbf{e}\,\mathbf{c}^\top}{\mathbf{c}^\top\mathbf{e}}=\mathbf{e}\,\mathbf{c}^\top$, where we have used the normalization condition $\mathbf{c}^\top\mathbf{e}=1$. Finally, using the expression of the spectral projector $\mathbf{G}_1$ in equation (\ref{eq:asymptotic-solution-delta}) together with the identity $\mathbf{x}_0^\top\mathbf{e}=1$, we obtain
\begin{equation*}
 \lim_{t\to\infty} \mathbf{x}^\top(t)=\lim_{t\to\infty} \mathbf{x}_0^\top e^{\,\mathbf{\Omega}\,t} =\mathbf{x}_0^\top\mathbf{G}_1=\mathbf{x}_0^\top\mathbf{e}\,\mathbf{c}^\top=\mathbf{c}^\top.
\end{equation*}
This completes the proof.\\
\end{proof}


\subsection[]{The exponential decay function $\omega(t)=e^{-at}$, for $t\geq0$, with parameter $a>0$.}\label{subsect:asymptotic-exponential}

In this subsection, we consider the weight function $\omega:[0,\infty)\to[0,\infty)$ to be an exponential decay type function, that is, $\omega(t)=e^{-at}$ for $t\geq0$, where $a>0$ is a parameter. This choice corresponds to the case in which the system exhibits an exponentially fading memory effect. Roughly speaking, past values of the vector $\mathbf{x}(s)$ associated with times $s$ closer to the current time $t$ have a greater influence on the dynamics than values corresponding to earlier times, particularly those near the initial instant $t=0$. Therefore, the contribution of the past states decreases exponentially as the temporal distance between $s$ and $t$ increases. In this sense, the parameter $a>0$ controls the memory decay rate of the system: larger values of $a$ produce a faster loss of memory, whereas smaller values of $a$ allow past states to retain influence over a longer time interval. From the neuroscience perspective, this exponential memory function is related to the so-called ``forgetting curve''; a mathematical model introduced in 1885 by the psychologist Hermann Ebbinghaus to describe how information retention decreases over time in the absence of reinforcement or repeated exposure. We refer the reader to \cite{MurreDros} for recent results concerning the replication of Ebbinghaus's experiments.\\

Observe that, for the exponential memory function $\omega(t)=e^{-at}$ for $t\geq0$, the corresponding integral can be evaluated in closed form. Therefore, in this case the IVP described in equation (\ref{eq:IVP-asymptotic}) turns into
\begin{align}\label{eq:exponential-case-numerical}
\dot{\mathbf{x}}^\top(t)=\mathbf{x}^\top(t)(\lambda P_A-I_N)+(1-\lambda)\left[\frac{a}{1-e^{-at}}\int_0^t e^{-a(t-u)}\,\mathbf{x}^\top (u)\,du\right],\quad \mathbf{x}(0)=\mathbf{x}_0,
\end{align}
where, as above,  $P_A$ is the row-normalization of the adjacency matrix $A$ associated with the directed graph $\mathcal{G}$, $\lambda\in(0,1)$ is the damping factor, and the probability vector $\mathbf{x}_0\in\mathbb{R}^{N\times1}$ is the initial condition.\\

Similarly to the previous subsection, we will start by performing some numerical simulations to examine the asymptotic behavior of the solution for the IVP in equation (\ref{eq:exponential-case-numerical}) when  $\omega(t)=e^{-at}$, for $t\geq0$. For numerical computations, we will take $a=1$.\\

We begin by considering the strongly connected directed 3-cycle graph shown in Figure \ref{fig:numerical-expo1}(a). To illustrate the behavior of the components of the solution $\mathbf{x}(t)=(\mathbf{x}_1(t),\mathbf{x}_2(t),\mathbf{x}_3(t))$ of the IVP in equation \eqref{eq:exponential-case}, we compute the corresponding numerical solution. In Figure \ref{fig:numerical-expo1}(b) we consider the damping factor $\lambda=0.5$ and the initial condition $\mathbf{x}_0^\top=(0.25,0.5,0.25)$, while in Figure \ref{fig:numerical-expo1}(c) we consider $\lambda=0.85$ and $\mathbf{x}_0^\top=(0.1,0.3,0.6)$.
\begin{figure}[H]
        \centering
        \includegraphics[width=1\linewidth]{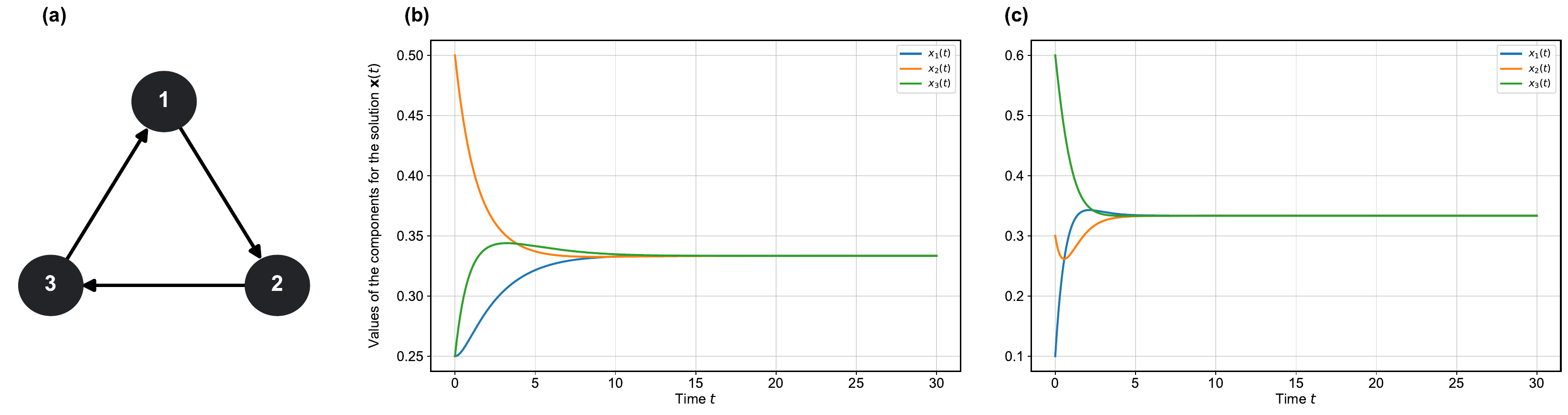}
        \caption{A numerical solution for the IVP in equation (\ref{eq:exponential-case-numerical}) for the 3-cycle digraph and $\omega(t)=e^{-t}$, for $t\geq0$.}
        \label{fig:numerical-expo1}
    \end{figure}
The numerical simulations shown in Figure \ref{fig:numerical-expo1} suggest that the solution $\mathbf{x}(t)$ is asymptotically stable and converges to the left-hand Perron vector $(1/3,1/3,1/3)^\top\in\R^{3\times1}$ of the row-normalization matrix $P_A$ associated with the adjacency matrix $A$ of the 3-cycle digraph in Figure \ref{fig:numerical-expo1}(a). \\

Now, to evaluate the robustness of this phenomenon with respect to changes in the graph topology, in Figure \ref{fig:numerical-expo2}(a) we consider a digraph which has bidirectional edges between nodes $1$-$2$ and $2$–$3$. To illustrate the behavior of the components of the solution $\mathbf{x}(t)=(\mathbf{x}_1(t),\mathbf{x}_2(t),\mathbf{x}_3(t))$ of the IVP in equation \eqref{eq:exponential-case-numerical}, we compute the corresponding numerical solution. As above, in Figure \ref{fig:numerical-expo2}(b) we consider the damping factor $\lambda=0.5$ and the initial condition $\mathbf{x}_0^\top=(0.2,0.1,0.7)$, while in Figure \ref{fig:numerical-expo2}(c) we consider $\lambda=0.85$ and $\mathbf{x}_0^\top=(0.1,0.3,0.6)$.
\begin{figure}[H]
        \centering
        \includegraphics[width=1\linewidth]{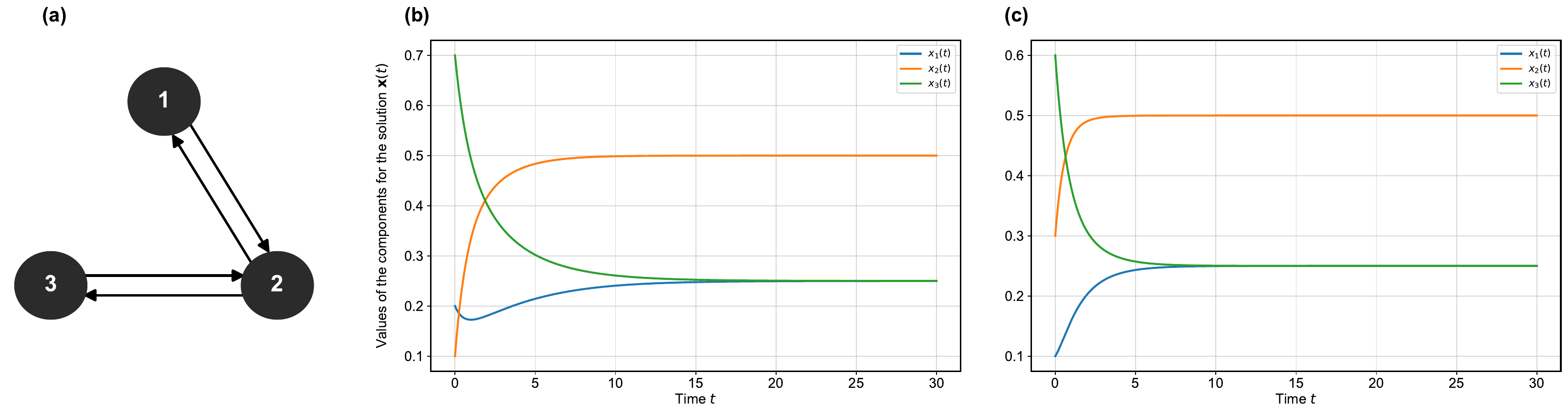}
        \caption{A numerical solution for the IVP in equation (\ref{eq:exponential-case-numerical}) for the bidirectional digraph and $\omega(t)=e^{-t}$, for $t\geq0$.}
        \label{fig:numerical-expo2}
    \end{figure}
    
Similarly to the $3$-cycle digraph case, the numerical simulations shown in Figure \ref{fig:numerical-expo2} suggest that the solution $\mathbf{x}(t)$ is asymptotically stable and converges to  the left-hand Perron vector $(0.25,0.50,0.25)^\top\in\R^{3\times1}$ of the row-normalization matrix $P_A$ associated with the adjacency matrix $A$ for the digraph in Figure \ref{fig:numerical-expo2}(a). \\

In view of the  simulations above  
we conjecture that the solution of the IVP asymptotically stabilizes at the left-hand Perron vector of the matrix $P_A$, when the exponential decay memory function is considered. This statement is formalized in Theorem \ref{thm:exponential-case}. In contrast to the Dirac delta weight function considered in Theorem \ref{thm:delta-case}, in the exponentially decreasing case we cannot derive an explicit closed-form expression for the solution for the IVP described in equation (\ref{eq:IVP-asymptotic}). This difficulty arises from the fact that the resulting equation is a linear integro-differential convolution equation, whose explicit solutions are generally hard to compute and are only available in certain particular cases (see \cite[Chapter 3]{GLS}, for instance). However, in spite of this limitation for providing an explicit formula for the solution of the IVP, we can study the asymptotic behavior of the solution using the ideas of asymptotic integration of linear differential systems described in \cite{HarrisLutz}. In our notation, it is well-known that the asymptotic behavior of the solution  of the linear system defined as $\dot{\mathbf{x}}^\top(t)=\mathbf{x}^\top(t)\,\mathbf{\Omega}$ is completely determined by the initial the spectrum of the constant coefficient matrix $\mathbf{\Omega}$. More precisely, the location of the eigenvalues of the matrix $\mathbf{\Omega}$ in the complex plane governs the growth, decay, and stability properties of the corresponding trajectories $\mathbf{x}(t)$ as $t$ grows to $\infty$. In particular, spectral properties such as the sign of the real parts of the eigenvalues determine whether solutions decay exponentially, remain bounded, or grow without bound.\\

A fundamental fact in perturbation theory is that these qualitative asymptotic properties are preserved under sufficiently small perturbations. Specifically, if $\mathbf{B}(t)$ is a sufficiently small time-dependent perturbation, then the solutions of perturbed linear system $\dot{\mathbf{x}}^\top(t)=\mathbf{x}^\top(t)\,\left[\mathbf{\Omega}+\mathbf{B}(t)\right]$ exhibits the same asymptotic and stability behavior as those for the unperturbed system $\dot{\mathbf{x}}^\top(t)=\mathbf{x}^\top(t)\,\mathbf{\Omega}$. In other words, under appropriate assumptions on the perturbation matrix $\mathbf{B}(t)$, the qualitative long-time dynamics of the system remain unchanged, and thus the dominant behavior of the solutions continues to be dictated by the spectrum of $\mathbf{\Omega}$. As an example, suppose that the perturbation $\mathbf{B}(t)$ is a continuous matrix-valued function for $t\geq t_0$, for some $t_0\geq0$, satisfying the growth condition $\int_{t_0}^\infty \vertiii{\,\mathbf{B}(t)\,}_\infty\,dt<\infty$,  where $\vertiii{\,\cdot\,}_\infty$ denotes the matrix norm in Remark \ref{rmk:operator-norm} of Section \ref{Section:Main results}. Then, the Levinson's classical result  asserts that the perturbed linear system $\dot{\mathbf{x}}^\top(t)=\mathbf{x}^\top(t)\,\left[\mathbf{\Omega}+\mathbf{B}(t)\right]$ admits a solution $\mathbf{x}(t)$ satisfying $\mathbf{x}^\top(t)=\left[\mathbf{C}^\top+o(1)\right]e^{\mathbf{\Omega}\, t}$, as $t\to\infty$, where $\mathbf{C}\in\R^{N\times1}$ is a constant matrix (see \cite{Levinson} or \cite{HarrisLutz}, for instance). Here, $o(1)$ denotes a row-matrix that converges to zero, as $t\to \infty$. Roughly speaking, the solutions of the perturbed system $\dot{\mathbf{x}}^\top(t)=\mathbf{x}^\top(t)\,\left[\mathbf{\Omega}+\mathbf{B}(t)\right]$ behave asymptotically like those of the unperturbed system $\dot{\mathbf{x}}^\top(t)=\mathbf{x}^\top(t)\,\mathbf{\Omega}$, whenever the perturbation $\mathbf{B}(t)$ is integrable.\\

With Levinson's classical result at hand, we are now in a position to prove that the solution of the PageRank dynamical system associated with the weight function $\omega(t)=e^{-at}$, for $t\geq0$ and $a>0$ as a parameter, converges asymptotically to the left-hand Perron vector of the row-normalized matrix $P_A$. For simplicity of notation, in the proof of the following result we consider only the case $a=1$. The extension to arbitrary $a>0$ follows by analogous arguments.\\

\begin{thm}\label{thm:exponential-case}
Let $A$ be a non-negative irreducible square matrix of order $N$ of a directed graph $\mathcal{G}$ and let $P_A$ be its row-normalization described in Section \ref{Section:Notation}. Let $\omega(t)=e^{-t}$, for $t\geq0$, be the non-negative and integrable function weight function. Then, for any personalization vector 
$\mathbf{x}_0\in\R^{N\times1}$ with $\mathbf{x}_0\geq0$ and $\mathbf{x}_0^\top\mathbf{e}=1$ and for any $\lambda\in(0,1)$, the solution $\mathbf{x}$ of the Initial Value Problem (IVP) given by
\begin{equation}\label{eq:IVP-exponential}
\dot{\mathbf{x}}^\top(t)=\mathbf{x}^\top (t)(\lambda P_A-I_N)+(1-\lambda)\left[\frac{1}{\displaystyle\int_0^t \omega(u)\,du}\int_0^t \omega(t-u)\,\mathbf{x}^\top(u)\,du\right],\quad \mathbf{x}(0)=\mathbf{x}_0,
\end{equation}
converges asymptotically to the left-hand Perron vector $\mathbf{c}\in\R^{N\times 1}$ of the matrix $P_A$.
\end{thm}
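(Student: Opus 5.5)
We plan to remove the memory term by enlarging the state space, since for $\omega(t)=e^{-t}$ the convolution satisfies a local ODE. Define the auxiliary row vector
$$\mathbf{y}^\top(t)=\int_0^t e^{-(t-u)}\mathbf{x}^\top(u)\,du .$$
Then $\dot{\mathbf{y}}^\top=\mathbf{x}^\top-\mathbf{y}^\top$, and since $\int_0^t\omega=1-e^{-t}$, the IVP (\ref{eq:IVP-exponential}) becomes the linear, non-autonomous $2N$-dimensional system
$$\dot{\mathbf{x}}^\top=\mathbf{x}^\top(\lambda P_A-I_N)+\frac{1-\lambda}{1-e^{-t}}\,\mathbf{y}^\top,\qquad \dot{\mathbf{y}}^\top=\mathbf{x}^\top-\mathbf{y}^\top .$$
Write $\mathbf{Z}=(\mathbf{x}^\top,\mathbf{y}^\top)$. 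We work on $[1,\infty)$ to avoid the singular factor at $t=0$; Theorem \ref{thm:existence-uniqueness} supplies $\mathbf{Z}(1)$. The system then reads $\dot{\mathbf{Z}}=\mathbf{Z}(\mathbf{M}+\mathbf{B}(t))$, where
$$\mathbf{M}=\begin{pmatrix}\lambda P_A-I_N & I_N\\ (1-\lambda)I_N & -I_N\end{pmatrix}$$
is constant. The perturbation $\mathbf{B}(t)$ has only the block $(1-\lambda)\frac{e^{-t}}{1-e^{-t}}I_N$ in the lower-left position, so $\int_1^\infty\vertiii{\mathbf{B}(t)}_\infty\,dt<\infty$. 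This places us in the Levinson setting recalled before the theorem.

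\textbf{Spectrum of $\mathbf{M}$.} Since all blocks are polynomials in $P_A$, each eigenvalue $\mu$ of $P_A$ contributes the roots $s$ of
$$(s+1-\lambda\mu)(s+1)=1-\lambda .$$
For $\mu=1$ this factors as $s(s+2-\lambda)=0$, giving $s=0$ and $s=-(2-\lambda)<0$. In general, suppose $\Re s\ge 0$. Then
$$|s+1-\lambda\mu|\ \ge\ \Re s+1-\lambda\Re\mu\ \ge\ 1-\lambda\ \ge\ \Big|\frac{1-\lambda}{s+1}\Big| ,$$
with equality throughout only if $s=0$ and $\mu=1$. Hence $0$ is a simple eigenvalue of $\mathbf{M}$, simplicity coming from irreducibility of $P_A$ and Perron--Frobenius, and every other eigenvalue has negative real part. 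Consequently $e^{\mathbf{M}t}\to\mathbf{G}_0$, the spectral projector onto $\ker\mathbf{M}$, exactly as in the spectral-resolution argument of Theorem \ref{thm:delta-case}; in particular $\sup_{t\ge0}\vert\vert\vert e^{\mathbf{M}t}\vert\vert\vert<\infty$.

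\textbf{Convergence.} We use variation of constants:
$$\mathbf{Z}(t)=\mathbf{Z}(1)e^{\mathbf{M}(t-1)}+\int_1^t\mathbf{Z}(s)\mathbf{B}(s)e^{\mathbf{M}(t-s)}\,ds .$$
Boundedness of $e^{\mathbf{M}t}$, integrability of $\mathbf{B}$ and Gronwall's inequality show that $\mathbf{Z}$ is bounded. (Boundedness is also immediate from Theorems \ref{thm:positivity} and \ref{thm:identity-to-one}, since $\mathbf{x}(t)$ is a probability vector and hence so is the normalized average $\mathbf{y}$, up to the bounded factor $1-e^{-t}$.) Dominated convergence then lets us pass to the limit inside the integral, giving
$$\mathbf{Z}(t)\to \mathbf{Z}(1)\mathbf{G}_0+\int_1^\infty\mathbf{Z}(s)\mathbf{B}(s)\mathbf{G}_0\,ds=:\mathbf{Z}_\infty ,$$
which is a left null vector of $\mathbf{M}$. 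Making this limit interchange rigorous, together with the claim that $0$ is semisimple so that no $t^j$ growth appears, is the main obstacle. The inequality above settles the spectral part, so we expect the difficulty to be technical rather than conceptual.

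\textbf{Identifying the limit.} Write $\mathbf{Z}_\infty=(\mathbf{x}_\infty^\top,\mathbf{y}_\infty^\top)$ and use $\mathbf{Z}_\infty\mathbf{M}=0$. The second block column gives $\mathbf{x}_\infty=\mathbf{y}_\infty$. Substituting into the first block column gives $\mathbf{x}_\infty^\top(\lambda P_A-I_N)+(1-\lambda)\mathbf{x}_\infty^\top=0$, that is, $\mathbf{x}_\infty^\top P_A=\mathbf{x}_\infty^\top$. By Perron--Frobenius, $\mathbf{x}_\infty=\kappa\,\mathbf{c}$ for some scalar $\kappa$. Finally, Theorem \ref{thm:identity-to-one} gives $\mathbf{x}^\top(t)\mathbf{e}=1$ for all $t$, so $\kappa=1$ and $\mathbf{x}(t)\to\mathbf{c}$, independently of $\lambda$ and $\mathbf{x}_0$.
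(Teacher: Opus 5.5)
Your proof is correct and complete, and it takes a genuinely different route from the paper's.

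The paper eliminates the memory by multiplying by $1-e^{-t}$ and differentiating. This produces a second-order equation for $\mathbf{x}$, hence a first-order system in $(\mathbf{x}^\top\,|\,\dot{\mathbf{x}}^\top)$ that requires the extra datum $\dot{\mathbf{x}}^\top(0)=\lambda\mathbf{x}_0^\top(P_A-I_N)$. It then cites Levinson's theorem and places the spectrum via a complex-coefficient Routh--Hurwitz criterion; its equation $\alpha^2+(2-\lambda\mu)\alpha+\lambda(1-\mu)=0$ is exactly your characteristic equation expanded. Finally, it reads off the limit as the unperturbed spectral projector applied to the data at $t=0$.

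You instead keep the unnormalized memory $\mathbf{y}$ as a state variable, so there is no differentiation and no condition on $\dot{\mathbf{x}}(0)$. You work on $[1,\infty)$, where the perturbation is integrable. You replace Routh--Hurwitz by an elementary modulus inequality, and Levinson by variation of constants plus dominated convergence. The paper's route is shorter because it outsources these steps; yours is self-contained, and its last step is sharper.

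In your variables the correction $\int_1^\infty\mathbf{Z}(s)\mathbf{B}(s)\mathbf{G}_0\,ds$ is genuinely nonzero. With $\mathbf{G}_0=\mathbf{u}\mathbf{w}^\top/(2-\lambda)$, $\mathbf{u}^\top=(\mathbf{e}^\top\,|\,(1-\lambda)\mathbf{e}^\top)$, $\mathbf{w}^\top=(\mathbf{c}^\top\,|\,\mathbf{c}^\top)$ and $\mathbf{y}^\top(s)\mathbf{e}=1-e^{-s}$, it equals $\frac{(1-\lambda)e^{-1}}{2-\lambda}(\mathbf{c}^\top\,|\,\mathbf{c}^\top)$. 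So projecting the initial data alone would give the wrong multiple of $\mathbf{c}$. Identifying the limit through $\mathbf{Z}_\infty\mathbf{M}=0$ together with the conservation law $\mathbf{x}^\top(t)\mathbf{e}=1$, as you do, is exactly what is needed. In the paper's variables the corresponding correction does vanish, but only because $\dot{\mathbf{x}}^\top\mathbf{e}=0$, which its proof does not invoke at that point. Your argument also carries over to $\omega(t)=e^{-at}$ with only notational changes, via $\dot{\mathbf{y}}^\top=\mathbf{x}^\top-a\mathbf{y}^\top$.

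The two points you flag as the main obstacle are already settled by what you wrote:
\begin{itemize}
\item The interchange of limit and integral is justified by $\sup_{\tau\ge0}\vertiii{e^{\mathbf{M}\tau}}_\infty<\infty$, the boundedness of $\mathbf{Z}$, and $\int_1^\infty\vertiii{\mathbf{B}(s)}_\infty\,ds<\infty$.
\item Semisimplicity of $0$ follows from algebraic simplicity. To make this airtight, note that the blocks of $sI_{2N}-\mathbf{M}$ commute, so $\det(sI_{2N}-\mathbf{M})=\prod_i\big[(s+1)(s+1-\lambda\mu_i)-(1-\lambda)\big]$ over the eigenvalues $\mu_i$ of $P_A$ counted with algebraic multiplicity. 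Then $s=0$ is a root only when $\mu_i=1$, and that eigenvalue is algebraically simple by Perron--Frobenius.
\end{itemize}
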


\begin{proof}
In this case, since the weight function $\omega(t)=e^{-t}$ for $t\geq0$, we have
\begin{equation*}
\int_0^t \omega(u)\,du=\int_0^t e^{-u}\,du=[-e^{-u}]_0^t=1-e^{-t},\quad (t\geq0).    
\end{equation*}
Therefore, the IVP described in equation above (\ref{eq:IVP-exponential}) can be rewritten as
\begin{equation}\label{eq:exponential-case}
\dot{\mathbf{x}}^\top(t)=\mathbf{x}^\top(t)(\lambda P_A-I_N)+(1-\lambda)\left[\frac{1}{1-e^{-t}}\int_0^t e^{-(t-u)}\,\mathbf{x}^\top (u)\,du\right],\quad \mathbf{x}(0)=\mathbf{x}_0.
\end{equation}
Multiplying the integro-differential equation in (\ref{eq:exponential-case}) by the factor $(1-e^{-t})$ we obtain the equivalent equation 
\begin{equation}\label{eq:exponential-multiplying}
(1-e^{-t})\dot{\mathbf{x}}^\top(t)=\mathbf{x}^\top(t)(1-e^{-t})(\lambda P_A-I_N)+(1-\lambda)\int_0^t e^{-(t-u)}\,\mathbf{x}^\top (u)\,du.
\end{equation}
Since this an  integro-differential equation, we take  the derivative and thus we obtain the following second order differential equation 

\begin{equation}\label{eq:derivation-exponential}
(e^{-t})\,\dot{\mathbf{x}}^\top(t)+(1-e^{-t})\,\ddot{\mathbf{x}}^\top(t)=\left[(e^{-t})\,\mathbf{x}^\top(t)+(1-e^{-t})\,\dot{\mathbf{x}}^\top(t)\right](\lambda P_A-I_N)-(1-\lambda)\int_0^t e^{-(t-u)}\,\mathbf{x}^\top (u)\,du,
\end{equation}

where in the integral term we have used the Leibniz integral rule for the differentiation under the integral sign. Now, writing  equation (\ref{eq:exponential-multiplying}) as 
\begin{equation*}
(1-\lambda)\int_0^t e^{-(t-u)}\,\mathbf{x}^\top (u)\,du=(1-e^{-t})\,\dot{\mathbf{x}}^\top(t)-\mathbf{x}^\top(t)(1-e^{-t})(\lambda P_A-I_N),
\end{equation*}

and using this expression in 
(\ref{eq:derivation-exponential}) 
we 
obtain the following second order differential equation 
\begin{align*}
(e^{-t})\,\dot{\mathbf{x}}^\top(t)+(1-e^{-t})\,\ddot{\mathbf{x}}^\top(t)
&=\left[(e^{-t})\,\mathbf{x}^\top(t)+(1-e^{-t})\,\dot{\mathbf{x}}^\top(t)\right](\lambda P_A-I_N)-(1-\lambda)\int_0^t e^{-(t-u)}\,\mathbf{x}^\top (u)\,du,\\
&=\left[(e^{-t})\,\mathbf{x}^\top(t)+(1-e^{-t})\,\dot{\mathbf{x}}^\top(t)\right](\lambda P_A-I_N)-\left[(1-e^{-t})\,\dot{\mathbf{x}}^\top(t)-\mathbf{x}^\top(t)(1-e^{-t})(\lambda P_A-I_N)\right].
\end{align*}
which, after rearranging and dividing by the factor $(1-e^{-t})$, becomes  the following second-order linear differential equation with variable coefficients 
\begin{equation}\label{eq:second-ODE}
\ddot{\mathbf{x}}^\top(t)+\dot{\mathbf{x}}^\top(t)\,\mathbf{a}(t)+\mathbf{x}^\top(t)\,\mathbf{b}(t)=0,
\end{equation}
where the time-dependent matrix coefficients $\mathbf{a}(t)$ and $\mathbf{b}(t)$ are defined as follows
\begin{equation*}
\mathbf{a}(t)=\frac{e^{-t}}{1-e^{-t}}I_N-(\lambda P_A-I_N) +I_N\quad ,\quad \mathbf{b}(t)=-(\lambda P_A-I_N)\frac{e^{-t}}{1-e^{-t}}-\frac{(1-\lambda)}{1-e^{-t}}I_N-(\lambda P_A-I_N).
\end{equation*}
Since the problem involves a second-order ordinary differential equation described by (\ref{eq:second-ODE}), two initial conditions must be specified. In this case, we consider:

\begin{itemize}
\item[(C1)] The initial condition $\mathbf{x}(0)=\mathbf{x}_0$ with $\mathbf{x}_0\geq0$ and $\mathbf{x}_0^\top\mathbf{e}=1$.
\item[(C2)] A condition on $\dot{\mathbf{x}}^\top(0)$. This can be derived by continuity and by using equation~(\ref{eq:exponential-case}), which defines the dynamical system. Specifically,
\begin{align*}
\dot{\mathbf{x}}^\top(0)=\lim_{t\to0^+}\dot{\mathbf{x}}^\top(t)
&=\lim_{t\to0^+}\left\{\mathbf{x}^\top(t)(\lambda P_A-I_N)+(1-\lambda)\left[\frac{1}{1-e^{-t}}\int_0^te^{-(t-u)}\mathbf{x}^\top(u)\,du\right]\right\}\\
&=\mathbf{x}^\top(0)(\lambda P_A-I_N)+(1-\lambda)\lim_{t\to0^+}\left[\frac{1}{1-e^{-t}}\int_0^t e^{-(t-u)}\mathbf{x}^\top(u)\,du\right]\\
&=\mathbf{x}^\top(0)(\lambda P_A-I_N)+(1-\lambda)\mathbf{x}^\top(0)\\
&=\mathbf{x}^\top(0)\left[(\lambda P_A-I_N) + (1-\lambda)I_N\right]=\mathbf{x}^\top(0)\lambda(P_A-I_N).
\end{align*}
Hence, $\dot{\mathbf{x}}^\top(0) = \mathbf{x}^\top(0)\lambda(P_A-I_N)$.
\end{itemize}

Now, the second-order linear differential equation (\ref{eq:second-ODE}) can be converted into a first-order system by introducing new variables, typically $\mathbf{y}_1^\top=\mathbf{x}^\top$ and $\mathbf{y}_2^\top=\dot{\mathbf{x}}^\top$. The first-order linear system can be written in the matrix form
\begin{equation}\label{eq:first-order-system}
\mathbf{\dot{Y}}^\top(t)=
(\enspace\dot{\mathbf{y}}_1^\top(t)\enspace|\enspace
\dot{\mathbf{y}}_2^\top(t)\enspace)=(\enspace\mathbf{y}_1^\top(t)\enspace|\enspace
\mathbf{y}_2^\top(t)\enspace)\,\mathbf{M}(t)=\mathbf{Y}^\top(t)\,\mathbf{M}(t), \quad (t>0),    
\end{equation}
where the coefficient-matrix $\mathbf{M}(t)$ is defined as follows
\begin{equation}\label{eq:matrix-system}
\mathbf{M}(t)=
\begin{pmatrix}
\mathbf{0}_{N\times N} & -\mathbf{b}(t)\\
I_N & -\mathbf{a}(t)
\end{pmatrix}=
\begin{pmatrix}
\mathbf{0}_{N\times N} & (\lambda P_A-I_N)\displaystyle\frac{e^{-t}}{1-e^{-t}}+\frac{(1-\lambda)}{1-e^{-t}}I+(\lambda P_A-I_N)\\
I_N&  -\displaystyle\frac{e^{-t}}{1-e^{-t}}I+(\lambda P_A-2I_N)
\end{pmatrix}.
\end{equation}
Observe that, since
\begin{equation*}
\frac{(1-\lambda)}{1-e^{-t}}I_N=(1-\lambda)\frac{1}{1-e^{-t}}I_N=(1-\lambda)\left[1+\frac{e^{-t}}{1-e^{-t}}\right]I_N=(1-\lambda)I_N + (1-\lambda)\frac{e^{-t}}{1-e^{-t}}I_N,\quad (t>0),
\end{equation*}
the coefficient $-\mathbf{b}(t)$ can be rewritten as 
\begin{equation*}
-\mathbf{b}(t)=(\lambda P_A-I_N)\frac{e^{-t}}{1-e^{-t}}+\frac{(1-\lambda)}{1-e^{-t}}I_N+(\lambda P_A-I_N)=\lambda(P_A-I_N)\frac{e^{-t}}{1-e^{-t}}+\lambda (P_A-I_N).
\end{equation*}
Therefore, matrix $\mathbf{M}(t)$ in equation (\ref{eq:first-order-system}) can be expressed as follows
\begin{equation}\label{eq:matrix-system}
\mathbf{M}(t)=
\begin{pmatrix}
\mathbf{0}_{N\times N} & -\mathbf{b}(t) \\
I_N & -\mathbf{a}(t)
\end{pmatrix}=
\begin{pmatrix}
\mathbf{0}_{N\times N} & \lambda(P_A-I_N)\displaystyle\frac{e^{-t}}{1-e^{-t}}+\lambda (P_A-I_N)\\
 I_N &  -\displaystyle\frac{e^{-t}}{1-e^{-t}}I_N+(\lambda P_A-2I_N)
\end{pmatrix}.
\end{equation}
At this point, the idea is to apply the asymptotic integration to the first-order linear system described in equation (\ref{eq:first-order-system}). To this aim, observe that the matrix $\mathbf{M}(t)$ can be decomposed as $\mathbf{M}(t)=\mathbf{\Omega}+\mathbf{B}(t)$, where
\begin{equation*}
\mathbf{\Omega}=
\begin{pmatrix}
\mathbf{0}_{N\times N} & \lambda(P_A-I_N)\\
I_N & \lambda P_A-2I_N
\end{pmatrix}\quad \text{ and }\quad 
\mathbf{B}(t)=
\begin{pmatrix}
\mathbf{0}_{N\times N} & \lambda(P_A-I_N)\displaystyle\frac{e^{-t}}{1-e^{-t}}\\
\mathbf{0}_{N\times N} & -\displaystyle\frac{e^{-t}}{1-e^{-t}}I_N
\end{pmatrix}.
\end{equation*}
Now, since for every $t_0>0$ we have 
\begin{equation*}
\int_{t_0}^\infty \vertiii{\,\mathbf{B}(t)\,}_\infty dt=
\bigint_{t_0}^\infty\vertiii{ \,\begin{pmatrix}
\mathbf{0}_{N\times N} & \lambda(P_A-I_N)\displaystyle\frac{e^{-t}}{1-e^{-t}}\\
\mathbf{0}_{N\times N} & -\displaystyle\frac{e^{-t}}{1-e^{-t}}I_N
\end{pmatrix}\,}_\infty dt\leq\left(\lambda\,\vertiii{\,P_A-I_N\,}_\infty+\vertiii{\,I_N\,}_\infty\right)\int_{t_0}^\infty \frac{e^{-t}}{1-e^{-t}}\,dt <\infty,
\end{equation*}

the perturbation matrix $\mathbf{B}(t)$ is integrable with respect to the $\infty$-norm matrix. As an application of the Levinson's theorem on the asymptotic integration (see \cite{Coddington-Levinson} or \cite{HarrisLutz}, for instance), the asymptotic behavior of solutions for the perturbed system $\dot{\mathbf{Y}}^\top(t)=\mathbf{Y}^\top(t)\left[\,\mathbf{\Omega}+\mathbf{B}(t)\,\right]$ are still determined by the limiting system $\dot{\mathbf{Y}}^\top(t)=\mathbf{Y}^\top(t)\,\mathbf{\Omega}$.
Therefore, we focus on the asymptotic behavior of the first-order linear system $\dot{\mathbf{Y}}^\top(t)=\mathbf{Y}^\top(t)\,\mathbf{\Omega}$. Since the coefficient matrix $\mathbf{\Omega}$ of the unperturbed linear systems is constant, it is clear that the solution of the first-order linear system $\dot{\mathbf{Y}}^\top(t)=\mathbf{Y}^\top(t)\,\mathbf{\Omega}$ has the form $\mathbf{Y}^\top(t)=\mathbf{Y}_0^\top e^{\,\mathbf{\Omega}\,t}$, where $\mathbf{Y}_0=\mathbf{Y}(0)$ is the initial value for the IVP. In our case, since $\mathbf{y}_1^\top=\mathbf{x}^\top$ and $\mathbf{y}_2^\top=\dot{\mathbf{x}}^\top$, we have $\mathbf{Y}_0^\top=\left(\enspace\mathbf{y}_1^\top(0)\enspace|\enspace\mathbf{y}_2^\top(0)\enspace\right)=\left(\enspace\mathbf{x}_0^\top\enspace|\enspace\mathbf{x}_0^\top\lambda(P_A-I_N)\enspace\right)$, which are the conditions (C1) and (C2) described above for the second-order ordinary differential equation.\\

Now, to compute the matrix $e^{\mathbf{\,\Omega}\,t}$ we use the spectral resolution of matrix functions, in particular for systems of differential equations described in \cite[Example 7.9.6]{Meyer}. More precisely, the solution of the unperturbed system $\dot{\mathbf{Y}}^\top(t)=\mathbf{Y}^\top(t)\,\mathbf{\Omega}$ has the form
\begin{equation}\label{eq:solution-unperturbed}
\mathbf{Y}^\top(t)=\mathbf{Y}_0^\top e^{\mathbf{\,\Omega}\,t} = \sum_{i=1}^s\,\sum_{j=0}^{k_i-1}\,\frac{t^j\,e^{\,\alpha_i\,t}}{j!}\,\mathbf{v}_j^\top(\alpha_i),\quad \text{where}\quad \mathbf{v}_j^\top(\alpha_i)=\mathbf{Y}_0^\top\,\mathbf{G}_i\,(\mathbf{\Omega}-\alpha_i I_N)^{j},
\end{equation}
the set $\sigma(\mathbf{\Omega})=\{\alpha_1,\alpha_2,\dots,\alpha_s\}$ are the eigenvalues of the matrix $\mathbf{\Omega}$ with corresponding indexes $\text{ind}(\lambda_i)=k_i$ and $\mathbf{G}_i$ are the spectral projectors associated to the eigenvalue $\alpha_i$ (that is, $\mathbf{G}_i$ is the projector onto the generalized eigenspace $\ker(\,(\mathbf{\Omega}-\alpha_i I_N)^{k_i}\,)$ along the range $R(\,(\mathbf{\Omega}-\alpha_i I_N)^{k_i}\,)$ ). Therefore, since the solution is in terms of the eigenvalues of matrix $\mathbf{\Omega}$ and its corresponding indexes and spectral projectors, we proceed to compute the set $\sigma(\mathbf{\Omega})$. In fact, as we show in a moment, the eigenvalues of $\mathbf{\Omega}$ are closely related to the eigenvalues of the row-normalization matrix $P_A$.\\

From the definition of an eigenvalue and eigenvector of the matrix $\mathbf{\Omega}$, we look for a non-zero vector $\mathbf{V}\in\R^{2N\times1}$ with $\mathbf{V}^\top=(\enspace\mathbf{V}_1^\top\enspace|\enspace\mathbf{V}_2^\top\enspace)$ and a number $\alpha\in\C$ such that $\mathbf{V}^\top\mathbf{\Omega}=\alpha\mathbf{V}^\top$, that is,
\begin{equation}\label{eq:eigenvalues-Omega}
(\enspace\mathbf{V}_1^\top\enspace|\enspace\mathbf{V}_2^\top\enspace)\begin{pmatrix}
\mathbf{0}_{N\times N} & \lambda(P_A-I_N)\\
I_{N} & \lambda P_A-2I_N
\end{pmatrix}=\alpha\,(\enspace\mathbf{V}_1^\top\enspace|\enspace\mathbf{V}_2^\top\enspace)
\end{equation}
From the equation (\ref{eq:eigenvalues-Omega}) we obtain the following system of equations
\begin{align}
\mathbf{V}_2^\top=\alpha\,\mathbf{V}_1^\top\label{eq:system-eigenvector1},\\
\mathbf{V}_1^\top\,\lambda(P_A-I_N)+\mathbf{V}_2^\top\,(\lambda P_A-2I_N)=\alpha\,\mathbf{V}_2^\top\label{eq:system-eigenvector2}.
\end{align}
Now, substituting (\ref{eq:system-eigenvector1}) in equation (\ref{eq:system-eigenvector2}) we obtain the equation in terms of $\mathbf{V}_1^\top$ as follows
\begin{equation}\label{eq:substituying-V_2}
\mathbf{V}_1^\top\,\lambda(P_A-I_N)+\alpha\mathbf{V}_1^\top\,(\lambda P_A-2I_N)=\alpha^2\,\mathbf{V}_1^\top
\end{equation}
A straightforward computation in equation (\ref{eq:substituying-V_2}) shows that the vector $\mathbf{V}_1^\top\in\R^{N\times 1}$ must satisfy the equation 
\begin{equation}\label{eq:condition-V_1}
\lambda(\alpha+1)\mathbf{V}_1^\top\,P_A=(\alpha^2+2\alpha+\lambda)\mathbf{V}_1^\top.    
\end{equation}
Now observe that $\alpha+1\neq0$. Indeed, if $\alpha+1=0$, by equation (\ref{eq:condition-V_1}) the vector $\mathbf{V}_1^\top$ must satisfy $\mathbf{0}_{N\times 1}^\top=(-1+\lambda)\mathbf{V}_1^\top$ and, since the damping factor $\lambda\in(0,1)$, it follows that $\mathbf{V}_1^\top=\mathbf{0}_{N\times 1}^\top$. Now, by equation (\ref{eq:system-eigenvector1}), we also obtain that $\mathbf{V}_2^\top=\mathbf{0}_{N\times 1}^\top$ and therefore $\mathbf{V}^\top=(\enspace\mathbf{V}_1^\top\enspace|\enspace\mathbf{V}_2^\top\enspace)=(\enspace\mathbf{0}_{N\times 1}^\top\enspace|\enspace\mathbf{0}_{N\times 1}^\top\enspace)$, which contradicts the definition of a non-zero vector.\\

Therefore, since $\lambda(\alpha+1)\neq0$ in equation (\ref{eq:condition-V_1}), the vector $\mathbf{V}_1\in\R^{N\times 1}$ satisfies the equation $\displaystyle\mathbf{V}_1^\top P_A=\beta\,\mathbf{V}_1^\top$ with $\displaystyle\beta=\frac{\alpha^2+2\alpha+\lambda}{\lambda(\alpha+1)}$. This shows that $\mathbf{V}_1^\top$ is an eigenvector of $P_A$ associated to the eigenvalue $\displaystyle\beta=\frac{\alpha^2+2\alpha+\lambda}{\lambda(\alpha+1)}$. Now, observe that if $\mu\in\C$ is an eigenvalue of $P_A$, then there exists $\alpha\in\C$ such that $\displaystyle\frac{\alpha^2+2\alpha+\lambda}{\lambda(\alpha+1)}=\mu$. Equivalently, the eigenvalues $\alpha\in\C$ of $\mathbf{\Omega}$ are the solution of the quadratic equation $\alpha^2+(2-\lambda\mu)\alpha+\lambda(1-\mu)=0$, where $\lambda\in(0,1)$ and $\mu\in\C$ is an eigenvalue of $P_A$. It is worth mentioning that for each eigenvalue $\mu$ of $P_A$, there correspond two eigenvalues $\alpha$ of the $\mathbf{\Omega}$.

At this point, by the Perron-Frobenius Theorem (see \cite[Section 8.3]{Meyer}) applied to the irreducible and non-negative matrix $P_A$, there exists a unique positive vector $\mathbf{c}\in\mathbb{R}^{N\times 1}$ with $\mathbf{c}^\top\mathbf{e}=1$ (the left-hand Perron vector) associated to the eigenvalue $\rho(P_A)=1$, such that $\mathbf{c}^TP_A=\mathbf{c}^T$. For the special case of $\mu=1$, the dominant eigenvalue of the matrix $P_A$, the eigenvalues for the matrix $\mathbf{\Omega}$ are the solutions of $\alpha^2+(2-\lambda)\alpha=0$, namely $\alpha_1=0$ and $\alpha_2=\lambda-2$. Moreover, the eigenvalue $\alpha_1=0$ is simple. Now, for the remaining eigenvalues $\mu$ of $P_A$ with modulus strictly less than $1$, observe that the coefficients $(2-\lambda\mu)$ and $\lambda(1-\mu)$ have positive real parts. Hence, an application of a complex-coefficient extension of the Routh–Hurwitz stability criterion (see \cite[Appendix F]{Yamamoto}, for instance) to the quadratic equation $\alpha^2+(2-\lambda\mu)\alpha+\lambda(1-\mu)=0$ shows that the real parts of eigenvalues $\alpha$ are strictly negative.

Now, since all eigenvalues $\alpha$ of $\mathbf{\Omega}$ satisfy that its real part $\Re(\alpha) < 0$ (except for the eigenvalue $\alpha_1 = 0$, associated with the Perron eigenvalue of the matrix $P_A$), we can apply the ideas from \cite[Example 7.9.6]{Meyer} to obtain the asymptotic behavior
\begin{equation}\label{eq:asymptotic-exponential}
t^j e^{\,\alpha_i\, t} \xrightarrow[t \to \infty]{}
\begin{cases}
0 & \text{if } \Re(\alpha_i) < 0,\text{ with }\alpha_i\neq0\\
1 & \text{if } \alpha_i = 0 \text{ and } j = 0.
\end{cases}
\end{equation}
Therefore, substituting the asymptotic behavior form equation (\ref{eq:asymptotic-exponential}) into the solution of the unperturbed system given in equation (\ref{eq:solution-unperturbed}) we obtain
\begin{equation}\label{eq:asymptotic-solution}
\lim_{t\to\infty} \mathbf{Y}^\top(t)=\lim_{t\to\infty} \mathbf{Y}_0^\top e^{\,\mathbf{\Omega}\,t} =\lim_{t\to\infty} \, \sum_{i=1}^s\,\sum_{j=0}^{k_i-1}\,\frac{t^j\,e^{\,\alpha_i\,t}}{j!}\,\mathbf{v}_j^\top(\alpha_i)=\mathbf{Y}_0^\top \mathbf{v}_0^\top(\alpha_1=0)=\mathbf{Y}_0^\top\mathbf{G}_1,  
\end{equation}
where $\mathbf{G}_1$ is the spectral projector associated to the eigenvalue $\alpha_1=0$ of the matrix $\mathbf{\Omega}$. Observe that, since the eigenvalue $\alpha_1 = 0$ is simple, it follows from the result on spectral projectors associated with simple eigenvalues (see \cite[Chapter 7, Section 7.2]{Meyer}) that $\mathbf{G}_1=\mathbf{u}\mathbf{w}^\top/\mathbf{w}^\top\mathbf{u}$, where $\mathbf{u}$ and $\mathbf{w}^\top$ are the right-hand and the left-hand eigenvectors, respectively, associated to the simple eigenvalue $\alpha_1=0$ of the matrix $\mathbf{\Omega}$. Now, we proceed with the computation of these right and left-hand eigenvectors of the matrix $\mathbf{\Omega}$ associated to the eigenvalue $\alpha_1=0$.\\

Firstly, for the left-hand eigenvector of $\mathbf{\Omega}$ we look for a vector $\mathbf{w}\in\R^{2N\times1}$ with $\mathbf{w}^\top=(\enspace\mathbf{w}_1^\top\enspace|\enspace\mathbf{w}_2^\top\enspace)$ such that 
\begin{equation}\label{eq:left-hand-eigenvector}
\mathbf{w}^\top\,\mathbf{\Omega}= 0\,\mathbf{w}^\top,\  \text{ that is},\ (\enspace\mathbf{w}_1^\top\enspace|\enspace\mathbf{w}_2^\top\enspace)\begin{pmatrix}
\mathbf{0}_{N\times N} & \lambda(P_A-I_N)\\
I_{N} & \lambda P_A-2I_N
\end{pmatrix}=\,(\enspace\mathbf{0}_{N\times 1}^\top\enspace|\enspace\mathbf{0}_{N\times 1}^\top\enspace).
\end{equation}
From equation (\ref{eq:left-hand-eigenvector}) we obtain the following system
\begin{align}
\mathbf{w}_2^\top=\mathbf{0}_{N\times 1}^\top,\label{eq:left-hand-eigenvector-system1}\\
\mathbf{w}_1^\top\,\lambda(P_A-I_N)+\mathbf{w}_2^\top\,(\lambda P_A-2I_N)=\mathbf{0}_{N\times1}^\top.\label{eq:left-hand-eigenvector-system2}
\end{align}
Obviously, equation (\ref{eq:left-hand-eigenvector-system1}) implies that $\mathbf{w}_2^\top=\mathbf{0}_{N\times 1}^\top$. Now, substituting equation (\ref{eq:left-hand-eigenvector-system1}) in equation (\ref{eq:left-hand-eigenvector-system2}) we get that $\mathbf{w}_1^\top\,\lambda(P_A-I_N)=\mathbf{0}_{N\times1}^\top$ and, since $\lambda\in(0,1)$, it follows that $\mathbf{w}_1^\top$ must satisfy $\mathbf{w}_1^\top\,P_A=\mathbf{w}_1^\top$. This means that the vector $\mathbf{w}_1\in\R^{N\times1}$ is an left-hand eigenvector of $P_A$ associated to the eigenvalue $\mu=1$. By the existence and uniqueness of the left-hand Perron vector $\mathbf{c}$ for the matrix $P_A$, we conclude that  $\mathbf{w}_1^\top=\mathbf{c}^\top$. Therefore, $\mathbf{w}^\top=(\enspace\mathbf{w}_1^\top\enspace|\enspace\mathbf{w}_2^\top\enspace)=(\enspace\mathbf{c}^\top\enspace|\enspace\mathbf{0}_{N\times1}^\top\enspace)$.\\

Now, we proceed to compute the right-hand eigenvector for the matrix $\mathbf{\Omega}$ associated to the eigenvalue $\alpha_1=0$. To this aim, we look for a column vector $\mathbf{u}\in\R^{2N\times1}$ with  $\mathbf{u}^\top=(\enspace\mathbf{u}_1^\top\enspace|\enspace\mathbf{u}_2^\top\enspace)$ such that
\begin{equation}\label{eq:right-hand-eigenvector}
\mathbf{\Omega}\mathbf{u}= 0\,\mathbf{u}\quad \text{, that is,}\quad\begin{pmatrix}
\mathbf{0}_{N\times N} & \lambda(P_A-I_N)\\
I_{N} & \lambda P_A-2I_N
\end{pmatrix}\,\begin{pmatrix}
\mathbf{u}_1\\
\mathbf{u}_2
\end{pmatrix}=\begin{pmatrix}
\mathbf{0}_{N\times1}\\
\mathbf{0}_{N\times1}
\end{pmatrix}.
\end{equation}
From equation (\ref{eq:right-hand-eigenvector}) we obtain the following system
\begin{align}
\lambda(P_A-I_N)\mathbf{u}_2=\mathbf{0}_{N\times 1},\label{eq:right-hand-eigenvector-system1}\\
\mathbf{u}_1+(\lambda P_A-2I_N)\mathbf{u}_2=\mathbf{0}_{N\times1}.\label{eq:right-hand-eigenvector-system2}
\end{align}
From (\ref{eq:right-hand-eigenvector-system1}), since $\lambda\in(0,1)$, it follows that $P_A\mathbf{u}_2=\mathbf{u}_2$, which implies that $\mathbf{u}_2=\mathbf{e}$. Now, substituting the value for the vector $\mathbf{u}_2$ in equation (\ref{eq:right-hand-eigenvector-system2}) we conclude that $\mathbf{u}_1=(2I_N-\lambda P_A)\mathbf{u}_2=(\lambda-2)\mathbf{e}$. Therefore, $\mathbf{u}^\top=(\enspace\mathbf{u}_1^\top\enspace|\enspace\mathbf{u}_2^\top\enspace)=(\enspace (\lambda-2)\,\mathbf{e}^\top\enspace|\enspace\mathbf{e}^\top\enspace)$.\\

Hence, using the right-hand and left-hand eigenvectors previously computed, we show that the spectral projector $\mathbf{G}_1$ has the form
\begin{equation}\label{eq:spectral-projector}
\mathbf{G}_1=\frac{\mathbf{u}\mathbf{w}^\top}{\mathbf{w}^\top\mathbf{u}}=\frac{
\begin{pmatrix}
(\lambda-2)\,\mathbf{e}\\
\mathbf{e}
\end{pmatrix}
(\enspace\mathbf{c}^\top\enspace|\enspace\mathbf{0}_{N\times1}^\top)}{(\enspace\mathbf{c}^\top\enspace|\enspace\mathbf{0}_{N\times1}^\top)\begin{pmatrix}
(\lambda-2)\,\mathbf{e}\\
\mathbf{e}
\end{pmatrix}}
=\frac{1}{\lambda-2}\begin{pmatrix}
(\lambda-2)\,\mathbf{e}\mathbf{c}^\top & \mathbf{0}_{N\times1}^\top\\
\mathbf{e}\mathbf{c}^\top & \mathbf{0}_{N\times1}^\top
\end{pmatrix},
\end{equation}
where, in the denominator, we have used the equality $\mathbf{c}^\top\mathbf{e}=1$ . Therefore, using the expression of the spectral projector in equation (\ref{eq:asymptotic-solution}) and the fact that $P_A\,\mathbf{e}=\mathbf{e}$ we get
\begin{align*}
\lim_{t\to\infty} \mathbf{Y}^\top(t)=\mathbf{Y}_0^\top\mathbf{G}_1
&=\frac{1}{\lambda-2}\left(\enspace\mathbf{x}_0^\top\enspace|\enspace\mathbf{x}_0^\top\lambda(P_A-I_N)\enspace\right)\begin{pmatrix}
(\lambda-2)\,\mathbf{e}\mathbf{c}^\top & \mathbf{0}_{N\times1}^\top\\
\mathbf{e}\mathbf{c}^\top & \mathbf{0}_{N\times1}^\top
\end{pmatrix}\\
&=\frac{1}{\lambda-2}\left(\enspace\mathbf{x}_0^\top(\lambda-2)\,\mathbf{e}\mathbf{c}^\top+\mathbf{x}_0^\top\lambda(P_A-I_N)\,\mathbf{e}\mathbf{c}^\top\enspace|\enspace\mathbf{0}_{N\times1}^\top\enspace\right)\\
&=\frac{1}{\lambda-2}\left(\enspace(\lambda-2)\,\mathbf{x}_0^\top\mathbf{e}\mathbf{c}^\top+\mathbf{0}_{N\times1}^\top\enspace|\enspace\mathbf{0}_{N\times1}^\top\enspace\right)\\
&=\frac{1}{\lambda-2}\left(\enspace(\lambda-2)\,\mathbf{c}^\top\enspace|\enspace\mathbf{0}_{N\times1}^\top\enspace\right)=\left(\enspace\mathbf{c}^\top\enspace|\enspace\mathbf{0}_{n\times1}^\top\enspace\right).
\end{align*}
Finally, since $\mathbf{Y}^\top(t) = \big(\enspace\mathbf{y}_1^\top(t)\enspace|\enspace\mathbf{y}_2^\top(t) \enspace)$, with $\mathbf{y}_1^\top= \mathbf{x}^\top$ and $\mathbf{y}_2^\top = \dot{\mathbf{x}}^\top$, for $t\geq0$, it follows that $\displaystyle\lim_{t\to\infty}\mathbf{x}^\top(t)=\mathbf{c}^\top$ where $\mathbf{c}$ is the left-hand Perron vector for the matrix $P_A$. This completes the proof.

\end{proof}

\subsection[]{The oscillatory memory function $\omega(t)=e^{at}\cos^2(bt)$, for $t\geq0$, with parameters $a,b>0$.}\label{subsect:asymptotic-oscillatory}

Finally, in this subsection we consider as the weight function $\omega:[0,\infty)\to[0,\infty)$ an exponential oscillatory  function of the form $\omega(t)=e^{at}\cos^2(bt)$, for $t\geq0$, where $a,b\in\R$ are positive parameters. Notice that this memory function $\omega$ exhibits two principal features:
\begin{itemize}
    \item In contrast to the previous example, the exponential factor $e^{at}$ with $a>0$, modulates the weighting of past states, assigning increasing (or decreasing, depending on the time-reversal convention in the) influence across the memory horizon.
    \item The trigonometric factor $\cos^2(bt)$ induces a periodic modulation of the memory weight. Moreover, whenever $\cos^2(bt)=0$, the memory contribution vanishes (no memory effect for these time instants), leading to what may be interpreted as intermittent ``memory gaps” in the process.
\end{itemize}

In this case the IVP described in equation (\ref{eq:IVP-asymptotic}) turns into
\begin{align}\label{eq:oscillatoy-case-numerical}
\dot{\mathbf{x}}^\top(t)=\mathbf{x}^\top(t)(\lambda P_A-I_N)+(1-\lambda)\left[\frac{1}{\displaystyle\int_0^te^{au}\cos^2(bu)\,du}\int_0^t e^{-a(t-u)}\cos^2(\,b(t-u)\,)\,\mathbf{x}^\top (u)\,du\right],\quad \mathbf{x}(0)=\mathbf{x}_0,
\end{align}
where $P_A$, $\lambda\in(0,1)$  and the initial condition $\mathbf{x}_0\in\mathbb{R}^{N\times1}$ are as above. \\


The identity $\cos^2(bu)=(\,1+\cos(2bu)\,)/2$, allows for the integral of the memory function $\omega(t)=e^{at}\cos^2(bt)$ to be expressed in a closed form as
\begin{equation}\label{eq:closed-formula-oscillatoy}
\int_0^t\omega(u)\,du=\int_0^te^{au}\cos^2(bu)\,du= \frac{e^{at}}{2} \left[ \frac{1}{a} + \frac{a\cos(2bt) + 2b\sin(2bt)}{a^2 + 4b^2} \right] - \frac{1}{2}\left( \frac{1}{a} + \frac{a}{a^2 + 4b^2} \right). 
\end{equation}
This formula will be very useful in the proof of Theorem \ref{thm:oscillatory-case}, where we study the asymptotic behavior of the solution for the IVP described in equation (\ref{eq:oscillatoy-case-numerical}).\\

Similarly to the previous subsection, we perform some numerical simulations to examine the asymptotic behavior of the solution for the IVP in equation (\ref{eq:oscillatoy-case-numerical}) when we consider the exponential oscillatory memory function $\omega(t)=e^{at}\cos^2(bt)$, for $t\geq0$. For numerical computations, we take $a=1$ and $b=1$.\\

For the following numerical simulations, we consider the damping factor $\lambda = 0.5$ and a time horizon of $T = 100$ units. This choice enables us to examine the long-term behavior of the system and to highlight the influence of the initial condition $\mathbf{x}_0$ on the resulting dynamics. As in the previous subsections, we begin by considering the strongly connected directed 3-cycle graph shown in Figure \ref{fig:numerical-oscillatory1}(a). To illustrate the behavior of the components of the solution $\mathbf{x}(t)=(\mathbf{x}_1(t),\mathbf{x}_2(t),\mathbf{x}_3(t))$ of the IVP in equation \eqref{eq:exponential-case}, we compute the corresponding numerical solution. In Figure \ref{fig:numerical-oscillatory1}(b) we consider the the initial condition $\mathbf{x}_0^\top=(0.25,0.5,0.25)$, while in Figure \ref{fig:numerical-expo1}(c) we take $\mathbf{x}_0^\top=(0.2,0.1,0.7)$.
\begin{figure}[H]
        \centering
        \includegraphics[width=1\linewidth]{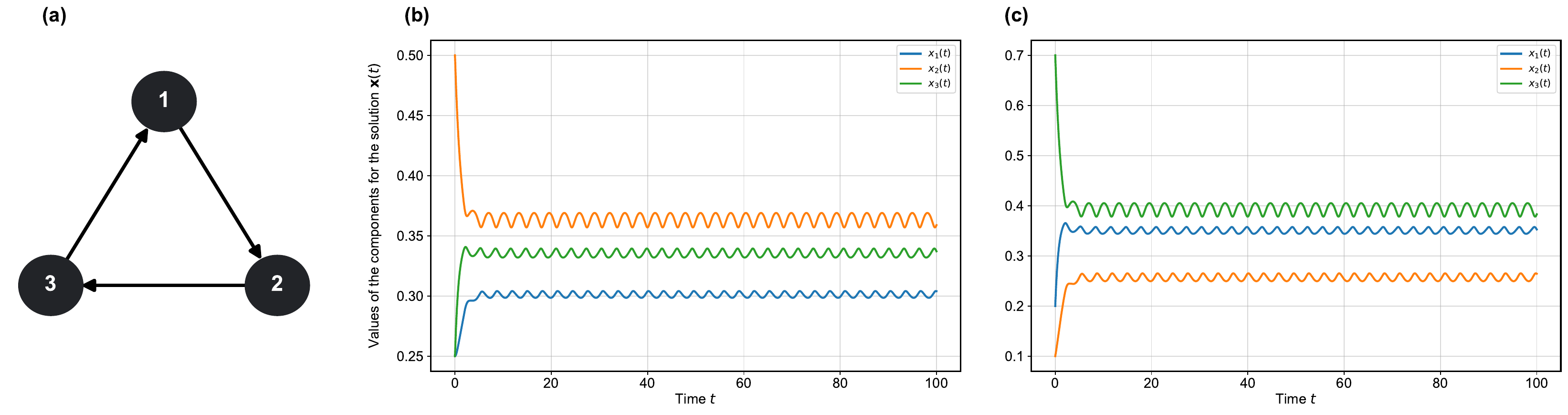}
        \caption{A numerical solution for the IVP in equation (\ref{eq:oscillatoy-case-numerical}) for the 3-cycle digraph and $\omega(t)=e^{t}\cos^2(t)$, for $t\geq0$.}
        \label{fig:numerical-oscillatory1}
    \end{figure}

Now, for the strongly connected digraph which has bidirectional edges between nodes $1$-$2$ and $2$–$3$ shown in Figure \ref{fig:numerical-oscillatory2}(a), we compute the numerical solution of the IVP in equation (\ref{eq:oscillatoy-case-numerical}). In Figure \ref{fig:numerical-oscillatory2}(b) we consider the initial condition $\mathbf{x}_0^\top=(0.3,0.2,0.5)$, while in Figure \ref{fig:numerical-oscillatory2}(c) we consider $\mathbf{x}_0^\top=(0.2,0.1,0.7)$.
\begin{figure}[H]
        \centering
        \includegraphics[width=1\linewidth]{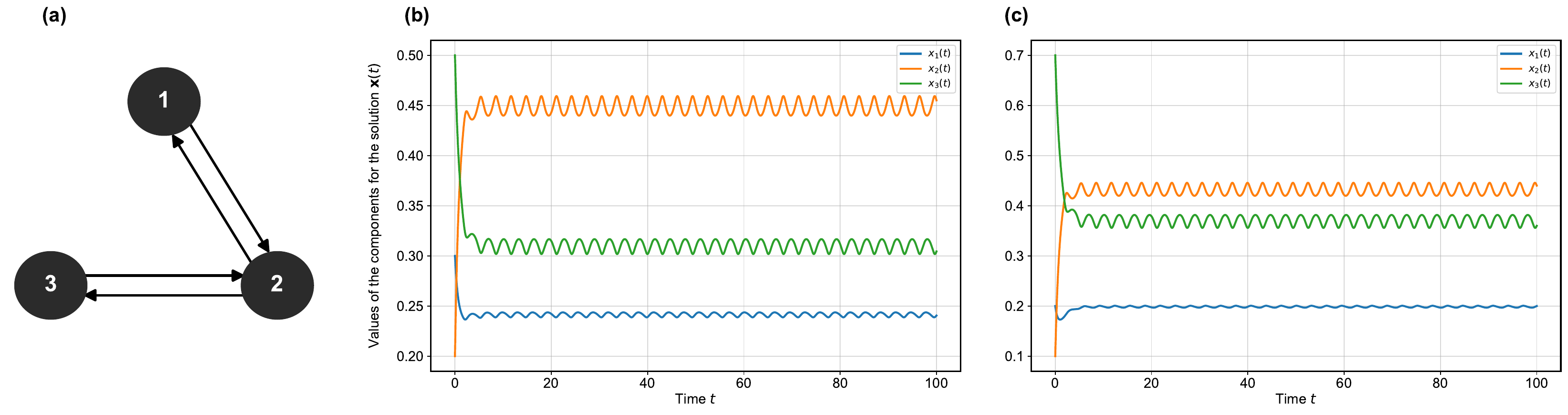}
        \caption{A numerical solution for the IVP in equation (\ref{eq:oscillatoy-case-numerical}) for the bidirectional digraph and $\omega(t)=e^{t}\cos^2(t)$, for $t\geq0$.}
        \label{fig:numerical-oscillatory2}
    \end{figure}

Additionally, for the non-strongly connected digraph in Figure \ref{fig:numerical-oscillatory3}(a), we report this oscillatory behavior by computing the numerical solution of the IVP in equation (\ref{eq:oscillatoy-case-numerical}). In Figure \ref{fig:numerical-oscillatory3}(b) we consider the initial condition $\mathbf{x}_0^\top=(0.5,0.3,0.2)$, while in Figure \ref{fig:numerical-oscillatory3}(c) we consider $\mathbf{x}_0^\top=(0.2,0.1,0.7)$.
\begin{figure}[H]
        \centering
        \includegraphics[width=1\linewidth]{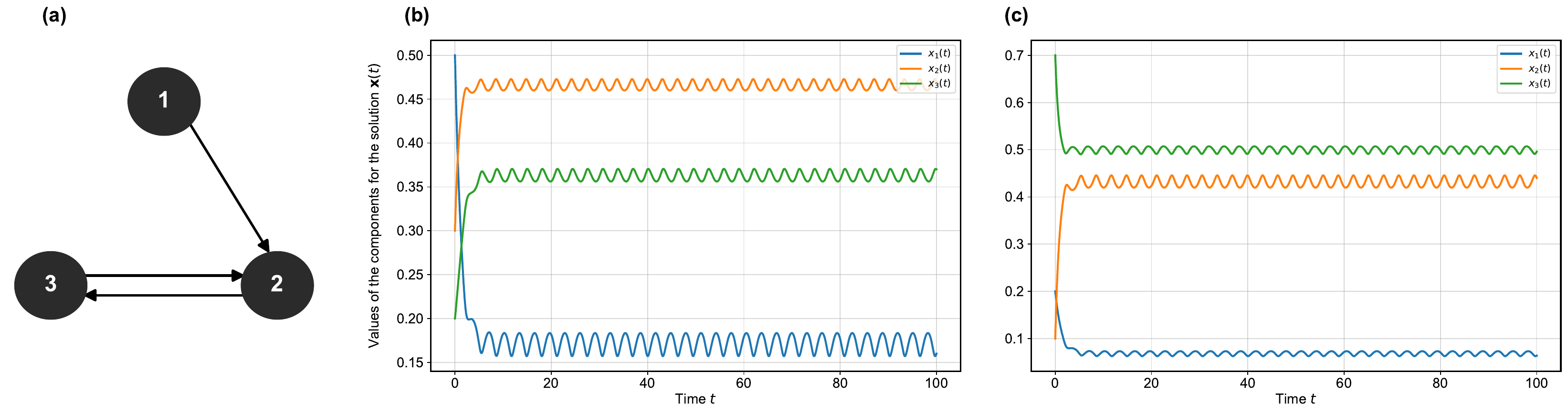}
        \caption{A numerical solution for the IVP in equation (\ref{eq:oscillatoy-case-numerical}) for a non-strongly connected digraph and $\omega(t)=e^{t}\cos^2(t)$, for $t\geq0$.}
        \label{fig:numerical-oscillatory3}
    \end{figure}
    
A straightforward observation is that, in all cases treated above, the components of the solution $\mathbf{x}(t)$ are asymptotically oscillatory and exhibit a strong dependence on the initial condition $\mathbf{x}_0 \in \mathbb{R}^{N \times 1}$. Moreover, in light of these numerical simulations, we conjecture that these oscillations are $\pi$-periodic. This statement is established in Theorem~\ref{thm:oscillatory-case}. Notably, unlike the previous subsection, 
certain components of the solution $\mathbf{x}(t)$ oscillate about values that differ from the corresponding components of the left Perron vector of the row-normalized matrix $P_A$.\\

Analogously to the exponential case in Subsection \ref{subsect:asymptotic-exponential}, the IVP in equation (\ref{eq:oscillatoy-case-numerical}) does not admit an explicit closed-form solution in the case of the exponential oscillatory memory function. In spite of this limitation  we still can study the asymptotic behavior of the solution using the results presented in \cite{Chicone}. In particular, we show the existence of asymptotically periodic solutions when the memory function $\omega(t)=e^{at}\cos^2(bt)$, for $t\geq0$, $(a,b>0)$ is considered. For the sake of completeness in the notation, we include the following result.\\

\begin{proposition}{\cite[Theorem 2.121, Section 2.4.3, Chapter 2]{Chicone}}\label{prop:Chicone-periodic}
    Let $\mathbf{A}(t)$ be a $T$-periodic matrix-valued function and let $\mathbf{b}(t)$ be a $T$-periodic row-matrix-valued function. If the differential equation
    \begin{equation*}
        \dot{\mathbf{x}}^\top(t)=\mathbf{x}^\top(t)\mathbf{A}(t)+\mathbf{b}(t)
    \end{equation*}
    has a bounded solution, then it has a $T$-periodic solution.
\end{proposition}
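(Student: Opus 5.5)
The plan is to reduce the question to a fixed-point problem for the period (Poincaré) map, which is affine in this linear setting, and then use a Fredholm-alternative argument. Boundedness of one solution will rule out the case in which no fixed point exists.

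\emph{Step 1: the period map.} Let $\Phi(t)$ be the principal fundamental matrix of the homogeneous equation in row form, $\dot{\Phi}(t)=\Phi(t)\mathbf{A}(t)$ with $\Phi(0)=I$. The solution with $\mathbf{x}(0)=\xi$ is then given by variation of parameters:
$$\mathbf{x}^\top(t)=\xi^\top\Phi(t)+\int_0^t \mathbf{b}(s)\Phi(s)^{-1}\Phi(t)\,ds .$$
Set $M=\Phi(T)$ (the monodromy matrix) and $\eta^\top=\int_0^T \mathbf{b}(s)\Phi(s)^{-1}\Phi(T)\,ds$, and define the affine map $P(\xi^\top)=\xi^\top M+\eta^\top$.

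\emph{Step 2: iterates and periodic solutions.} Since $\mathbf{A}$ and $\mathbf{b}$ are $T$-periodic, $t\mapsto \mathbf{x}(t+T)$ solves the same equation. By uniqueness it is the solution with initial value $\mathbf{x}(T)$. Consequently $\mathbf{x}^\top((k+1)T)=P(\mathbf{x}^\top(kT))$, so $\mathbf{x}^\top(kT)=P^k(\xi^\top)$. The same observation shows that the solution starting at $\xi$ is $T$-periodic if and only if $P(\xi^\top)=\xi^\top$, that is, $\xi^\top(I-M)=\eta^\top$. Hence it suffices to show that this linear system is solvable.

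\emph{Step 3: contradiction from boundedness.} Suppose the system has no solution. By the Fredholm alternative, the row space $\{\xi^\top(I-M)\}$ is the annihilator of $\ker(I-M)$ acting on column vectors. So there is a column vector $\mathbf{v}$ with $M\mathbf{v}=\mathbf{v}$ and $\eta^\top\mathbf{v}\neq0$.

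Let $\mathbf{x}$ be the given bounded solution, with $\xi=\mathbf{x}(0)$. Iterating and using $M^j\mathbf{v}=\mathbf{v}$,
$$\mathbf{x}^\top(kT)\mathbf{v}=\xi^\top M^k\mathbf{v}+\sum_{j=0}^{k-1}\eta^\top M^j\mathbf{v}=\xi^\top\mathbf{v}+k\,\eta^\top\mathbf{v},$$
which is unbounded as $k\to\infty$. This contradicts the boundedness of $\mathbf{x}$. Therefore a fixed point $\xi_*$ exists, and the solution starting at $\xi_*$ is the required $T$-periodic solution.

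The main point needing care is Step 3. One must state the Fredholm alternative correctly in the row-vector convention used in the paper, namely that the left range of $I-M$ is orthogonal to its right kernel. One must also note that the iterate formula for $P^k$ really does collapse to a term linear in $k$ when paired with $\mathbf{v}$. The remaining ingredients, namely existence and uniqueness for linear ODEs with continuous coefficients and the variation-of-parameters formula, are standard.
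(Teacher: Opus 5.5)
Your proof is correct. The paper gives no argument of its own for this proposition; it only cites Chicone, and your route is the classical proof of this linear Massera-type result: the affine period map $\xi^\top\mapsto\xi^\top M+\eta^\top$, then the Fredholm alternative producing $\mathbf{v}$ with $M\mathbf{v}=\mathbf{v}$ and $\eta^\top\mathbf{v}\neq0$, then the linear growth $\mathbf{x}^\top(kT)\mathbf{v}=\xi^\top\mathbf{v}+k\,\eta^\top\mathbf{v}$. Two points are worth making explicit: you tacitly need continuity (or local integrability) of $\mathbf{A}$ and $\mathbf{b}$, which the paper's statement also omits, and you only use boundedness on $[0,\infty)$, which is the form relevant to the paper's application for $t\geq0$.
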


With this result at hand, we are in position to state the results on the existence of asymptotically $T$-periodic solutions for the IVP in equation \ref{eq:IVP-asymptotic} when we consider the oscillatory memory function mentioned above.\\

\begin{thm}\label{thm:oscillatory-case}
Let $A$ be a non-negative (not necessarily irreducible) square matrix of order $N$ of a directed graph $\mathcal{G}$ and let $P_A$ be its row-normalization described in Section \ref{Section:Notation}. Let us consider $a,b\in\R$ as positive parameters. Let $\omega(t)=e^{at}\cos^2(bt)$, for $t\geq0$, be the non-negative and local-integrable weight function.  Then, for any personalization vector 
$\mathbf{x}_0\in\R^{N\times1}$ with $\mathbf{x}_0\geq0$ and $\mathbf{x}_0^\top\mathbf{e}=1$ and for any $\lambda\in(0,1)$, the solution $\mathbf{x}$ of the Initial Value Problem (IVP) given by
\begin{equation}\label{eq:IVP-oscillatory}
\dot{\mathbf{x}}^\top(t)=\mathbf{x}^\top (t)(\lambda P_A-I_N)+(1-\lambda)\left[\frac{1}{\displaystyle\int_0^t \omega(u)\,du}\int_0^t \omega(t-u)\,\mathbf{x}^\top(u)\,du\right],\quad \mathbf{x}(0)=\mathbf{x}_0,
\end{equation}
is asymptotically periodic of period $\displaystyle\frac{\pi}{b}$.
\end{thm}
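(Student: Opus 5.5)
The approach is to show that, after dividing by the growing factor $e^{at}$, the memory term splits into an exactly $\pi/b$-periodic part plus an exponentially small remainder. The IVP then becomes a constant-coefficient linear ODE with asymptotically periodic forcing, and a stable linear part carries the periodicity over to the solution. Two facts are used throughout: by Theorem \ref{thm:existence-uniqueness} the solution $\mathbf{x}$ exists on $[0,\infty)$, and by Theorems \ref{thm:positivity} and \ref{thm:identity-to-one} it satisfies $\mathbf{x}(t)\ge 0$ and $\mathbf{x}^\top(t)\mathbf{e}=1$. In particular $\Vert\mathbf{x}^\top(u)\Vert_1=1$ for all $u\ge0$, so $\mathbf{x}$ is bounded; this is the only property of $\mathbf{x}$ needed to control the memory term. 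Irreducibility is never used, which is consistent with the statement.

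\emph{Step 1 (denominator).} Write $\int_0^t\omega(u)\,du=\tfrac{e^{at}}{2}\,D(t)-c_0$ using (\ref{eq:closed-formula-oscillatoy}), where
\[
D(t)=\tfrac1a+\tfrac{a\cos(2bt)+2b\sin(2bt)}{a^2+4b^2}
\]
and $c_0>0$ is a constant. The function $D$ is $\pi/b$-periodic, and
\[
D(t)\ge \tfrac1a-\tfrac{1}{\sqrt{a^2+4b^2}}>0
\]
uniformly in $t$. Hence $e^{-at}\int_0^t\omega=\tfrac12 D(t)+O(e^{-at})$, and its reciprocal is $2/D(t)+O(e^{-at})$.

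\emph{Step 2 (numerator).} Since $\omega(t-u)=e^{at}e^{-au}\cos^2(b(t-u))$, we have
\[
e^{-at}\int_0^t\omega(t-u)\mathbf{x}^\top(u)\,du=\int_0^t e^{-au}\cos^2(b(t-u))\,\mathbf{x}^\top(u)\,du.
\]
Extend the upper limit to $\infty$. By boundedness of $\mathbf{x}$, the error has $1$-norm at most $\int_t^\infty e^{-au}\,du=e^{-at}/a$. Expanding $\cos^2(b(t-u))=\tfrac12\bigl[1+\cos(2bt)\cos(2bu)+\sin(2bt)\sin(2bu)\bigr]$ gives
\[
\int_0^\infty e^{-au}\cos^2(b(t-u))\,\mathbf{x}^\top(u)\,du=\tfrac12\bigl[\mathbf{K}_0^\top+\cos(2bt)\mathbf{K}_c^\top+\sin(2bt)\mathbf{K}_s^\top\bigr].
\]
Here $\mathbf{K}_0,\mathbf{K}_c,\mathbf{K}_s$ are constant vectors, namely Laplace-type transforms of the (unknown but fixed) trajectory; they are where the dependence on $\mathbf{x}_0$ and $\lambda$ enters. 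Combining with Step 1, the memory term equals $\mathbf{p}^\top(t)+\mathbf{r}^\top(t)$, where
\[
\mathbf{p}^\top(t)=\bigl[\mathbf{K}_0^\top+\cos(2bt)\mathbf{K}_c^\top+\sin(2bt)\mathbf{K}_s^\top\bigr]/D(t)
\]
is continuous and $\pi/b$-periodic, and $\Vert\mathbf{r}^\top(t)\Vert_1=O(e^{-at})$.

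\emph{Step 3 (linear ODE).} Now $\mathbf{x}$ solves $\dot{\mathbf{x}}^\top=\mathbf{x}^\top\mathbf{\Omega}+(1-\lambda)(\mathbf{p}^\top+\mathbf{r}^\top)$ with $\mathbf{\Omega}=\lambda P_A-I_N$. The proof of Theorem \ref{thm:existence-uniqueness} already gives $\vertiii{e^{\mathbf{\Omega}s}}_\infty\le e^{-(1-\lambda)s}$. Define
\[
\mathbf{x}_p^\top(t)=(1-\lambda)\int_{-\infty}^t\mathbf{p}^\top(s)\,e^{\mathbf{\Omega}(t-s)}\,ds,
\]
extending $\mathbf{p}$ periodically to negative times. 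This is well defined, $\pi/b$-periodic by a change of variables, and solves $\dot{\mathbf{x}}_p^\top=\mathbf{x}_p^\top\mathbf{\Omega}+(1-\lambda)\mathbf{p}^\top$. One could alternatively invoke Proposition \ref{prop:Chicone-periodic}, since a bounded solution exists. Then $\mathbf{z}=\mathbf{x}-\mathbf{x}_p$ satisfies $\dot{\mathbf{z}}^\top=\mathbf{z}^\top\mathbf{\Omega}+(1-\lambda)\mathbf{r}^\top$. Variation of parameters gives
\[
\Vert\mathbf{z}^\top(t)\Vert_1\le e^{-(1-\lambda)t}\Vert\mathbf{z}^\top(0)\Vert_1+(1-\lambda)\int_0^t e^{-(1-\lambda)(t-s)}\Vert\mathbf{r}^\top(s)\Vert_1\,ds\to0.
\]
So $\mathbf{x}$ is asymptotically $\pi/b$-periodic, as claimed.

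\emph{Main obstacle.} The delicate point is Step 2. The limiting forcing $\mathbf{p}$ is defined implicitly through the whole trajectory, so we must make sure this is not circular. It is not: $\mathbf{x}$ is already known to exist globally and to be bounded, and the constants $\mathbf{K}$ are simply computed from it. We also need uniform positivity of $D(t)$ so that the division preserves the $O(e^{-at})$ error. A final remark worth adding is that $\mathbf{x}_p$ is a genuine probability-vector-valued periodic function (as a limit of $\mathbf{x}$). It is non-constant in general because $\mathbf{K}_c,\mathbf{K}_s$ need not vanish. This explains the observed departure from the Perron vector, though proving non-constancy is not required by the statement.
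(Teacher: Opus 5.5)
Your argument is correct, and Steps 1--2 coincide with the paper's. The paper also expands $\cos^2(b(t-u))$ and introduces $\mathbf{y}_1,\mathbf{y}_2,\mathbf{y}_3$; your $\mathbf{K}_0,\mathbf{K}_c,\mathbf{K}_s$ are their limits $\mathbf{z}_1,\mathbf{z}_2,\mathbf{z}_3$. Like you, it uses $\Vert\mathbf{x}^\top(u)\Vert_1=1$ to get a $\pi/b$-periodic limiting forcing $\mathbf{b}_\infty$.

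The real difference is Step 3. The paper passes to the limiting system $\dot{\mathbf{x}}^\top=\mathbf{x}^\top(\lambda P_A-I_N)+(1-\lambda)\mathbf{b}_\infty^\top(t)$. It then invokes Proposition \ref{prop:Chicone-periodic} to get a $\pi/b$-periodic solution of that system, and stops. It does not check that proposition's bounded-solution hypothesis, nor that the denominator $D(t)$ stays away from zero. Most importantly, it never shows that the actual trajectory $\mathbf{x}(t)$ approaches the periodic solution; the sentence ``the original system approaches the limiting system'' is not justified.

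You instead construct the periodic solution explicitly, $\mathbf{x}_p^\top(t)=(1-\lambda)\int_{-\infty}^t\mathbf{p}^\top(s)e^{\mathbf{\Omega}(t-s)}\,ds$. This relies on the bound $\vertiii{e^{\mathbf{\Omega}s}}_\infty\le e^{-(1-\lambda)s}$ already proved in the uniqueness argument. That one estimate gives existence, boundedness and uniqueness of the periodic solution without any external theorem. Through variation of parameters for $\mathbf{x}-\mathbf{x}_p$, it also gives $\mathbf{x}(t)-\mathbf{x}_p(t)\to0$ at an exponential rate. Together with your lower bound $D(t)\ge\frac1a-\frac{1}{\sqrt{a^2+4b^2}}>0$, this actually proves ``asymptotically periodic.'' The paper's version is shorter but only shows that the limiting equation has a periodic solution.

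One detail to state explicitly: your $O(e^{-at})$ bound on $\mathbf{r}$ holds only for large $t$, since $\int_0^t\omega$ vanishes at $t=0$. For the convolution estimate you should also note that $\mathbf{r}$ is bounded on $(0,\infty)$. This holds because the memory term is a weighted average of $\mathbf{x}$, so its $1$-norm is at most $1$.
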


\begin{proof}
In this case, for the weight function $\omega(t)=e^{at}\cos^2(bt)$ for $t\geq0$, we have
\begin{equation*}
\int_0^t \omega(u)\,du=\int_0^t e^{au}\cos^2(bu)\,du= \frac{e^{at}}{2} \left[ \frac{1}{a} + \frac{a\cos(2bt) + 2b\sin(2bt)}{a^2 + 4b^2} \right] - \frac{1}{2}\left( \frac{1}{a} + \frac{a}{a^2 + 4b^2} \right),\quad (t\geq0),
\end{equation*}
where we have used the identity $\cos^2(bu)=(1+\cos(2bu))/2$. On the other hand, using the identity
\begin{equation*}
\cos^2(\,b(t-u)\,) = \frac{1}{2}\left[1+\cos(\,2b(t-u)\,)\right]=\frac{1}{2}\left[1+\cos(2bt)\cos(2bu) + \sin(2bt)\sin(2bu)\right],
\end{equation*}
we have
\begin{align*}\int_0^t \omega(t-u)\,\mathbf{x}^\top (u)\,du
&=\int_0^t e^{a(t-u)}\cos^2(\,b(t-u)\,)\mathbf{x}^\top(u)\,du\\
&= \frac{e^{at}}{2} \int_0^t e^{-au}\left[1+\cos(2bt)\cos(2bu) + \sin(2bt)\sin(2bu)\right]\,\mathbf{x}^\top(u)\,du\\
&= \frac{e^{at}}{2} \left[ \int_0^t e^{-au}\,\mathbf{x}^\top(u)\,du + \cos(2bt)\int_0^t e^{-au}\cos(2b u)\,\mathbf{x}^\top(u)\,du + \sin(2bt)\int_0^t e^{-au}\sin(2bu)\,\mathbf{x}^\top(u)\,du \right].
\end{align*}
At this point, we focus on the expressions depending on the parameters $a,b>0$
\begin{equation}\label{eq:improper-integrals}
 \mathbf{y}_1^\top(t) = \int_0^t e^{-au}\,\mathbf{x}^\top(u)\,du\quad,\quad \mathbf{y}_2^\top(t) = \int_0^t e^{-au}\cos(2bu)\,\mathbf{x}^\top(u)\,du\quad , \quad \mathbf{y}_3^\top(t) = \int_0^t e^{-au}\sin(2bu)\,\mathbf{x}^\top(u)\,du.   
\end{equation}
With these integral expression at hand, the IVP in equation (\ref{eq:IVP-oscillatory}) is rewritten as 

\begin{equation}\label{eq:asymptotic-periodic}
\dot{\mathbf{x}}^\top(t)=\mathbf{x}^\top(t)(\lambda P_A-I_N)+(1-\lambda)\,\mathbf{b}^\top(\,t,\mathbf{y}_1^\top(t),\mathbf{y}_2^\top(t),\mathbf{y}_3^\top(t)\,),\quad (t\geq0),
\end{equation}
where, after simplification of the factor $e^{at}/2$, we get
\begin{equation}\label{eq:almost-periodic}
\mathbf{b}^\top(\,t,\mathbf{y}_1^\top(t),\mathbf{y}_2^\top(t),\mathbf{y}_3^\top(t)\,)=\frac{ \mathbf{y}_1^\top(t) +\cos(2bt)\,\mathbf{y}_2^\top(t) +\sin(2bt)\,\mathbf{y}_3^\top(t) }{\displaystyle \left[ \frac{1}{a} + \frac{a\cos(2bt) + 2b\sin(2bt)}{a^2 + 4b^2} \right] - e^{-at}\left( \frac{1}{a} + \frac{a}{a^2 + 4b^2} \right) },\quad (t\geq0).
\end{equation}
Now, we study the asymptotic behavior of the equation (\ref{eq:asymptotic-periodic}). Firstly, as an application of Theorem \ref{thm:positivity} and Theorem \ref{thm:identity-to-one}, the solution $\mathbf{x}(t)$ of the IVP in equation (\ref{eq:IVP-oscillatory}) is a probability vector, that is, $\mathbf{x}(t)\geq0$ and $\mathbf{x}^\top(t)\,\mathbf{e}=1$ for all $t\geq0$. Moreover, for the vectors $\mathbf{y}_1^\top(t)$, $ \mathbf{y}_2^\top(t)$ and $\mathbf{y}_3^\top(t)$ in equation (\ref{eq:improper-integrals}) we have the boundedness
\begin{align*}\label{eq:limit-vector}
\Vert\,\mathbf{y}_1^\top(t)\,\Vert_1
&=\left\Vert \int_0^t e^{-au}\,\mathbf{x}^\top(u)\,du\, \right\Vert_1\leq \int_0^t e^{-au}\,\Vert\, \mathbf{x}^\top(u)\,\Vert_1\,du\leq \int_0^t e^{-au}\,du=\frac{1}{a}(1-e^{-at}),\\
\Vert\, \mathbf{y}_2^\top(t)\,\Vert_1
&=\left\Vert \int_0^t e^{-au}\,\cos(2bu)\,\mathbf{x}^\top(u)\,du \,\right\Vert_1\leq \int_0^t e^{-au}\,\vert \cos(2bu)\vert\,\Vert\, \mathbf{x}^\top(u)\,\Vert_1\,du\leq \int_0^t e^{-au}\,du=\frac{1}{a}(1-e^{-at}),\\
\Vert\, \mathbf{y}_3^\top(t)\,\Vert_1
&=\left\Vert\, \int_0^t e^{-au}\,\sin(2bu)\,\mathbf{x}^\top(u)\,du \,\right\Vert_1\leq \int_0^t e^{-au}\,\vert \sin(2bu)\vert\,\Vert\,\mathbf{x}^\top(u)\,\Vert_1\,du\leq\int_0^t e^{-au}\,du=\frac{1}{a}(1-e^{-at}).
\end{align*}
Therefore, since the absolute convergence of the improper integral implies the conditional convergence of the integral in the finite-dimensional space $\R^{N\times1}$, there exist constants vector $\mathbf{z}_1,\,\mathbf{z}_2,\,\mathbf{z}_3\in \R^{N\times1}$ (which depend on the parameters $a,b>0$) such that
\begin{align}
\lim_{t\to\infty} \mathbf{y}_1^\top(t) &= \int_0^\infty e^{-au}\,\mathbf{x}^\top(u)\,du=\mathbf{z}^\top_1,\nonumber\\
\lim_{t\to\infty} \mathbf{y}_2^\top(t) &= \int_0^\infty e^{-au}\cos(2bu)\,\mathbf{x}^\top(u)\,du=\mathbf{z}^\top_2,\nonumber\\
\lim_{t\to\infty}\mathbf{y}_3^\top(t) &= \int_0^\infty e^{-au}\sin(2bu)\,\mathbf{x}^\top(u)\,du=\mathbf{z}^\top_3.   
\end{align}
Now observe that, in the denominator of equation (\ref{eq:almost-periodic}), since $a>0$ we have $\displaystyle\lim_{t\to\infty}-e^{-at}\left( \frac{1}{a} + \frac{a}{a^2 + 4b^2} \right)=0$. Therefore, letting $t\to\infty$ and using the limit vectors from equation (\ref{eq:limit-vector}), we conclude that the function in equation (\ref{eq:almost-periodic}) converges a the continuous $\pi$-periodic function, denoted by$\mathbf{b}_{\infty}^\top(t)$, defined as
\begin{equation}\label{eq:pi-periodic-function}
\mathbf{b}_{\infty}^\top(t)=\frac{ \mathbf{z}_1^\top + \cos(2bt)\,\mathbf{z}_2^\top + \sin(2bt)\,\mathbf{z}_3^\top }{\displaystyle \frac{1}{a} + \frac{a\cos(2bt) + 2b\sin(2bt)}{a^2 + 4b^2} }.
\end{equation}  
Therefore, the original system in equation (\ref{eq:asymptotic-periodic}) approaches the limiting non-autonomous linear system
\begin{equation}\label{eq:time-periodic-system}
\dot{\mathbf{x}}^\top(t)=\mathbf{x}^\top(t)(\lambda P_A-I_N)+(1-\lambda)\,\mathbf{b}_{\infty}^\top(t),\quad (t\geq0).
\end{equation}
Observe that, since both the numerator and the denominator in equation (\ref{eq:pi-periodic-function}) are $\displaystyle\frac{\pi}{b}$-periodic matrix functions, the matrix function $\mathbf{b}_{\infty}^\top$ is also $\displaystyle\frac{\pi}{b}$-periodic.  Finally, since $(\lambda P_A-I_N)$ and $\mathbf{b}_{\infty}^\top(t)$ are $\displaystyle\frac{\pi}{b}$-periodic matrix functions, as an application of Proposition \ref{prop:Chicone-periodic}, the time-periodic system defined by equation (\ref{eq:time-periodic-system}) has a $\displaystyle\frac{\pi}{b}$-periodic solution.
\end{proof}

\section*{Acknowledgement}

This work has been supported by project M4066 (URJC Grant).

\end{document}